\DeclareUrlCommand\ULurl@@{%
  \def\UrlLeft{\bgroup}%
  \def\UrlRight{\egroup}}
\def\ULurl@#1{\hyper@linkurl{\ULurl@@{#1}}{#1}}
\DeclareRobustCommand*\ULurl{\hyper@normalise\ULurl@}
\newcommand{\distas}[1]{\mathbin{\overset{#1}{\kern\z@\sim}}}%
\newcommand{\bm}[1]{\mathbf{#1}}
\newsavebox{\mybox}\newsavebox{\mysim}
\newcommand{\distras}[1]{%
  \savebox{\mybox}{\hbox{\kern3pt$\scriptstyle#1$\kern3pt}}%
  \savebox{\mysim}{\hbox{$\sim$}}%
  \mathbin{\overset{#1}{\kern\z@\resizebox{\wd\mybox}{\ht\mysim}{$\sim$}}}%
}
\newtheorem{theorem}{Theorem}
\newtheorem{definition}{Definition}
\newtheorem{lemma}[theorem]{Lemma}
\newcolumntype{C}[1]{>{\centering\let\newline\\\arraybackslash\hspace{0pt}}m{#1}}
\newcommand{\be}{\begin{equation}}
\newcommand{\ee}{\end{equation}}
\newcommand{\bi}{\begin{itemize}}
\newcommand{\ei}{\end{itemize}}
\newcommand{\ben}{\begin{enumerate}}
\newcommand{\een}{\end{enumerate}}
\newcommand{\cbl}[1]{{\color{black}{#1}}}
\DeclareMathOperator*{\argmax}{\arg\!\max}
\let\oldbibliography\thebibliography
\renewcommand{\thebibliography}[1]{\oldbibliography{#1}\setlength{\itemsep}{0pt}}
\begin{document}
\title{Maximum entropy low-rank matrix recovery}
\author{Simon~Mak, ~
        Yao~Xie% <-this % stops a space
%\thanks{M. Shell was with the Department
%of Electrical and Computer Engineering, Georgia Institute of Technology, Atlanta,
%GA, 30332 USA e-mail: (see http://www.michaelshell.org/contact.html).}% <-this % stops a space
\thanks{Simon Mak (e-mail: \ULurl{smak6@gatech.edu}) and Yao Xie (e-mail: \ULurl{yao.xie@isye.gatech.edu}) are with the H. Milton Stewart School of Industrial and Systems Engineering, Georgia Institute of Technology, Atlanta, GA.}% <-this % stops a space
\thanks{This work was partially supported by NSF grants CCF-1442635, CMMI-1538746, and an NSF CAREER
Award CCF-1650913.}
%\thanks{Manuscript received April 19, 2005; revised August 26, 2015.}}
}

% The paper headers
%\markboth{Journal of \LaTeX\ Class Files,~Vol.~14, No.~8, August~2015}%
%{Shell \MakeLowercase{\textit{et al.}}: Bare Demo of IEEEtran.cls for IEEE Journals}

% use for special paper notices
%\IEEEspecialpapernotice{(Invited Paper)}

% make the title area
\maketitle

% As a general rule, do not put math, special symbols or citations
% in the abstract or keywords.
\begin{abstract}
We propose in this paper a novel, information-theoretic method, called \texttt{MaxEnt}, for efficient data acquisition for low-rank matrix recovery. This proposed method has important applications to a wide range of problems, including image processing and text document indexing. Fundamental to our design approach is the so-called maximum entropy principle, which states that the measurement masks which maximize the entropy of observations, also maximize the information gain on the unknown matrix $\bm{X}$. Coupled with a low-rank stochastic model for $\bm{X}$, such a principle (i) reveals novel connections between information-theoretic sampling and {subspace packings}, and (ii) yields efficient mask construction algorithms for matrix recovery, which significantly outperforms random measurements. We illustrate the effectiveness of \texttt{MaxEnt} in simulation experiments, and demonstrate its usefulness in two real-world applications on image recovery and text document indexing.
\end{abstract}

% Note that keywords are not normally used for peerreview papers.
%\begin{IEEEkeywords}
%Data compression, entropy, experimental design, image processing, information theory, matrix recovery.
%\end{IEEEkeywords}

% For peer review papers, you can put extra information on the cover
% page as needed:
% \ifCLASSOPTIONpeerreview
% \begin{center} \bfseries EDICS Category: 3-BBND \end{center}
% \fi
%
% For peerreview papers, this IEEEtran command inserts a page break and
% creates the second title. It will be ignored for other modes.
\IEEEpeerreviewmaketitle

\vspace{-0.1in}
\section{Introduction}

Low-rank matrices play a fundamental role in solving a wide range of statistical, engineering and machine learning problems. For many such problems, however, the low-rank matrix $\bm{X} \in \mathbb{R}^{m_1 \times m_2}$ cannot be directly or fully observed as data. Instead, the observations $\bm{y} \in \mathbb{R}^{n}$ are typically obtained as $\bm{y} = \mathcal{A}(\bm{X}) + \boldsymbol{\epsilon}$, where $\mathcal{A}:\mathbb{R}^{m_1 \times m_2} \rightarrow \mathbb{R}^n$ is a linear measurement operator, and $\boldsymbol{\epsilon}$ is a vector of measurement noise. The goal then is to recover the underlying matrix $\bm{X}$ from observations $\bm{y}$, a problem known as \textit{matrix recovery}. For many applications in the physical or biological sciences, a key challenge is the cost of obtaining measurements from $\bm{X}$, which can be quite expensive. One example is in gene studies \cite{ND2014}, where costly, time-intensive experiments are needed to observe the matrix $\bm{X}$ of gene-disease expression levels. This is further compounded for high-dimensional matrices, where $m_1$, $m_2$ and $n$ are large. In light of this challenge, we propose in this paper a novel, information-theoretic method for \textit{designing} the measurement operator $\mathcal{A}$, so that more information  can be extracted and a better recovery can be achieved on $\bm{X}$ compared to random measurements.

Given the increasing prevalence of low-rank modeling in scientific and engineering problems \cite{DR2016}, the proposed methodology has important applications to a broad spectrum of important problems. We briefly review two such problems which motivated this work:
\bi
\item \textit{Image processing}: In many imaging systems, the measurements $\bm{y}$ are obtained as $\bm{y} = \mathcal{A}(\bm{X}) + \boldsymbol{\epsilon}$, where $\bm{X}$ is the pixel matrix for the image of interest, and $\mathcal{A}$ is the collection of measurement masks for observing this image. Imaging systems of this form arise in many real-world applications, including single-pixel cameras \cite{Dea2008}, compressive hyperspectral imaging \cite{Rea2013}, and X-ray imaging \cite{BD2009}. Particularly for the latter two applications, the measurements $\bm{y}$ can be expensive to generate. Here, the proposed method can be used to \textit{design} the measurement masks, so that one can maximize image recovery performance given a certain budget constraint.
\item \textit{Text document indexing}: A key challenge in data mining is the sheer size of the database at hand (e.g., the large number of documents in text analysis), which greatly restricts the use of standard data analytic techniques. For large text databases, one solution is to compress (or \textit{index}) this database into a smaller, representative summary. This compression is typically performed by randomly projecting the large database onto a lower-dimensional subspace (see \cite{Ach2001, BM2001}). In other words, the summary data $\bm{y}$ is generated as $\bm{y} = \mathcal{A}(\bm{X})$, where $\mathcal{A}$ is a random linear projection operator. Here, the proposed method can be used to \textit{design} the projection operator $\mathcal{A}$, to achieve a good compression of the database $\bm{X}$.
%\item \textit{System identification}: Many engineering processes can be well-approximated by the following simple discrete-time, linear time-invariant model:
%\[\bm{x}(t+1) = \bm{A}\bm{x}(t)+ \bm{B} \bm{u}(t), \quad \bm{y}(t) = \bm{C} \bm{x}(t) + \bm{D}\bm{u}(t),\]
%where $\bm{x}(t)$ denotes the system states, $\bm{u}(t)$ denotes the control inputs, and $\bm{y}(t)$ denotes the observed outputs, all at time $t$. Given data on the system of interest, \cite{LV2009} showed that the parameter matrices (i.e., $\bm{A}$, $\bm{B}$, $\bm{C}$ and $\bm{D}$) can be estimated using a low-rank matrix recovery formulation. For such applications, however, the system data can be expensive to collect, since measurements must be taken when the system is running. The proposed method can be used to \textit{design} the measurement procedure, so that a maximum amount of information can be gained on the system given a fixed operating budget.
\ei

In the past decade, there has been a rapidly growing body of literature on the topic of low-rank \textit{matrix recovery}, focusing largely on the theoretical properties of such a recovery via convex programming. This includes the seminal works of \cite{Rea2010} and \cite{CP2011a}, who investigated the necessary conditions for a successful recovery of $\bm{X}$ using nuclear-norm minimization, as well as numerous subsequent works (e.g., \cite{Oea2011, CZ2013}) which improved upon such conditions. A related problem, called \textit{matrix completion} (where matrix entries are directly observed), has received special attention; this includes the pioneering papers \cite{CR2009, CP2010, CT2010, Rec2011}, as well as many subsequent works. However, nearly all of the literature on matrix recovery (or matrix completion) assumes the measurement masks (or the missing entries) are sampled uniformly-at-random, since this allows for easy implementation and more amenable theoretical analysis. For the specific case of matrix completion, there has been some recent work on an informed (or \textit{designed}) strategy for sampling matrix entries \cite{Sea2013, Cea2013, MX2017}. To the best of our knowledge, no one has tackled the design problem for the more general setting of matrix recovery from a maximum entropy design perspective; this is the aim of the current paper.

We propose here a novel information-theoretic framework for designing the measurement masks in the linear operator $\mathcal{A}$, with the desired goal being an improved recovery of the low-rank matrix $\bm{X}$ compared to randomly sampled masks. The contributions of our work are three-fold. First, we derive a design principle called the \textit{maximum entropy principle} for the matrix recovery problem, which states that \textit{the measurement operator $\mathcal{A}$ maximizing the entropy of observations $\bm{y}$ is the operator maximizing information gain on matrix $\bm{X}$}. Such a principle yields a simple design criterion involving only the \textit{observations} $\bm{y}$, which serves as a proxy for more complicated criteria involving the \textit{matrix} $\bm{X}$ (the matrix $\bm{X}$ is typically much more high-dimensional than the observations $\bm{y}$, see, e.g., \cite{CP2011a}). Next, adopting a so-called singular matrix-variate Gaussian stochastic model on $\bm{X}$, we reveal novel insights between maximum entropy sampling and {subspace packings}, {by generalizing a lower bound on entropy under uniform subspace priors}. Using such insights, we then develop a novel algorithm, called \texttt{MaxEnt}, for efficiently designing initial and sequential measurement masks in $\mathcal{A}$. Finally, we demonstrate the effectiveness of the proposed design strategy in several numerical simulations, and in real-world applications on image processing and text document indexing.

There is a large body of work on information-theoretic design (e.g., for compressive sensing) and its many real-world applications (see, e.g., \cite{Mac1992}), and we would be remiss if we did not mention important developments in this field. This includes the seminal paper \cite{PV2006} {(see also \cite{Gea2005})}, who showed the profound fact that, for linear vector Gaussian channels, the gradient of the mutual information is related to the minimum mean-squared error matrix for parameter estimation. Such a result is further developed by \cite{Cea2012}, \cite{Wea2014} and \cite{Sea2017} for designing measurement matrices in compressive sensing and phase retrieval. The key {novelty in our work} is that, instead of directly maximizing the mutual information between signal (i.e., $\bm{X}$) and measurements (i.e., $\bm{y}$), we examine a dual (but equivalent) problem of maximizing the entropy of observations $\bm{y}$. As we show in the paper, this dual view sometimes offers the advantage of efficient initial and adaptive constructions of measurement masks via subspace packing, which then allows for effective matrix recovery. A related maximum entropy approach was also employed in \cite{FT2016} for developing a general minimax approach to supervised learning. Our work is also a novel extension of the information-theoretic matrix completion work \cite{MX2017}, in that we explore \textit{general} measurement masks (rather than \textit{entrywise} measurements).

This paper is organized as follows. Section 2 outlines the matrix recovery framework and the proposed model specification. Section 3 introduces the maximum entropy principle, and demonstrates how such a principle can be applied for designing measurement masks in $\mathcal{A}$. Sections 4 and 5 reveal new insights on initial and adaptive design, including a useful connection between maximum entropy masks and subspace packings. Section 6 details a design algorithm called \texttt{MaxEnt}, which can efficiently construct initial and adaptive masks for maximizing information gain on $\bm{X}$. Section 7 demonstrates the effectiveness of \texttt{MaxEnt} in numerical simulations, and explores its usefulness for solving real-world problems on image recovery and text document indexing. Section 8 concludes with thoughts on future work.

\section{Model framework for matrix recovery}
\label{sec:mod}
We begin by first outlining the matrix recovery problem, then reviewing the singular matrix-variate Gaussian model \cite{MX2017}. This model will serve as a versatile probabilistic model for low-rank matrices throughout the paper.

\subsection{Problem set-up}
Let $\bm{X} = (X_{i,j}) \in \mathbb{R}^{m_1 \times m_2}$ be the low-rank matrix of interest. Suppose $\bm{X}$ is observed via masks $\{\bm{A}_i\}_{i=1}^n \subseteq \mathbb{R}^{m_1 \times m_2}$, with the resulting samples then corrupted by independent and identically distributed (i.i.d.) Gaussian noise. The resulting noisy observations, $\{y_i\}_{i=1}^n$, then follow the model:
\begin{equation}
y_i = \mu_i + \epsilon_i, \; \mu_i = \langle \bm{A}_i, \bm{X} \rangle_F, \; \epsilon_i \distas{i.i.d.} \mathcal{N}(0,\eta^2), \; i = 1, \cdots, n,
\label{eq:mr}
\end{equation}
where $\langle \bm{A}, \bm{X} \rangle_F = \text{tr}(\bm{A}^T\bm{X})$ is the Frobenius inner-product. We assume all masks satisfy the unit power constraint $\|\bm{A}_i\|_F^2 \leq 1$, as is typical in matrix sensing problems \cite{DR2016}. With $\bm{y} = (y_i)_{i=1}^n$ and $\boldsymbol{\epsilon} = (\epsilon_i)_{i=1}^n$, \eqref{eq:mr} can be written in vector form:
\begin{equation}
\bm{y} = \mathcal{A}(\bm{X}) + \boldsymbol{\epsilon},
\label{eq:mrvec}
\end{equation}
where $\mathcal{A}: \mathbb{R}^{m_1 \times m_2} \rightarrow \mathbb{R}^n$ is the linear measurement operator returning the mean vector $\mathcal{A}(\bm{X}) = (\mu_i)_{i=1}^n$.

With this, the desired goal of mask design for matrix recovery can be made more precise. We employ here the following two-step design approach, commonly used in (statistical) experimental design \cite{WH2009}. First, given no prior knowledge on $\bm{X}$, \textit{initial} masks $\bm{A}_{1:n} = [\bm{A}_1 \; \bm{A}_2 \; \cdots \; \bm{A}_n]$ are designed to extract a maximum amount of initial information on $\bm{X}$. Next, from this initial learning, \textit{sequential} masks $\bm{A}_{n+1}, \bm{A}_{n+2}, \cdots$ are designed to adaptively maximize information on $\bm{X}$. The singular matrix-variate Gaussian distribution, presented below, provides an appealing framework for developing this two-step information-theoretic design methodology.

\subsection{Model specification}
%\begin{figure}[t]
%\centering
%\captionof{table}{Model specification for matrix recovery.}
%\begin{tabular}{ l l l }
%\toprule
% \multicolumn{1}{c}{\textit{Model}} &  \multicolumn{1}{c}{\textit{Distribution}}\\
%\toprule
%\textbf{Observations} & $[\bm{y}|\bm{X},\eta^2]$: $y_i \distas{i.i.d.} \mathcal{N}(\langle \bm{A}_i, \bm{X} \rangle_F, \eta^2)$\\
%\hline
%\textbf{Low-rank matrix} & $[\bm{X}|\mathcal{P}_{\mathcal{U}}, \mathcal{P}_{\mathcal{V}},\sigma^2,R]: \bm{X} \sim \mathcal{SMG}(\mathcal{P}_{\mathcal{U}}, \mathcal{P}_{\mathcal{V}},\sigma^2,R)$\\
%\hline
%\textbf{Priors} & $[\mathcal{P}_{\mathcal{U}}, \mathcal{P}_{\mathcal{V}},\sigma^2,\eta^2,R] = [\mathcal{P}_{\mathcal{U}}|R]\;[\mathcal{P}_{\mathcal{V}}|R]\;[\eta^2][\sigma^2][R] $ \\
%\quad Projection mtx. & \quad $[\mathcal{P}_{\mathcal{U}}|R] \sim {U}(\mathcal{G}_{R,m_1-R})$\\
% & \quad $[\mathcal{P}_{\mathcal{V}}|R] \sim {U}(\mathcal{G}_{R,m_2-R})$\\
%\quad Obs. noise & \quad $[\eta^2] \sim IG(\alpha_{\eta^2},\beta_{\eta^2})$\\
%\quad Mtx. variance & \quad $[\sigma^2] \sim IG(\alpha_{\sigma^2}, \beta_{\sigma^2})$\\
%\quad Rank & \quad $[R] \sim \text{Discrete}(\{\pi_r\}_{r=1}^{m_1 \wedge m_2})$\\
%\toprule
%\end{tabular}
%\label{tbl:spec}
%\end{figure}

\subsubsection{The singular matrix-variate Gaussian distribution}
Suppose $\bm{X}$ is normalized with zero mean, and consider the following model for $\bm{X}$:
\begin{definition}
[Singular matrix-variate Gaussian (SMG); Definition 2.4.1, \cite{GN1999}] Let $\bm{Z} \in \mathbb{R}^{m_1 \times m_2}$ be a random matrix with entries $Z_{i,j} \distas{i.i.d.} \mathcal{N}(0, \sigma^2)$ for $i = 1, \cdots, m_1$ and $j = 1, \cdots, m_2$. The random matrix $\bm{X}$ has a \textup{singular matrix-variate Gaussian} distribution if $\bm{X} \stackrel{d}{=} \mathcal{P}_{\mathcal{U}} \bm{Z} \mathcal{P}_{\mathcal{V}}$ for projection matrices $\mathcal{P}_{\mathcal{U}} = \bm{U}\bm{U}^T$ and $\mathcal{P}_{\mathcal{V}} = \bm{V}\bm{V}^T$, where $\bm{U} \in \mathbb{R}^{m_1 \times R}$, $\bm{U}^T\bm{U} = \bm{I}$,  $\bm{V} \in \mathbb{R}^{m_2 \times R}$, $\bm{V}^T\bm{V} = \bm{I}$ and $R < m_1 \wedge m_2$.\footnote{Here, $m_1 \wedge m_2 := \min(m_1,m_2)$ and $m_1 \vee m_2 := \max(m_1, m_2)$.} We will denote this as $\bm{X} \sim \mathcal{SMG}(\mathcal{P}_{\mathcal{U}},\mathcal{P}_{\mathcal{V}},\sigma^2,R)$.
\label{def:smg}
\end{definition}
\noindent From a simulation perspective, a realization from $\mathcal{SMG}(\mathcal{P}_{\mathcal{U}},\mathcal{P}_{\mathcal{V}},\sigma^2,R)$ is obtained by first (a) simulating a random matrix $\bm{Z}$ with each entry following an i.i.d. $\mathcal{N}(0,\sigma^2$) distribution, then (b) performing a left and right projection of $\bm{Z}$ via the projection matrices $\mathcal{P}_{\mathcal{U}}$ and $\mathcal{P}_{\mathcal{V}}$. Here, $\mathcal{P}_{\mathcal{U}}$ and $\mathcal{P}_{\mathcal{V}}$ are projection operators which map vectors from $\mathbb{R}^{m_1}$ and $\mathbb{R}^{m_2}$ onto $\mathcal{U}$ and $\mathcal{V}$, the $R$-dim. linear subspaces spanned by the orthonormal columns of $\bm{U}$ and $\bm{V}$, respectively. After this left-right projection of $\bm{Z}$, one can show that the resulting matrix $\bm{X} = \mathcal{P}_{\mathcal{U}}\bm{Z}\mathcal{P}_{\mathcal{V}}$ has rank $R < m_1 \wedge m_2$, with its row and column spaces lying in $\mathcal{U}$ and $\mathcal{V}$, respectively. For $R$ small, the SMG model provides a flexible framework for modeling low-rank matrices.

Similar to the design problem in matrix completion \cite{MX2017}, the parametrization of the SMG model using projection matrices offers several appealing features for mask design in matrix recovery. First, such a parametrization encodes valuable information on the subspaces of $\bm{X}$. Recall that each projection operator $\mathcal{P}_{\mathcal{W}} \in \mathbb{R}^{m \times m}$ of rank $R$ corresponds to a unique $R$-plane $\mathcal{W}$ (i.e., an $R$-dim. linear subspace) in $\mathbb{R}^m$. $\mathcal{P}_{\mathcal{U}}$ and $\mathcal{P}_{\mathcal{V}}$ then parametrize information on the row space $\mathcal{U} \in \mathcal{G}_{R,m_1-R}$ and the column space $\mathcal{V} \in \mathcal{G}_{R,m_2-R}$, where $\mathcal{G}_{R,m-R}$ is the \textit{Grassmann manifold} consisting of all $R$-planes in $\mathbb{R}^m$. This can then be used to derive insightful connections between initial mask design and Grassmann packings (see Section \ref{sec:ini}). Second, using the fact that the projection of a Gaussian random vector is still Gaussian-distributed, we show later that, conditional on observations $\bm{y}$, subsequent observations will also be Gaussian-distributed. This property is key for deriving a closed-form adaptive design scheme for greedily maximizing information on $\bm{X}$ (see Section \ref{sec:seq}).

\subsection{Connection to nuclear-norm recovery}
\label{sec:connuc}
For most practical scenarios, there is little-to-no prior knowledge on either the rank of $\bm{X}$ or its subspaces. In such cases, a Bayesian approach \cite{Gea2014} would be to assign non-informative prior distributions to the model parameters $R$, $\mathcal{P}_{\mathcal{U}}$ and $\mathcal{P}_{\mathcal{V}}$, i.e., by assuming all possible ranks and subspaces are equally likely. Adopting these non-informative priors, the following lemma reveals an insightful expression for the maximum-a-posteriori (MAP) estimator of $\bm{X}$:
\begin{lemma}[MAP estimator]
Assume $\eta^2$ and $\sigma^2$ are fixed. Suppose (a) the subspaces for $\mathcal{P}_{\mathcal{U}}$ and $\mathcal{P}_{\mathcal{V}}$ are uniformly distributed over the Grassmann manifolds $\mathcal{G}_{R,m_1-R}$ and $\mathcal{G}_{R,m_2-R}$, and (b) $R$ is uniformly distributed on $\{1, \cdots, m_1 \wedge m_2\}$. Conditional on $\bm{y}$, the \textup{MAP estimator} for $\bm{X}$ becomes:
\begin{equation}
\small
\widetilde{\bm{X}} \in \underset{\bm{X} \in \mathbb{R}^{m_1 \times m_2}}{\textup{argmin}} \; \left[ \frac{\| \bm{y} - \mathcal{A}(\bm{X})\|_2^2}{\eta^2} + \log(2 \pi \sigma^2) \textup{rank}^2(\bm{X}) + \frac{\|\bm{X}\|_F^2}{\sigma^2} \right],
\label{eq:matest}
\end{equation}
\normalsize
where $\|\bm{X}\|_F = \sqrt{\sum_{i,j} X_{i,j}^2}$ is the Frobenius norm of $\bm{X}$.
\label{thm:mle}
\end{lemma}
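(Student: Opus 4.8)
The plan is to compute the MAP estimator directly from Bayes' rule, assembling the log-posterior from three pieces: the Gaussian likelihood, the density of the SMG prior conditional on rank and subspaces, and the (discrete and uniform) priors on the rank and subspaces. Since the subspace and rank priors are uniform, their contribution to the log-posterior is constant in the subspaces and in $R$ (for fixed $R$), so after marginalizing or profiling over the nuisance parameters the optimization reduces to a penalized least-squares problem in $\bm{X}$. The three terms in \eqref{eq:matest} should then emerge as (i) the data-fidelity term $\|\bm{y}-\mathcal{A}(\bm{X})\|_2^2/\eta^2$ from the Gaussian noise, (ii) the term $\|\bm{X}\|_F^2/\sigma^2$ from the Gaussian entries of $\bm{Z}$, and (iii) the rank-penalty $\log(2\pi\sigma^2)\,\textup{rank}^2(\bm{X})$ from the normalizing constant of the singular Gaussian density.

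The key steps, in order, are as follows. First I would write the negative log-likelihood $-\log p(\bm{y}\mid\bm{X})$, which is immediately $\|\bm{y}-\mathcal{A}(\bm{X})\|_2^2/(2\eta^2)$ up to an additive constant. Second, I would express the conditional prior density of $\bm{X}$ given its rank $R$ and subspaces $\mathcal{U},\mathcal{V}$: because $\bm{X}=\mathcal{P}_{\mathcal{U}}\bm{Z}\mathcal{P}_{\mathcal{V}}$ with $\bm{Z}$ having i.i.d. $\mathcal{N}(0,\sigma^2)$ entries, $\bm{X}$ lives on an $R$-dimensional (in each factor) subspace and follows a \emph{degenerate} Gaussian supported on that subspace. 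Its density with respect to the appropriate (lower-dimensional) Lebesgue measure has exponent $-\|\bm{X}\|_F^2/(2\sigma^2)$ and a normalizing constant whose logarithm scales with the dimension of the support; here the effective number of free parameters is $R^2$ (the dimension of the core singular-value block once the subspaces are fixed), which produces the $\log(2\pi\sigma^2)\,\textup{rank}^2(\bm{X})$ term. Third, I would add the log-priors on $R$ and on $\mathcal{U},\mathcal{V}$; uniformity makes these constant, so they drop out of the argmin. Finally, collecting the $\bm{X}$-dependent terms and rescaling by $2$ gives exactly \eqref{eq:matest}, and taking the minimizer over all $\bm{X}\in\mathbb{R}^{m_1\times m_2}$ (which implicitly ranges over all ranks and subspaces) yields the stated MAP estimator.

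The main obstacle is the careful treatment of the \emph{singular} (degenerate) Gaussian density and, in particular, justifying that the effective dimension governing the normalizing constant is $\textup{rank}^2(\bm{X})=R^2$ rather than, say, $R(m_1+m_2-R)$ or $Rm_1$ or $Rm_2$. The subtlety is that the prior on $\bm{X}$ factors into a prior on the subspaces $(\mathcal{U},\mathcal{V})$ and a conditional prior on the core; once we condition on the subspaces, the remaining randomness in $\bm{X}=\bm{U}(\bm{U}^T\bm{Z}\bm{V})\bm{V}^T$ is carried entirely by the $R\times R$ core matrix $\bm{U}^T\bm{Z}\bm{V}$, whose entries are i.i.d. $\mathcal{N}(0,\sigma^2)$, giving exactly $R^2$ degrees of freedom and hence a normalizing constant $(2\pi\sigma^2)^{-R^2/2}$. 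Making this change-of-variables rigorous --- verifying that $\|\bm{X}\|_F^2=\|\bm{U}^T\bm{Z}\bm{V}\|_F^2$ under the orthonormality constraints, and that the uniform subspace priors contribute no $\bm{X}$-dependence after profiling --- is where I would spend the most care; the remaining algebra is routine.
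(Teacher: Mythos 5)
Your derivation is sound and, as far as one can tell, reconstructs exactly the argument the paper relies on: the paper's own ``proof'' of Lemma \ref{thm:mle} is a one-line citation to Lemma 2 of \cite{MX2017}, so your write-up is actually more informative than what appears in the appendix. The three terms do arise where you say: $-2\log p(\bm{y}\mid\bm{X})$ gives $\|\bm{y}-\mathcal{A}(\bm{X})\|_2^2/\eta^2$ up to a constant; conditioning on $(\mathcal{U},\mathcal{V},R)$ reduces the SMG prior to the $R\times R$ core $\bm{C}=\bm{U}^T\bm{Z}\bm{V}$, whose entries are indeed i.i.d.\ $\mathcal{N}(0,\sigma^2)$ (the cross-covariances vanish by orthonormality of the columns of $\bm{U}$ and $\bm{V}$), so $-2$ times its log-density is $R^2\log(2\pi\sigma^2)+\|\bm{C}\|_F^2/\sigma^2$ with $\|\bm{C}\|_F^2=\|\bm{X}\|_F^2$; and the uniform priors on $R$ and on the subspaces drop out of the argmin. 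You correctly identify the one genuinely delicate point, namely that the ``MAP'' here compares densities taken with respect to reference measures of different dimensions across the rank strata, and that the $\mathrm{rank}^2(\bm{X})$ penalty is precisely the $R^2$ degrees of freedom of the core rather than the full dimension $R(m_1+m_2-R)$ of the rank-$R$ variety; a fully rigorous treatment would also have to argue that the remaining $R$-dependent volume factors (from the Grassmannian priors and the change of variables $\bm{Z}\mapsto(\bm{C},\mathcal{U},\mathcal{V})$) are either absorbed into the convention for the reference measure or deliberately neglected, since the lemma's objective contains no such terms. That caveat applies equally to the paper's statement itself, so it is not a gap in your argument relative to the paper's; with it acknowledged, your proof is essentially the intended one.
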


Consider now the following approximation of \eqref{eq:matest}. First, viewing the rank penalty $\log(2 \pi \sigma^2) \textup{rank}^2(\bm{X})$ as a Lagrange multiplier, this penalty term can be replaced by the constraint $\textup{rank}(\bm{X}) \leq \sqrt{\xi}$. Next, changing this constraint into its Lagrangian form, and relaxing the rank function $\textup{rank}(\bm{X})$ into the nuclear norm $\|\bm{X}\|_*$, which is the tightest convex relaxation \cite{Rea2010}, \eqref{eq:matest} becomes:
\begin{equation}
\widetilde{\bm{X}} = \underset{\bm{X} \in \mathbb{R}^{m_1 \times m_2}}{\textup{argmin}} \; \Big[ \| \bm{y} - \mathcal{A}(\bm{X})\|_2^2 + \lambda \left\{ \alpha \|\bm{X}\|_* + (1-\alpha) \|\bm{X}\|_F^2 \right\} \Big],
\label{eq:el}
\end{equation}
for an appropriate choice of $\lambda > 0$ and $\alpha \in (0,1)$. The problem in \eqref{eq:el} can then be viewed as an ``elastic net'' \cite{ZH2005} formulation for low-rank matrix recovery.

The formulation in \eqref{eq:el} yields an interesting connection between the MAP estimator $\widetilde{\bm{X}}$ and existing recovery methods. Setting $\alpha = 1$, \eqref{eq:el} reduces to the nuclear-norm formulation:
\begin{equation}
\widehat{\bm{X}} = \underset{\bm{X} \in \mathbb{R}^{m_1 \times m_2}}{\textup{argmin}} \; \Big[ \| \bm{y} - \mathcal{A}(\bm{X})\|_2^2 + \lambda \|\bm{X}\|_* \Big],
\label{eq:nuc}
\end{equation}
which is widely used for low-rank matrix recovery \cite{CP2011,PB2014}. This link allows us to use efficient algorithms for solving \eqref{eq:nuc} to guide the active mask design procedure (see Section \ref{sec:alg}).

%\begin{definition}
%Let $\bm{F}_1$ and $\bm{F}_2$ be two frames in $\mathbb{R}^{m \times R}$. The \textup{chordal distance} between $\bm{F}_1$ and $\bm{F}_2$ is defined as:
%\begin{equation}
%d_{C}(\bm{F}_1, \bm{F}_2) = \sqrt{R - \|\bm{F}_1^T\bm{F}_2\|_F^2},
%\label{eq:chordal}
%\end{equation}
%and the \textup{spectral distance} between $\bm{F}_1$ and $\bm{F}_2$ is defined as:
%\begin{equation}
%d_{S}(\bm{F}_1, \bm{F}_2) = \sqrt{R - \|\bm{F}_1^T\bm{F}_2\|_2^2}.
%\label{eq:chordal}
%\end{equation}
%\label{def:dist}
%\end{definition}
%
%\bi
%\item Interpret geometrically and visualize in figure.
%\item Some background on why and in what contexts / applications these frame distances are useful.
%\ei

\subsection{Useful model properties}
The SMG model also offers several properties which will prove useful later for mask design. These properties concern the joint distribution of measurements, both prior to and conditional on obtaining observations from $\bm{X}$.

\subsubsection{Joint distribution prior to observations}
\label{sec:priorobs}
Let $y_i$ and $y_j$ be observations from \eqref{eq:mr} using masks $\bm{A}_i$ and $\bm{A}_j$, respectively. The following lemma gives a closed-form joint distribution for $y_i$ and $y_j$, prior to observing data on $\bm{X}$:
\begin{lemma}[Unconditional joint distribution]
Suppose $\bm{X} \sim \mathcal{SMG}(\mathcal{P}_{\mathcal{U}}, \mathcal{P}_{\mathcal{V}},\sigma^2,R)$, with $\mathcal{P}_{\mathcal{U}}$, $\mathcal{P}_{\mathcal{V}}$, $\sigma^2$ and $R$ fixed. Then:
\begin{equation}
\begin{bmatrix}
y_i\\
y_j
\end{bmatrix} \sim \mathcal{N}\left\{ 
\begin{bmatrix}
0 \\ 0
\end{bmatrix}, 
\begin{pmatrix}
\xi_{i,i} & \xi_{i,j}\\
\xi_{i,j} & \xi_{j,j}
\end{pmatrix}
\right\}, \quad i \neq j,
\label{eq:uncdist}
\end{equation}
where:
\begin{align}
\begin{split}
\xi_{i,i} &= \textup{Var}(y_i) = \sigma^2 \| \mathcal{P}_{\mathcal{U}} \bm{A}_i \mathcal{P}_{\mathcal{V}} \|_F^2 + \eta^2,\\
\xi_{i,j} &= \textup{Cov}(y_i, y_j) = \sigma^2 \langle \mathcal{P}_{\mathcal{U}} \bm{A}_i \mathcal{P}_{\mathcal{V}}, \mathcal{P}_{\mathcal{U}} \bm{A}_j \mathcal{P}_{\mathcal{V}} \rangle_F.
\label{eq:var}
\end{split}
\end{align}
\label{lem:var}
\end{lemma}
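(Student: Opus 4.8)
The plan is to reduce everything to the elementary fact that a fixed linear functional of i.i.d.\ Gaussian variables is Gaussian, after first relocating the projection operators onto the (deterministic) masks. Writing $\bm{X} \stackrel{d}{=} \mathcal{P}_{\mathcal{U}} \bm{Z} \mathcal{P}_{\mathcal{V}}$ as in Definition~\ref{def:smg}, I would begin with the mean term $\mu_i = \langle \bm{A}_i, \mathcal{P}_{\mathcal{U}} \bm{Z} \mathcal{P}_{\mathcal{V}} \rangle_F = \text{tr}(\bm{A}_i^T \mathcal{P}_{\mathcal{U}} \bm{Z} \mathcal{P}_{\mathcal{V}})$ and use the cyclic property of the trace together with the symmetry $\mathcal{P}_{\mathcal{U}}^T = \mathcal{P}_{\mathcal{U}}$, $\mathcal{P}_{\mathcal{V}}^T = \mathcal{P}_{\mathcal{V}}$ (both being projection matrices) to rewrite this as $\mu_i = \langle \mathcal{P}_{\mathcal{U}} \bm{A}_i \mathcal{P}_{\mathcal{V}}, \bm{Z} \rangle_F$. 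Setting $\bm{B}_i := \mathcal{P}_{\mathcal{U}} \bm{A}_i \mathcal{P}_{\mathcal{V}}$, this recasts $\mu_i = \sum_{k,l} (\bm{B}_i)_{k,l} Z_{k,l}$ as a fixed linear combination of the entries $Z_{k,l} \distas{i.i.d.} \mathcal{N}(0,\sigma^2)$.

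Once $\mu_i$ and $\mu_j$ are in this form, joint Gaussianity is immediate: $(\mu_i, \mu_j)$ is a linear image of the Gaussian vector $\text{vec}(\bm{Z})$, hence jointly Gaussian with zero mean. I would then compute the second moments directly. Linearity gives $\textup{Var}(\mu_i) = \sigma^2 \sum_{k,l} (\bm{B}_i)_{k,l}^2 = \sigma^2 \|\bm{B}_i\|_F^2$, while independence of the entries (so that $E[Z_{k,l} Z_{k',l'}] = \sigma^2$ only when $(k,l) = (k',l')$) annihilates all cross terms and yields $\textup{Cov}(\mu_i, \mu_j) = \sigma^2 \sum_{k,l} (\bm{B}_i)_{k,l} (\bm{B}_j)_{k,l} = \sigma^2 \langle \bm{B}_i, \bm{B}_j \rangle_F$. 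Substituting back $\bm{B}_i = \mathcal{P}_{\mathcal{U}} \bm{A}_i \mathcal{P}_{\mathcal{V}}$ recovers exactly the claimed noise-free variance and covariance.

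To finish, I would reintroduce the measurement noise through $y_i = \mu_i + \epsilon_i$ with $\epsilon_i \distas{i.i.d.} \mathcal{N}(0,\eta^2)$ drawn independently of $\bm{Z}$. Since $(y_i, y_j)$ is then a sum of two independent jointly Gaussian vectors, it is itself jointly Gaussian with mean zero, and its covariance is the sum of the two individual covariances. The diagonal entries pick up the noise variance, giving $\xi_{i,i} = \sigma^2 \|\mathcal{P}_{\mathcal{U}} \bm{A}_i \mathcal{P}_{\mathcal{V}}\|_F^2 + \eta^2$, whereas the off-diagonal entry is unchanged because $\textup{Cov}(\epsilon_i, \epsilon_j) = 0$ for $i \neq j$, giving $\xi_{i,j} = \sigma^2 \langle \mathcal{P}_{\mathcal{U}} \bm{A}_i \mathcal{P}_{\mathcal{V}}, \mathcal{P}_{\mathcal{U}} \bm{A}_j \mathcal{P}_{\mathcal{V}} \rangle_F$, as claimed.

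I do not expect a genuine obstacle here; the only step requiring care is the first one, namely correctly transferring the two projection operators from $\bm{Z}$ onto the mask $\bm{A}_i$ via self-adjointness of $\mathcal{P}_{\mathcal{U}}, \mathcal{P}_{\mathcal{V}}$ and cyclic invariance of the trace. Once $\mu_i$ is expressed as an inner product against the \emph{unprojected} Gaussian matrix $\bm{Z}$, the remaining moment computations are routine. A minor point worth stating explicitly is that fixing the parameters $\mathcal{P}_{\mathcal{U}}, \mathcal{P}_{\mathcal{V}}, \sigma^2, R$ is precisely what renders $\bm{B}_i$ deterministic, which is exactly the standing hypothesis of the lemma.
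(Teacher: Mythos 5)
Your proof is correct, and it reaches the stated variance and covariance by a slightly different route than the paper. The paper's proof stays with the projected matrix $\bm{X} = \mathcal{P}_{\mathcal{U}}\bm{Z}\mathcal{P}_{\mathcal{V}}$: it expands $\text{tr}(\bm{A}_i^T\bm{X})$ column by column, computes the cross-covariance of columns $\text{Cov}(\bm{x}_{k_1},\bm{x}_{k_2}) = \sigma^2[\mathcal{P}_{\mathcal{V}}]_{k_1,k_2}\mathcal{P}_{\mathcal{U}}$, and reassembles the resulting double sum into a trace via $\mathcal{P}_{\mathcal{U}}^{1/2}$ and idempotency. You instead use self-adjointness of the projections and cyclicity of the trace to move both projections onto the mask, writing $\mu_i = \langle \mathcal{P}_{\mathcal{U}}\bm{A}_i\mathcal{P}_{\mathcal{V}}, \bm{Z}\rangle_F$, so that the second-moment computation reduces to the diagonal covariance of $\text{vec}(\bm{Z})$ and becomes a one-line Frobenius inner product. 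Your version is a bit cleaner: it makes the joint Gaussianity of $(\mu_i,\mu_j)$ explicit (the paper only computes moments), and it exposes directly why the answer is governed by the projected masks $\mathcal{P}_{\mathcal{U}}\bm{A}_i\mathcal{P}_{\mathcal{V}}$ --- which is precisely the geometric picture the paper develops around $\mathcal{T}_{\mathcal{U},\mathcal{V}}$ in Figure 1 and Lemma 7. The paper's column-wise computation buys nothing extra here beyond exhibiting the covariance structure of $\bm{X}$ itself; both arguments are equally rigorous.
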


These closed-form variance and covariance expressions can be viewed as similarity measures between measurement masks and the subspaces of $\bm{X}$. Consider first the variance expression for $\text{Var}(y_i)$, which (ignoring measurement noise $\eta^2$) is proportional to $\|\mathcal{P}_{\mathcal{U}} \bm{A}_i \mathcal{P}_{\mathcal{V}}\|_F^2$. Viewed geometrically, the variance of an observation from mask $\bm{A}_i$ is proportional to the norm of $\bm{A}_i$, \textit{after} accounting for its \textit{similarity} with the subspaces of $\bm{X}$ via a left-right projection by $\mathcal{P}_{\mathcal{U}}$ and $\mathcal{P}_{\mathcal{V}}$. Consider next the covariance between $y_i$ and $y_j$, which is proportional to the inner-product $\langle \mathcal{P}_{\mathcal{U}} \bm{A}_i \mathcal{P}_{\mathcal{V}}, \mathcal{P}_{\mathcal{U}} \bm{A}_j \mathcal{P}_{\mathcal{V}} \rangle_F$. This can be seen as the \textit{angle} between the two masks $\bm{A}_i$ and $\bm{A}_j$, \textit{after} accounting for their similarities with the subspaces of $\bm{X}$.

\begin{figure}
\centering
\includegraphics[width=0.3\textwidth]{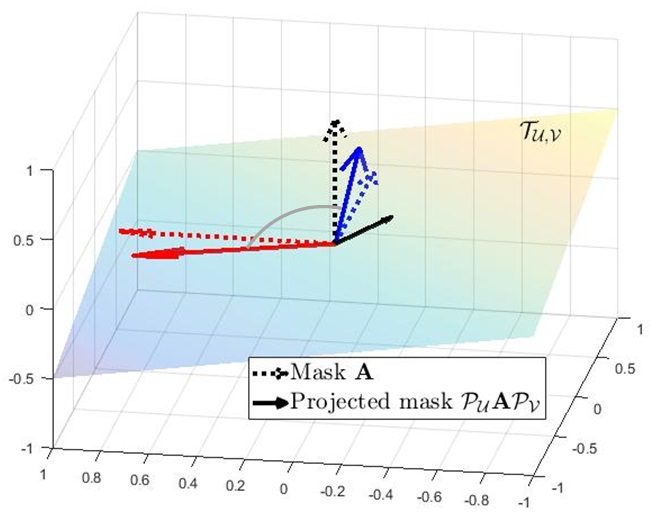}
\caption{A visualization of three masks (dotted vectors) on the Frobenius inner-product space $\langle \cdot, \cdot \rangle_F$, and its projections (solid vectors) onto subspace $\mathcal{T}_{\mathcal{U},\mathcal{V}}$ (corresponding to the row and column spaces of $\bm{X}$).}
\label{fig:proj}
\vspace{-0.2in}
\end{figure}

Figure \ref{fig:proj} visualizes this geometric interpretation. Here, the shaded subspace $\mathcal{T}_{\mathcal{U},\mathcal{V}}$ represents the projected subspace from a left-right projection by $\mathcal{P}_{\mathcal{U}}$ and $\mathcal{P}_{\mathcal{V}}$ (further details on $\mathcal{T}_{\mathcal{U},\mathcal{V}}$ in Lemma \ref{lem:oproj}), the three dotted vectors represent three measurement masks in the Frobenius inner-product space $\langle \cdot, \cdot \rangle_F$, and the three solid vectors represent the projection of these masks onto $\mathcal{T}_{\mathcal{U},\mathcal{V}}$. The variance term $\textup{Var}(y_i) \propto \|\mathcal{P}_{\mathcal{U}} \bm{A}_i \mathcal{P}_{\mathcal{V}}\|_F^2$ (again, ignoring $\eta^2$) is the squared-\textit{length} of the projected vector (solid) for mask $\bm{A}_i$. Comparing the three projected vectors in Figure \ref{fig:proj}, we see that the mask with greater similarity to $\mathcal{U}$ and $\mathcal{V}$ (the red vector) has a longer projected vector, so observations from this mask have greater variance. Likewise, the covariance term $ \text{Cov}(y_i,y_j) \propto \langle \mathcal{P}_{\mathcal{U}} \bm{A}_i \mathcal{P}_{\mathcal{V}}, \mathcal{P}_{\mathcal{U}} \bm{A}_j \mathcal{P}_{\mathcal{V}} \rangle_F$ is the \textit{inner-product} between the projected vectors for two masks $\bm{A}_i$ and $\bm{A}_j$. A larger inner-product (or smaller angle) between these two projected vectors suggests higher correlations between observations from masks $\bm{A}_i$ and $\bm{A}_j$. As shown later in Section \ref{sec:ini}, the goal of designing masks which \textit{maximize} information on $\bm{X}$ can be viewed as finding masks which \textit{maximize} the angles between their projected vectors.

%\bi
%\item Note that $\textup{Cov}(y_i, y_j) = \sigma^2 \langle \mathcal{P}_{\mathcal{U}} \bm{A}_i \mathcal{P}_{\mathcal{V}}, \mathcal{P}_{\mathcal{U}} \bm{A}_j \mathcal{P}_{\mathcal{V}} \rangle_F \leq \| (\mathcal{P}_{\mathcal{U}} \bm{A}_i \mathcal{P}_{\mathcal{V}})^T (\mathcal{P}_{\mathcal{U}} \bm{A}_j \mathcal{P}_{\mathcal{V}})\|_2$.
%\item In other words, the covariance between the indirect observations from masks $\bm{A}_i$ and $\bm{A}_j$ has a direct connection to $d_C(\mathcal{P}_{\mathcal{U}}\bm{A}_i\mathcal{P}_{\mathcal{V}})$, the chordal distance between $\bm{A}_i$ and $\bm{A}_j$ after a left-right projection onto the row and column spaces of the desired matrix $\bm{X}$.
%\item Likewise, via an upper bound, $\textup{Cov}(y_i, y_j)$ can be connected to $d_S(\mathcal{P}_{\mathcal{U}}\bm{A}_i\mathcal{P}_{\mathcal{V}})$, the spectral distance between $\bm{A}_i$ and $\bm{A}_j$ after a left-right projection onto the row and column spaces of the desired matrix $\bm{X}$.
%\item In this light, the covariance in \eqref{eq:var} can be viewed as an extension of the usual notion of frame coherence in Definition \ref{def:dist}. The latter provides a measurement of the distance between two frames, while the former provides a measure of this distance in relation to the underlying matrix $\bm{X}$ at hand. Placed in an information-theoretic context, this novel notion of coherence is paramount to the development of a mask design strategy for matrix recovery.
%\ei

\subsubsection{Joint distribution conditional on observations}
\label{sec:condobs}
Now, suppose the observations $\bm{y}$ have been sampled from \eqref{eq:mr}, and let $y_{n+1}$ and $y_{n+2}$ be new observations from masks $\bm{A}_{n+1}$ and $\bm{A}_{n+2}$. Using the conditional property of the Gaussian distribution, the following lemma provides a closed-form joint distribution for $y_{n+1}$ and $y_{n+2}$, conditional on $\bm{y}$:

\begin{lemma}[Conditional joint distribution]
Let $\bm{y}$ be observations from masks $\bm{A}_{1:n} = [\bm{A}_1 \; \bm{A}_2 \; \cdots \; \bm{A}_n]$, and let $y_{n+1}$ and $y_{n+2}$ be new observations from masks $\bm{A}_{n+1}$ and $\bm{A}_{n+2}$. Assuming $\bm{X} \sim \mathcal{SMG}(\mathcal{P}_{\mathcal{U}}, \mathcal{P}_{\mathcal{V}},\sigma^2,R)$, with $\mathcal{P}_{\mathcal{U}}$, $\mathcal{P}_{\mathcal{V}}$, $\sigma^2$ and $R$ fixed, we have:
\begin{equation}
\small
\begin{bmatrix}
y_{n+1}\\
y_{n+2}
\end{bmatrix}\Big| \bm{y} \sim \mathcal{N}\left\{ 
\begin{bmatrix}
\mathbb{E}(y_{n+1}|\bm{y})\\
\mathbb{E}(y_{n+2}|\bm{y})
\end{bmatrix}, 
\begin{pmatrix}
\xi_{n+1,n+1}|\bm{y} & \xi_{n+1,n+2}|\bm{y}\\
\xi_{n+1,n+2}|\bm{y} & \xi_{n+2,n+2}|\bm{y}
\end{pmatrix}
\right\},
\label{eq:conddist}
\normalsize
\end{equation}
where, with:
\begin{align}
\begin{split}
\bm{R}_n(\bm{A}_{1:n}) &:= [\langle \mathcal{P}_{\mathcal{U}} \bm{A}_i \mathcal{P}_{\mathcal{V}}, \mathcal{P}_{\mathcal{U}} \bm{A}_j \mathcal{P}_{\mathcal{V}} \rangle_F]_{i,j=1}^n \in \mathbb{R}^{n \times n},\\
\bm{r}_n(\bm{A}) &:= [\langle \mathcal{P}_{\mathcal{U}} \bm{A}_i \mathcal{P}_{\mathcal{V}}, \mathcal{P}_{\mathcal{U}} \bm{A} \mathcal{P}_{\mathcal{V}} \rangle_F]_{i=1}^n \in \mathbb{R}^{n},
\label{eq:rn}
\end{split}
\end{align}
and $\gamma^2 := \eta^2 / \sigma^2$, we have:
\begin{align}
\small
\begin{split}
\mathbb{E}(y_{i}|\bm{y}) &=  \bm{r}_n^T(\bm{A}_{i}) [\bm{R}_n(\bm{A}_{1:n}) + \gamma^2 \bm{I}]^{-1} \bm{y},\\
\xi_{i,i}|\bm{y} &:= \textup{Var}(y_{i}|\bm{y})\\
&= \textup{Var}(y_{i}) - \sigma^2 \bm{r}_n^T(\bm{A}_{i}) [\bm{R}_n(\bm{A}_{1:n}) + \gamma^2 \bm{I}]^{-1} \bm{r}_n(\bm{A}_{i}),\\
\xi_{i,j}|\bm{y} &:= \textup{Cov}(y_{i},y_{j}|\bm{y})\\
& = \textup{Cov}(y_{i},y_{j}) - \sigma^2 \bm{r}_n^T(\bm{A}_{i}) [\bm{R}_n(\bm{A}_{1:n}) + \gamma^2 \bm{I}]^{-1} \bm{r}_n(\bm{A}_{j}).
\label{eq:condpar}
\end{split}
\normalsize
\end{align}
\label{thm:conddist}
\end{lemma}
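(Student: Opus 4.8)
The plan is to treat the full observation vector as jointly Gaussian and then apply the standard Gaussian conditioning identity. First I would establish that the stacked vector $(\bm{y}^T, y_{n+1}, y_{n+2})^T$ is jointly Gaussian with mean zero. This is a mild extension of Lemma \ref{lem:var}, which records only the bivariate marginals. Since $\bm{X} \stackrel{d}{=} \mathcal{P}_{\mathcal{U}} \bm{Z} \mathcal{P}_{\mathcal{V}}$ is a fixed linear image of the Gaussian matrix $\bm{Z}$, the vectorization $\mathrm{vec}(\bm{X})$ is a (degenerate) multivariate Gaussian; each noiseless mean $\mu_i = \langle \bm{A}_i, \bm{X}\rangle_F = \mathrm{vec}(\bm{A}_i)^T \mathrm{vec}(\bm{X})$ is a linear functional of $\mathrm{vec}(\bm{X})$, and adding the independent Gaussian noise $\epsilon_i$ preserves joint Gaussianity. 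Hence any finite collection of observations from \eqref{eq:mr} is jointly Gaussian, and in particular so is $(\bm{y}^T, y_{n+1}, y_{n+2})^T$.

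Next I would read off the blocks of its covariance matrix. By the same computation underlying Lemma \ref{lem:var} --- using the self-adjointness of the projections to write $\langle \bm{A}_i, \mathcal{P}_{\mathcal{U}}\bm{Z}\mathcal{P}_{\mathcal{V}}\rangle_F = \langle \mathcal{P}_{\mathcal{U}}\bm{A}_i\mathcal{P}_{\mathcal{V}}, \bm{Z}\rangle_F$ together with $\mathbb{E}[\langle \bm{B}, \bm{Z}\rangle_F \langle \bm{C}, \bm{Z}\rangle_F] = \sigma^2 \langle \bm{B}, \bm{C}\rangle_F$ --- the covariance of any two observations equals $\sigma^2 \langle \mathcal{P}_{\mathcal{U}}\bm{A}_i\mathcal{P}_{\mathcal{V}}, \mathcal{P}_{\mathcal{U}}\bm{A}_j\mathcal{P}_{\mathcal{V}}\rangle_F$ for $i \neq j$, with an additional $\eta^2$ on the diagonal from the independent noise. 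Partitioning relative to $\bm{y}$ versus $(y_{n+1}, y_{n+2})$ and recalling $\gamma^2 = \eta^2/\sigma^2$, the covariance of $\bm{y}$ is $\sigma^2[\bm{R}_n(\bm{A}_{1:n}) + \gamma^2 \bm{I}]$, the cross-covariance vector between $y_i$ (for $i \in \{n+1, n+2\}$) and $\bm{y}$ is $\sigma^2 \bm{r}_n(\bm{A}_i)$, and the remaining $2 \times 2$ block carries the unconditional entries $\xi_{i,j}$ of \eqref{eq:var}.

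Finally I would apply the conditioning identity: for jointly Gaussian $(\bm{a}, \bm{b})$ with zero means, $\bm{a} \mid \bm{b} \sim \mathcal{N}(\Sigma_{ab}\Sigma_{bb}^{-1}\bm{b},\, \Sigma_{aa} - \Sigma_{ab}\Sigma_{bb}^{-1}\Sigma_{ba})$, taking $\bm{a} = (y_{n+1}, y_{n+2})^T$ and $\bm{b} = \bm{y}$. Substituting the blocks above, the two factors of $\sigma^2$ in $\Sigma_{ab}\Sigma_{bb}^{-1}$ cancel against each other in the mean, yielding $\mathbb{E}(y_i \mid \bm{y}) = \bm{r}_n^T(\bm{A}_i)[\bm{R}_n(\bm{A}_{1:n}) + \gamma^2\bm{I}]^{-1}\bm{y}$; in each Schur-complement term $\Sigma_{ab}\Sigma_{bb}^{-1}\Sigma_{ba}$ exactly one factor of $\sigma^2$ survives, producing the stated conditional variance and covariance in \eqref{eq:condpar}.

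The conditioning step is routine, so the only genuinely substantive point is the upgrade from the pairwise Gaussianity of Lemma \ref{lem:var} to joint Gaussianity of the extended vector, which the linear-Gaussian structure supplies immediately. The main thing to watch is the $\sigma^2$/$\gamma^2$ bookkeeping across the blocks, and noting that $\bm{R}_n(\bm{A}_{1:n}) + \gamma^2\bm{I}$ is invertible whenever $\eta^2 > 0$, since it is a Gram (hence positive semidefinite) matrix shifted by the positive multiple $\gamma^2\bm{I}$.
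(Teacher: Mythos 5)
Your proposal is correct and follows exactly the route the paper takes: its proof is a one-line appeal to the covariance computation of Lemma \ref{lem:var} extended to the stacked vector, combined with the standard multivariate Gaussian conditioning identity, which is precisely what you carry out (with the $\sigma^2$/$\gamma^2$ bookkeeping checking out). No gaps.
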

%\footnote{same for $\mu_{n+2}|\bm{y}$}

These closed-form conditional expressions also enjoy intuitive interpretations. In particular, the \textit{conditional} variance of a new observation, $\textup{Var}(y_{n+1}|\bm{y})$, can be decomposed as the \textit{unconditional} variance $\textup{Var}(y_{n+1})$, minus a \textit{reduction} term quantifying how correlated the new mask $\bm{A}_{n+1}$ is to the observed masks $\bm{A}_{1:n}$. Similarly, the \textit{conditional} covariance between two new observations, $\textup{Cov}(y_{n+1},y_{n+2}|\bm{y})$, can be decomposed as the \textit{unconditional} covariance $\textup{Cov}(y_{n+1},y_{n+2})$, minus a \textit{reduction} term quantifying how correlated the new masks $\bm{A}_{n+1}$ and $\bm{A}_{n+2}$ are to the observed masks $\bm{A}_{1:n}$. The expressions in \eqref{eq:condpar} will reappear when deriving the sequential mask design procedure in Section \ref{sec:seq}.

\section{An information-theoretic view on mask design}
\label{sec:info}
Next, we present an information-theoretic mask design framework for matrix recovery. We first outline the principle of maximum entropy, then show how such a principle can be used to design masks which maximize information gain on $\bm{X}$.

\subsection{The design principle of maximum entropy}
The principle of maximum entropy was first introduced in \cite{SW1987} and further developed in \cite{SW2000} for (statistical) experimental design in spatio-temporal modeling, although the origins of information-theoretic experimental design date back much further to the seminal works of \cite{Bla1951} and \cite{Lin1956}. This principle states that, under regularity assumptions on an observation model with unknown model parameters, \textit{a design scheme maximizing the entropy of collected data is a design scheme maximizing information gain on model parameters}. In other words, to maximize information on unknown parameters, one should sample from a design scheme which maximizes the entropy of collected samples. As described in \cite{SW2000}, the maximum entropy principle offers two important advantages for design. First, it allows for \textit{efficient} design construction, since the entropy expression for observed data is oftentimes simple and closed-form. Second, the trade-off between sample entropy and model information reveals useful insights. We will demonstrate how such advantages play a role in matrix recovery mask design.

%To present this principle formally, consider the following general experimental design framework from \cite{SW1987}. Let $\Theta$ denote the parameters-of-interest in an observation model, $\mathcal{D}$ an experimental design scheme, and $\mathcal{Y}$ the data collected from such a design. Further assume the data $\mathcal{D}$ and parameters $\Theta$ are random, following some joint probability distribution. By the chain rule of conditional entropy \cite{CT2012}, we get the following decomposition:
%\begin{equation}
%{\rm H}_{\mathcal{D}}(\mathcal{Y},\Theta) = {\rm H}_{\mathcal{D}}(\mathcal{Y}) + {\rm H}_{\mathcal{D}}(\Theta|\mathcal{Y}),
%\label{eq:entch1}
%\end{equation}
%where ${\rm H}(X) = \mathbb{E}\{- \log f(X)\}$ is the \textit{Shannon (differential) entropy} \cite{CT2012} of a random variable $X$ with density $f(\cdot)$, and the subscript for $\mathcal{D}$ indicates the data is collected from design scheme $\mathcal{D}$. Note that:

To formally present this principle, we require the following definitions. Let $(X,Y)$ be a pair of random variables (r.v.s) with marginal densities $(f_X(x), f_Y(y))$ and joint density $f_{X,Y}(x,y)$. Following \cite{CT2012}, the \textit{entropy} of $X$ is defined as ${\rm H}(X) = \mathbb{E}[ - \log f_X(X)]$, with larger values indicating greater uncertainty for $X$. Similarly, the \textit{joint entropy} of $(X,Y)$ is ${\rm H}(X,Y) = \mathbb{E}[ - \log f_{X,Y}(X,Y)]$, and the \textit{conditional entropy} of $Y$ given $X$ is the entropy of the conditional r.v. $Y|X$, which we denote by ${\rm H}(Y|X)$. The following chain rule (Theorem 2.2.1 in \cite{CT2012}) provides a link between joint entropy ${\rm H}(X,Y)$ and conditional entropy ${\rm H}(Y|X)$: 
\begin{equation}
{\rm H}(X,Y) = {\rm H}(X) + {\rm H}(Y|X).
\label{eq:ch}
\end{equation}

With this in hand, consider now the matrix recovery problem. Here, the parameter-of-interest is the unknown low-rank matrix $\bm{X}$, the design scheme is the choice of measurement masks $\bm{A}_{1:n} = [\bm{A}_1 \; \bm{A}_2 \; \cdots \; \bm{A}_n]$, and the collected data are the observations $\bm{y}_{\bm{A}_{1:n}}$ taken from these masks. Using the chain rule in \eqref{eq:ch}, we get the following decomposition:
\begin{equation}
{\rm H}(\bm{y}_{\bm{A}_{1:n}},\bm{X}) = {\rm H}(\bm{y}_{\bm{A}_{1:n}}) + {\rm H}(\bm{X}|\bm{y}_{\bm{A}_{1:n}}),
\label{eq:chmat}
\end{equation}
The first term in \eqref{eq:chmat} is the joint entropy of observations $\bm{y}_{\bm{A}_{1:n}}$ and matrix $\bm{X}$, the middle term ${\rm H}(\bm{y}_{\bm{A}_{1:n}})$ is the entropy of observations $\bm{y}_{\bm{A}_{1:n}}$ from masks $\bm{A}_{1:n}$, and the last term ${\rm H}(\bm{X}|\bm{y}_{\bm{A}_{1:n}})$ is the conditional entropy of $\bm{X}$ after observing $\bm{y}_{\bm{A}_{1:n}}$. Our goal is to design masks $\bm{A}_{1:n}$ which \textit{minimize} the conditional entropy ${\rm H}(\bm{X}|\bm{y}_{\bm{A}_{1:n}})$, thereby \textit{maximizing} the information gained on $\bm{X}$ after observing $\bm{y}_{\bm{A}_{1:n}}$.

The maximum entropy principle can then be derived as follows. Applying the chain rule again to the joint entropy ${\rm H}(\bm{y}_{\bm{A}_{1:n}},\bm{X})$ in \eqref{eq:chmat}, we get:
\begin{align}
{\rm H}(\bm{y}_{\bm{A}_{1:n}},\bm{X}) &= {\rm H}(\bm{X}) + {\rm H}(\bm{y}_{\bm{A}_{1:n}}|\bm{X}) \tag{by \eqref{eq:ch}}\\
& = {\rm H}(\bm{X}) + {\rm H}(\mathcal{A}(\bm{X}) + \boldsymbol{\epsilon}|\bm{X}) \tag{by \eqref{eq:mrvec}}\\
& = {\rm H}(\bm{X}) + {\rm H}(\boldsymbol{\epsilon}|\bm{X}) \tag{$\mathcal{A}(\bm{X})$ is fixed given $\bm{X}$}\\
& = {\rm H}(\bm{X}) + {\rm H}(\boldsymbol{\epsilon}). \tag{$\boldsymbol{\epsilon}$ and $\bm{X}$ are independent}
\end{align}
The key observation here is that the final quantity ${\rm H}(\bm{X}) + {\rm H}(\boldsymbol{\epsilon})$ -- the sum of entropies for matrix $\bm{X}$ and measurement noise $\boldsymbol{\epsilon}$ -- \textit{does not depend on the choice of masks $\bm{A}_{1:n}$}. Returning to \eqref{eq:chmat}, this means the left-hand side of \eqref{eq:chmat} also does not depend on $\bm{A}_{1:n}$, and hence the masks $\bm{A}_{1:n}$ which \textit{minimize} ${\rm H}(\bm{X}|\bm{y}_{\bm{A}_{1:n}})$ in \eqref{eq:chmat} also \textit{maximize} ${\rm H}(\bm{y}_{\bm{A}_{1:n}})$ as well. This is precisely the maximum entropy principle for matrix recovery -- \textit{a mask design which maximizes the entropy of observations $\bm{y}_{\bm{A}_{1:n}}$ in turn maximizes information gain on $\bm{X}$}. Computationally, such a principle allows us to employ the simpler entropy term ${\rm H}(\bm{y}_{\bm{A}_{1:n}})$ as an efficient proxy for the desired entropy term ${\rm H}(\bm{X}|\bm{y}_{\bm{A}_{1:n}})$, which is much more complicated and difficult to minimize in high-dimensions.

The maximization of observation entropy ${\rm H}(\bm{y}_{\bm{A}_{1:n}})$ may also lead to reductions in recovery error for $\bm{X}$, which is ultimately the desired goal. By the maximum entropy principle, maximizing ${\rm H}(\bm{y}_{\bm{A}_{1:n}})$ is equivalent to minimizing the conditional entropy ${\rm H}(\bm{X}|\bm{y}_{\bm{A}_{1:n}})$. The following lower bound (Equation 27 in \cite{Pra2010}) then connects this conditional entropy with expected recovery error $\mathbb{E}[\|\bm{X}-\tilde{\bm{X}}\|_F^2|\bm{y}_{\bm{A}_{1:n}}]$, $\tilde{\bm{X}} = \mathbb{E}[\bm{X}|\bm{y}_{\bm{A}_{1:n}}]$:
\begin{equation}
\mathbb{E}[\|\bm{X}-\tilde{\bm{X}}\|_F^2|\bm{y}] \geq \frac{1}{2\pi e}\exp\left\{ 2{\rm H}(\bm{X}|\bm{y}_{\bm{A}_{1:n}}) \right\}.
\label{eq:entmmse}
\end{equation}
In other words, by designing masks which maximize observation entropy, one hopes to achieve lower expected recovery errors on $\bm{X}$ (see \cite{Cea2012,Dea2013,Wea2017} for further justification). As before, the key advantage in working with observational entropy ${\rm H}(\bm{y}_{\bm{A}_{1:n}})$ is that it offers a simple and insightful expression for mask construction, whereas the error term $\mathbb{E}[\|\bm{X}-\tilde{\bm{X}}\|_F^2|\bm{y}]$ is much more cumbersome to optimize, particularly in high-dimensions.

%Now, note that a design which \textit{maximizes} the expected (Shannon) information gained on parameters $\Theta$ is precisely a design which \textit{minimizes} the expected entropy of $\Theta$, $\mathbb{E}_{\mathcal{Y}}\{ {\rm H}(\Theta|\mathcal{Y},\mathcal{D}) \}$. Assuming ${\rm H}(\mathcal{Y},\Theta|\mathcal{D})$ is bounded and functionally independent of design $\mathcal{D}$ (such a condition holds approximately for a wide range of applications, see \cite{SW2000} and Section 6.2 of \cite{Sea2013}), the decomposition in \eqref{eq:expent} shows that a design which maximizes information on $\Theta$ is a one which maximizes ${\rm H}(\mathcal{Y}|\mathcal{D})$, the entropy of the observed data. This is precisely the  \textit{maximum entropy principle}: \textit{to maximize information gain, a design scheme should maximize the entropy of the resulting observations}. A similar principle can be shown to hold \cite{Sea2013} even when the parameters $\Theta$ are assumed to be fixed, not random. For the problem at hand, this maximum entropy principle suggests that masks $\bm{A}_{1:n}$ which maximize the entropy of the indirect observations, $\bm{y}$, in turn maximize information on the underlying matrix $\bm{X}$, which is as desired.

\subsection{Maximum entropy masks}
\label{sec:maxent}
{We now explore further the properties of these ``maximum entropy masks'', i.e., the masks $\bm{A}_{1:n}$ which maximize observation entropy ${\rm H}(\bm{y}_{\bm{A}_{1:n}})$. Unfortunately, when the true subspaces ($\mathcal{U}, \mathcal{V}$) are unknown (and assumed to be random), this entropy term cannot be evaluated in closed form. To derive a closed-form expression, suppose for now fixed subspaces ($\mathcal{U}, \mathcal{V}$); this will be relaxed later. Let ${\rm E}_{\mathcal{U}, \mathcal{V}}(\bm{y}_{\bm{A}_{1:n}}) := \exp\{{\rm H}(\bm{y}_{\bm{A}_{1:n}}|\mathcal{P}_{\mathcal{U}},\mathcal{P}_{\mathcal{V}})\}$ be the exponential of the conditional entropy\footnote{{Here, the exponential entropy (or exp-entropy) allows for closed form derivations throughout the paper. The maximum entropy principle holds for either entropy or exp-entropy, since the exponential function is monotone.}} for observations $\bm{y}_{\bm{A}_{1:n}}$, given subspaces ($\mathcal{U}, \mathcal{V}$). Applying Lemma \ref{lem:var}, we obtain the following simple expression for ${\rm E}_{\mathcal{U}, \mathcal{V}}(\bm{A}_{1:n})$:
\begin{equation}
{\rm E}_{\mathcal{U}, \mathcal{V}}(\bm{A}_{1:n}) \propto \det\{\sigma^2 \bm{R}_n(\bm{A}_{1:n}) + \eta^2 \bm{I}\},
\label{eq:ent}
\end{equation}
where $\bm{R}_n(\bm{A}_{1:n}) = [\langle \mathcal{P}_{\mathcal{U}} \bm{A}_i \mathcal{P}_{\mathcal{V}}, \mathcal{P}_{\mathcal{U}} \bm{A}_j \mathcal{P}_{\mathcal{V}} \rangle_F]_{i,j=1}^n$ is the matrix of inner-products in \eqref{eq:rn}. This closed form is a direct result of the Gaussian property of $\bm{X}$ from the SMG model.}

\begin{figure}[t]
\centering
\includegraphics[width=0.3\textwidth]{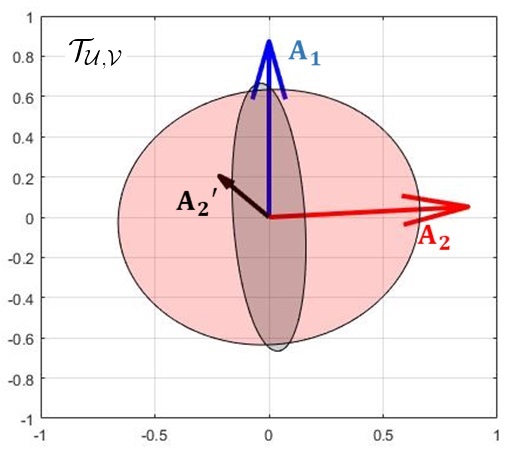}
\caption{The covariance matrices for the red and blue masks (red ellipse), and for the black and blue masks (black ellipse) from Figure \ref{fig:proj}.}
\label{fig:det}
\vspace{-0.2in}
\end{figure}

{The masks which maximize exp-entropy ${\rm E}_{\mathcal{U}, \mathcal{V}}(\bm{A}_{1:n})$ in \eqref{eq:ent} enjoy a geometric interpretation. Here, ${\rm E}_{\mathcal{U}, \mathcal{V}}(\bm{A}_{1:n})$ can be viewed as the \textit{volume} of the covariance matrix formed by the projected masks $\{\mathcal{P}_{\mathcal{U}} \bm{A}_i \mathcal{P}_{\mathcal{V}}\}_{i=1}^n$ on $\mathcal{T}_{\mathcal{U},\mathcal{V}}$. Figure \ref{fig:det} visualizes this using the previous example in Figure \ref{fig:proj}. Here, the red ellipse corresponds to the covariance matrix for the red and blue projected vectors, and the black ellipse corresponds to the covariance matrix for the black and blue projected vectors. Clearly, the red ellipse has a larger volume than the black ellipse; by sampling the red and blue masks, the resulting observations then have greater entropy than that from the black and blue masks. By the maximum entropy principle, the former yields more information on $\bm{X}$ than the latter. Viewed this way, the maximum entropy masks can be seen as masks which maximize the volume of their covariance matrix, after accounting for its similarity with the subspaces of $\bm{X}$.

We would like to comment that, for the \textit{initial} mask design of $\bm{A}_{1:n}$, the closed-form exp-entropy in \eqref{eq:ent} is \textit{not} a suitable optimization criterion, because one typically has little-to-no information on subspaces ($\mathcal{U}, \mathcal{V}$) prior to data. (In the rare instance where ($\mathcal{U}, \mathcal{V}$) are known with certainty prior to data, the masks maximizing \eqref{eq:ent} can be obtained from the first $n$ principal components of $\text{Var}\{\text{vec}(\bm{X})\}$; see Section \ref{sec:seq} for details.) Our strategy for tackling this problem is as follows: we will first derive a lower bound (Theorem \ref{thm:inient}) on the conditional exp-entropy ${\rm E}_{\mathcal{U}, \mathcal{V}}(\bm{A}_{1:n})$ via its closed form \eqref{eq:ent}, then make a connection to Grassmann packings by generalizing this bound under uniform priors on ($\mathcal{U}, \mathcal{V}$). This motivates an efficient initial mask construction via subspace packings, with the resulting masks approximately maximizing the (unconditional) observation entropy ${\rm H}(\bm{y}_{\bm{A}_{1:n}})$, which is complicated and has no closed form. Section \ref{sec:ini} provides further details on this construction.

For the \textit{sequential} mask design of $\bm{A}_{n+1}, \bm{A}_{n+2}, \cdots$ given initial masks $\bm{A}_{1:n}$, the exp-entropy in \eqref{eq:ent} again requires modification. First, as more data are collected on $\bm{X}$, more accurate estimates can be obtained on subspaces ($\mathcal{U}, \mathcal{V}$); a sequential scheme should incorporate this \textit{adaptive} learning on subspaces to guide active sampling. Second, a good sequential design strategy should encourage subsequent masks to sample directions which are \textit{unexplored} by prior masks $\bm{A}_{1:n}$. Our strategy is as follows: we will first use the nuclear-norm minimization in \eqref{eq:nuc} to obtain subspace estimates $(\hat{\mathcal{U}}_n,\hat{\mathcal{V}}_n)$, then employ a sequential modification of \eqref{eq:ent} with plug-in estimates $(\hat{\mathcal{U}}_n,\hat{\mathcal{V}}_n)$ to derive an efficient, adaptive design algorithm. From Lemma \ref{thm:mle}, this can be seen as an (approximate) MAP-guided active sampling algorithm for matrix recovery; more on this in Sections \ref{sec:seq} and \ref{sec:seqc}.}

%{Strategy for initial design (no information on subspaces). Mention PCA as a solution to (14), but not the problem?}
%
%{Strategy for sequential design (adaptive estimates of subspaces). This can be viewed as a MAP-guided active mask design scheme using Lemma 2.}

\section{Insights on initial mask design}
\label{sec:ini}
Consider first the \textit{initial} design problem for masks $\bm{A}_{1:n}$, given no prior knowledge on the subspaces of $\bm{X}$. {To reflect this lack of knowledge, we make two intuitive assumptions:
\bi[leftmargin=10pt]
\item \textbf{(A1)}: Independent, uniform priors on $\mathcal{P}_{\mathcal{U}}$ and $\mathcal{P}_{\mathcal{V}}$ over the Grassmann manifolds $\mathcal{G}_{R,m_1 - R}$ and $\mathcal{G}_{R,m_2-R}$, respectively.
\item \textbf{(A2)}: Initial masks $\bm{A}_{1:n}$ follow the singular-value decomposition (SVD) form:
\begin{equation}
\bm{A}_i = \bm{R}_i \boldsymbol{\Lambda}_i \bm{S}_i^T, \quad i = 1, \cdots, n,
\label{eq:inistr}
\end{equation}
where the weight matrices follow $\boldsymbol{\Lambda}_i = R^{-1/2} \bm{I}$.
\ei
Assumption \textbf{(A1)} reflects the prior belief that all subspaces in $\bm{X}$ are equally likely before observing data. Assumption \textbf{(A2)} reflects the belief that all subspaces are weighed equally prior to data. Here, the scaling factor $R^{-1/2}$ on $\boldsymbol{\Lambda}_i$ ensures the unit power constraint is satisfied.}

Under \textbf{(A1)} and \textbf{(A2)}, we show the problem of designing initial masks under maximum entropy is related to the problem of subspace packings. We first provide a brief review of block coherence and subspace packings, then derive the link between initial design and subspace packings via a lower bound on \eqref{eq:ent}.

\subsection{Block coherence and subspace packings}
Define first an \textit{$R$-frame} in $\mathbb{R}^{m}$ (see \cite{BM2012}) -- a matrix $\bm{F} \in \mathbb{R}^{m \times R}$ with orthonormal columns, i.e., $\bm{F}^T\bm{F} = \bm{I}$. We employ two metrics to quantify the ``closeness'' between two frames, both of which are defined below:
\begin{definition}[Worst-case and avg. block coherence; \cite{BM2012}]
Let $\bm{F}_{1:n} = [\bm{F}_1 \; \bm{F}_2 \; \cdots \; \bm{F}_n] \in \mathbb{R}^{m \times nR}$ be a collection of $R$-frames in $\mathbb{R}^{m}$, where $m \geq 2R$, and let $\| \cdot \|_2$ be the spectral norm. The \textup{worst-case block coherence} of $\bm{F}_{1:n}$ is defined as:
\begin{equation}
\mu(\bm{F}_{1:n}) := \max_{i \neq j} \|\bm{F}_i^T \bm{F}_j\|_2,
\label{eq:wccoh}
\end{equation}
and the \textup{average block coherence} of $\bm{F}_{1:n}$ is defined as:
\begin{equation}
a(\bm{F}_{1:n}) := \frac{1}{n-1} \max_i \Big\|\sum_{j:j \neq i} \bm{F}_i^T \bm{F}_j\Big\|_2.
\label{eq:avcoh}
\end{equation}
\label{def:blcoh}
\end{definition}
\noindent Both the worst-case block coherence $\mu(\bm{F}_{1:n})$ and average block coherence $a(\bm{F}_{1:n})$ play a role in quantifying the recovery performance of block compressive sensing methods \cite{Eea2010}; the lower the coherence, the easier recovery becomes.

\begin{figure}[t]
\centering
\includegraphics[width=0.3\textwidth]{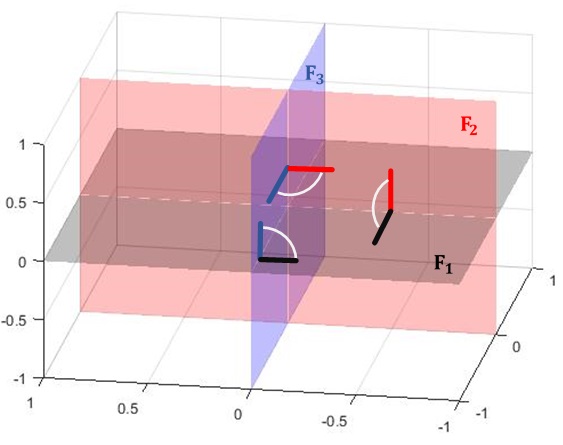}
\caption{A visualization of an optimal Grassmann packing for $n=3$ frames, each in $R=2$ dimensions. White arcs denote principal angles between any two of the three subspaces.}
\label{fig:pack}
\vspace{-0.2in}
\end{figure}

These coherence metrics also have an appealing geometric connection to the problem of optimal \textit{Grassmann packings} -- the packing of $R$-dim. subspaces in $\mathbb{R}^m$. In particular, \cite{Cea2015} shows that the subspace packing which \textit{maximizes} the minimum {nonzero} principal angle between any two subspaces, corresponds to frames $\bm{F}_{1:n}$ which \textit{minimize} the worst-case block coherence $\mu(\bm{F}_{1:n})$. Figure \ref{fig:pack} visualizes this connection using $n=3$ frames, each in $R=2$ dimensions. One sees that the principal angles between any two of the three subspaces are \textit{maximized}, so the frames $\bm{F}_{1:3}$ for these subspaces \textit{minimize} the worst-case coherence $\mu(\bm{F}_{1:3})$. In the same way, the frames which minimize average coherence $a(\bm{F}_{1:n})$ corresponds to a packing which minimizes some {averaged function} of the principal angles between subspaces.

\subsection{Initial masks and subspace packings}
\label{sec:inicoh}
With this in hand, we can now establish an interesting connection between maximum entropy masks and the block {coherence of their corresponding} frames, under the assumption of no prior knowledge on $\bm{X}$. Consider first a lower bound on the conditional exp-entropy ${\rm E}_{\mathcal{U},\mathcal{V}}(\bm{A}_{1:n})$ in \eqref{eq:ent}:
\begin{theorem}[Lower bound on exp-entropy]
Suppose $\mathcal{U}$ and $\mathcal{V}$ are fixed. Under \textbf{\textup{(A2)}}, the exp-entropy ${\rm E}_{\mathcal{U},\mathcal{V}}(\bm{A}_{1:n})$ can be lower bounded as:
\small
\begin{equation}
{\rm E}_{\mathcal{U},\mathcal{V}}^{1/n}(\bm{A}_{1:n}) \geq \min_i \left[ {\textup{Var}(y_i)} - \frac{\sigma^2(n-1)}{2} \left\{ \xi_{i,\mathcal{U}}(\bm{R}_{1:n}) + \xi_{i,\mathcal{V}}(\bm{S}_{1:n}) \right\} \right],
\label{eq:lb}
\end{equation}
\normalsize
where $\textup{Var}(y_i) = \sigma^2 \|\mathcal{P}_{\mathcal{U}} \bm{A}_i \mathcal{P}_{\mathcal{V}}\|_F^2 + \eta^2$ (see Lemma \ref{lem:var}), and:
\begin{equation}
\xi_{i,\mathcal{U}}(\bm{R}_{1:n}) := \max_{j:j \neq i} \| (\mathcal{P}_{\mathcal{U}} \bm{R}_i)^T (\mathcal{P}_{\mathcal{U}} \bm{R}_j) \|_2^2.
\label{eq:xi}
\end{equation}
\label{thm:inient}
\end{theorem}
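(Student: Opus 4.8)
The plan is to work directly with the closed-form exp-entropy in \eqref{eq:ent}, writing $\bm{M} := \sigma^2 \bm{R}_n(\bm{A}_{1:n}) + \eta^2 \bm{I}$, so that ${\rm E}_{\mathcal{U},\mathcal{V}}(\bm{A}_{1:n}) = \det(\bm{M})$ and the left-hand side of \eqref{eq:lb} is $\det(\bm{M})^{1/n}$. Since $\bm{M}$ is a symmetric positive-definite covariance matrix, $\det(\bm{M})^{1/n}$ is the geometric mean of its positive eigenvalues and is therefore bounded below by the smallest: $\det(\bm{M})^{1/n} \geq \lambda_{\min}(\bm{M})$. Thus it suffices to lower bound $\lambda_{\min}(\bm{M})$ by the right-hand side of \eqref{eq:lb}.

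For $\lambda_{\min}(\bm{M})$ I would invoke the Gershgorin circle theorem: every eigenvalue of the symmetric matrix $\bm{M}$ lies in a disk centered at some diagonal entry $M_{ii}$ with radius $\sum_{j\neq i}|M_{ij}|$, yielding $\lambda_{\min}(\bm{M}) \geq \min_i \bigl[ M_{ii} - \sum_{j\neq i}|M_{ij}| \bigr]$. By Lemma \ref{lem:var} the diagonal entries are exactly $M_{ii} = \sigma^2 \|\mathcal{P}_{\mathcal{U}} \bm{A}_i \mathcal{P}_{\mathcal{V}}\|_F^2 + \eta^2 = \textup{Var}(y_i)$, while the off-diagonal entries are $M_{ij} = \sigma^2 \langle \mathcal{P}_{\mathcal{U}} \bm{A}_i \mathcal{P}_{\mathcal{V}}, \mathcal{P}_{\mathcal{U}} \bm{A}_j \mathcal{P}_{\mathcal{V}} \rangle_F$, since the $\eta^2 \bm{I}$ term touches only the diagonal. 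The whole task reduces to bounding the magnitude of these off-diagonal inner products.

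Here the SVD structure from \textbf{(A2)} does the work. Substituting $\bm{A}_i = R^{-1/2} \bm{R}_i \bm{S}_i^T$ and using symmetry of the projections gives $\mathcal{P}_{\mathcal{U}} \bm{A}_i \mathcal{P}_{\mathcal{V}} = R^{-1/2} (\mathcal{P}_{\mathcal{U}} \bm{R}_i)(\mathcal{P}_{\mathcal{V}} \bm{S}_i)^T$, so after a cyclic rearrangement of the trace the off-diagonal inner product becomes $R^{-1}\,\textup{tr}(\bm{B}_{ij} \bm{C}_{ij}^T)$, where $\bm{B}_{ij} := (\mathcal{P}_{\mathcal{U}} \bm{R}_i)^T (\mathcal{P}_{\mathcal{U}} \bm{R}_j)$ and $\bm{C}_{ij} := (\mathcal{P}_{\mathcal{V}} \bm{S}_i)^T (\mathcal{P}_{\mathcal{V}} \bm{S}_j)$ are the $R \times R$ overlap matrices whose spectral norms appear in $\xi_{i,\mathcal{U}}$ and $\xi_{i,\mathcal{V}}$. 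The crucial estimate is $|\textup{tr}(\bm{B}_{ij} \bm{C}_{ij}^T)| \leq R\,\|\bm{B}_{ij}\|_2\,\|\bm{C}_{ij}\|_2$, which follows from the von Neumann trace inequality (equivalently Hölder for Schatten norms, $|\textup{tr}(\bm{B}_{ij}\bm{C}_{ij}^T)| \leq \|\bm{B}_{ij}\|_2 \|\bm{C}_{ij}\|_* \leq R\|\bm{B}_{ij}\|_2\|\bm{C}_{ij}\|_2$) together with the fact that both matrices are $R \times R$ and so have at most $R$ singular values. The factor $R$ cancels $R^{-1}$ exactly, and AM--GM, $\|\bm{B}_{ij}\|_2 \|\bm{C}_{ij}\|_2 \leq \tfrac12(\|\bm{B}_{ij}\|_2^2 + \|\bm{C}_{ij}\|_2^2)$, converts the product into the symmetric sum. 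Bounding each of the $n-1$ summands $\|\bm{B}_{ij}\|_2^2$ and $\|\bm{C}_{ij}\|_2^2$ by their maxima $\xi_{i,\mathcal{U}}(\bm{R}_{1:n})$ and $\xi_{i,\mathcal{V}}(\bm{S}_{1:n})$ over $j \neq i$ gives $\sum_{j\neq i}|M_{ij}| \leq \tfrac{\sigma^2(n-1)}{2}\{\xi_{i,\mathcal{U}} + \xi_{i,\mathcal{V}}\}$; combining with the Gershgorin and geometric-mean bounds yields \eqref{eq:lb}.

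I expect the main obstacle to be precisely this off-diagonal estimate: one must recognize the Frobenius inner product as a trace of a product of two $R \times R$ overlap matrices and then apply a trace/Schatten-norm inequality sharp enough that the factor $R$ cancels $R^{-1}$ while producing exactly the \emph{spectral}-norm-squared quantities $\xi_{i,\mathcal{U}}, \xi_{i,\mathcal{V}}$. A naive Cauchy--Schwarz bound $|\langle \cdot,\cdot\rangle_F| \leq \|\cdot\|_F\|\cdot\|_F$ would instead produce Frobenius norms and a looser constant, so the von Neumann/Schatten route combined with AM--GM is the step that requires care.
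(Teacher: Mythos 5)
Your proof is correct and follows essentially the same route as the paper's: geometric mean of eigenvalues bounded below by the minimum eigenvalue, Gershgorin's circle theorem, then the off-diagonal covariances bounded via the SVD structure of \textbf{(A2)}, AM--GM, and a maximum over $j \neq i$. The only difference is the intermediate trace estimate: the paper uses the Frobenius Cauchy--Schwarz bound $|\textup{tr}(\bm{B}\bm{C}^T)| \leq \tfrac{1}{2}(\|\bm{B}\|_F^2 + \|\bm{C}\|_F^2)$ followed by $\|\cdot\|_F^2 \leq R\|\cdot\|_2^2$, which --- contrary to your closing remark about Cauchy--Schwarz being looser --- yields exactly the same constant as your von Neumann/Schatten route.
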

%\noindent Using the lower bound in Lemma \ref{thm:inient} as a proxy for the entropy ${\rm H}(\bm{A}_{1:n})$, the maximum entropy masks in \eqref{eq:maxent} can be approximated as:
%\begin{equation}
%\underset{\substack{\bm{A}_1, \cdots, \bm{A}_n \\ \|\bm{A}_i\|_F^2 \leq 1}}{\textup{Argmax}} \min_i \left\{ {\textup{Var}(y_i)} - \sigma^2{\Big\langle \mathcal{P}_{\mathcal{U}} \bm{A}_i \mathcal{P}_{\mathcal{V}}, \sum_{j:j \neq i} \mathcal{P}_{\mathcal{U}} \bm{A}_j \mathcal{P}_{\mathcal{V}} \Big\rangle_F} \right\}.
%\label{eq:maxentapp}
%\end{equation}

%In the absense of prior knowledge on $\bm{X}$, we make two assumptions which reflect this lack of knowledge. First, following Section \ref{sec:connuc}, we assume \textbf{(A1)} independent and uniform prior distributions for $\mathcal{P}_{\mathcal{U}}$ and $\mathcal{P}_{\mathcal{V}}$ over the Grassmann manifolds $\mathcal{G}_{R,m_1-R}$ and $\mathcal{G}_{R,m_2-R}$; this reflects the belief that all subspaces in $\bm{X}$ are equally likely to occur. Second, letting $\bm{A}_i = \bm{R}_i \boldsymbol{\Lambda}_i \bm{S}_i^T$ be the singular-value decomposition (SVD) of mask $\bm{A}_i$, it makes sense that all subspaces should be weighted equally for the initial masks $\bm{A}_{1:n}$ given no prior information on $\bm{X}$. We therefore assume \textbf{(A2)} the weight matrices follow the scaled identity matrix $\boldsymbol{\Lambda}_i = R^{-1/2} \bm{I}$, where the scaling factor $R^{-1/2}$ ensures the power constraint is satisfied.

The maximization of the lower bound in \eqref{eq:xi} (which serves as a proxy for ${\rm E}_{\mathcal{U},\mathcal{V}}(\bm{A}_{1:n})$ in \eqref{eq:ent}) can then be connected to the problem of subspace packing. Consider first the variance term $\textup{Var}(y_i) = \sigma^2 \|\mathcal{P}_{\mathcal{U}} \bm{A}_i \mathcal{P}_{\mathcal{V}}\|_F^2 + \eta^2$, which depends on mask $\bm{A}_i$ as well as projection matrices $(\mathcal{P}_{\mathcal{U}},\mathcal{P}_{\mathcal{V}})$. Under \textbf{(A1)} and \textbf{(A2)}, it is easy to see that the expected variance $\mathbb{E}_{\mathcal{P}_{\mathcal{U}},\mathcal{P}_{\mathcal{V}}}\textup{Var}(y_i)$ is constant for any $\bm{A}_i$, $i = 1, \cdots, n$, since uniform priors on $\mathcal{G}_{R,m_1-R}$ and $\mathcal{G}_{R,m_2-R}$ are rotationally invariant. Next, applying \textbf{(A1)} to the terms $\xi_{i,\mathcal{U}}(\bm{R}_{1:n})$ and $\xi_{i,\mathcal{V}}(\bm{S}_{1:n})$ in \eqref{eq:lb}, it follows that a mask design $\bm{A}_{1:n}$ maximizing the lower bound in \eqref{eq:lb} under \textbf{(A1)} and \textbf{(A2)} should have frames $\bm{R}_{1:n}$ and $\bm{S}_{1:n}$ from \eqref{eq:inistr} which jointly minimize:
\begin{equation}
\max_{i \neq j} \|\bm{R}_i^T\bm{R}_j\|_2 \quad \text{and} \quad \max_{i \neq j} \|\bm{S}_i^T\bm{S}_j\|_2.
\end{equation}
\noindent In other words, given no prior information on $\bm{X}$, the initial masks $\bm{A}_{1:n}$ which \textit{maximize} observation entropy should have \textit{low} worst-case block coherences for its frames.

This suggests the following initial mask construction:
\begin{equation}
\bm{A}_i = \bm{R}_i^* \boldsymbol{\Lambda}_i (\bm{S}_i^*)^T = R^{-1/2} \bm{R}_i^* (\bm{S}_i^*)^T, \quad i = 1, \cdots, n,
\label{eq:iniform}
\end{equation}
where $\bm{R}_{1:n}^* = [\bm{R}_1^* \; \cdots \; \bm{R}_n^*]$ and $\bm{S}_{1:n}^* = [\bm{S}_1^* \; \cdots \; \bm{S}_n^*]$ are $R$-frames in $\mathbb{R}^{m_1}$ and $\mathbb{R}^{m_2}$ which minimize their corresponding block coherences. Figure \ref{fig:frpack} visualizes this construction using $n=3$ frames, with matrix dimensions $m_1 = m_2 = 3$ and rank $R=2$. By restricting row and column frames $\bm{R}_{1:n}^*$ and $\bm{S}_{1:n}^*$ to have low block coherence, the row and column spaces for initial masks are well spread-out in terms of their principal angles. These packed frames are then combined one-by-one via matrix multiplication to form initial masks. Given no prior information on $\bm{X}$, the above arguments (from \eqref{eq:lb} and the maximum entropy principle) suggest that initial masks constructed this way (i.e., with \textit{well-packed} row and column frames) can yield \textit{near-maximal information} on $\bm{X}$. We introduce two methods in Section \ref{sec:alg} to construct the well-packed frames $\bm{R}_{1:n}^*$ and $\bm{S}_{1:n}^*$ in \eqref{eq:iniform}, using state-of-the-art algorithms for optimal subspace packings.

\begin{figure}[t]
\centering
\includegraphics[width=0.3\textwidth]{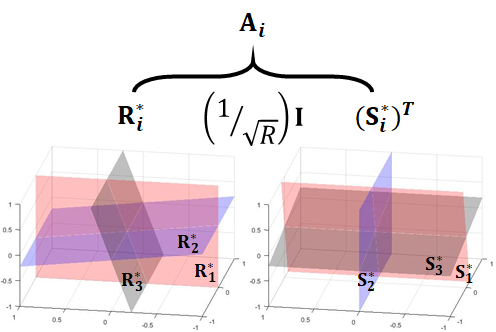}
\caption{A visualization of the mask construction in \eqref{eq:iniform}, highlighting the link between maximum entropy masks and the subspace packing problem.}
\label{fig:frpack}
\vspace{-0.2in}
\end{figure}

%The initial mask construction in \eqref{eq:iniform} also admits an intuitive geometric interpretation in terms of Grassmann packings. 

%This can be explained in two steps. First, using the spectral norm as an approximation for the trace operator in \eqref{eq:con1} and \eqref{eq:con2}, we see that the desired term $\Big\langle \mathcal{P}_{\mathcal{U}} \bm{A}_i \mathcal{P}_{\mathcal{V}}, \sum_{j \neq i} \mathcal{P}_{\mathcal{U}} \bm{A}_j \mathcal{P}_{\mathcal{V}} \Big\rangle_F$ is approximately the average block coherence 
%of the orthogonal row-space frames $\{ \widetilde{\bm{U}}_j \}_{j=1}^n$ and column-space frames $\{ \widetilde{\bm{V}}_j \}_{j=1}^n$. Second, when these row and column spaces are uniformly distributed, one then expects that the masks constructed using \eqref{eq:iniform} -- with $\bm{U}_{1:n}^*$ and $\bm{V}_{1:n}^*$ minimizing average block coherences -- to have small objective values for $\max_i \mathbb{E} \left\{ {\Big\langle \mathcal{P}_{\mathcal{U}} \bm{A}_i \mathcal{P}_{\mathcal{V}}, \sum_{j \neq i} \mathcal{P}_{\mathcal{U}} \bm{A}_j \mathcal{P}_{\mathcal{V}} \Big\rangle_F} \right\}$. The scaling matrix $\bm{J}$ ensures the resulting masks satisfy the power constraint $\|\bm{A}_i\|_F^2 = 1$.

%\bi
%\item Visualize the surrogate connection to the Grassmann packing problem for row and column spaces:
%\bi 
%\item Show $R=2$ and $m=3$ in a figure
%\item ... then put both together into a mask
%\ei
%\ei

\section{Insights on sequential mask design}
\label{sec:seq}
Consider next the case where samples $\bm{y}_n$ have been observed from masks $\bm{A}_{1:n}$, and {suppose a sequence of point estimates $( \hat{\mathcal{U}}_n,\hat{\mathcal{V}}_n )_{n=1, 2, \cdots}$ is obtained on $(\mathcal{U},\mathcal{V})$ from observations $(\bm{y}_n)_{n=1,2,\cdots}$} (more on this in Section \ref{sec:fullalg}). A sequential maximization of the exp-entropy \eqref{eq:ent} yields:
\begin{align}
\begin{split}
\small
&\bm{A}^*_{n+1} := \underset{\bm{A} \in \mathbb{R}^{m_1 \times m_2}, \; \|\bm{A}\|_F^2 \leq 1}{\textup{argmax}} \; {\rm E}_{\hat{\mathcal{U}}_n,\hat{\mathcal{V}}_n}([\bm{A}_{1:n} \; \bm{A}]) \\
& = \underset{\|\bm{A}\|_F^2 \leq 1}{\textup{argmax}} \; \det\{  \sigma^2 \hat{\bm{R}}_{n+1}([\bm{A}_{1:n} \; \; \bm{A}]) + \eta^2 \bm{I}  \}.
\label{eq:seqent}
\end{split}
\end{align}
\normalsize
Here, ${\rm E}_{\hat{\mathcal{U}}_n,\hat{\mathcal{V}}_n}([\bm{A}_{1:n} \; \bm{A}])$ is the joint exp-entropy of observations from masks $\bm{A}_{1:n}$ and $\bm{A}$ with $(\hat{\mathcal{U}}_n, \hat{\mathcal{V}}_n)$ as plug-in estimates for $(\mathcal{U},\mathcal{V})$, and $\hat{\bm{R}}_{n+1}([\bm{A}_{1:n} \; \bm{A}])$ is the correlation matrix in \eqref{eq:rn} with plug-in estimates $(\hat{\mathcal{U}}_n, \hat{\mathcal{V}}_n)$. Equation \eqref{eq:seqent} can be viewed as an information-greedy way to construct measurement masks.

%By the chain rule of conditional entropy, it follows that ${\rm H}(\bm{A}_{1:n},\bm{A}) = {\rm H}(\bm{A}_{1:n}) + {\rm H}(\bm{A}|\bm{A}_{1:n})$, so the sequential optimization in \eqref{eq:seqent} can be viewed as finding the next mask which maximize ${\rm H}(\bm{A}|\bm{A}_{1:n})$, the conditional entropy of a new sample using mask $\bm{A}$ given prior observations $\bm{y}$.

Using the Schur complement \cite{HK1971}, \eqref{eq:seqent} can be further simplified as follows:
\begin{lemma}[Sequential mask optimization]
The optimization in \eqref{eq:seqent} can be rewritten as:
\small
\begin{align}
\begin{split}
\underset{\|\bm{A}\|_F^2 \leq 1}{\textup{argmax}} \; \left\{ \|\mathcal{P}_{\hat{\mathcal{U}}_n} \bm{A} \mathcal{P}_{\hat{\mathcal{V}}_n}\|_F^2 - \hat{\bm{r}}_n^T(\bm{A}) [\hat{\bm{R}}_n(\bm{A}_{1:n}) + \gamma^2 \bm{I}]^{-1} \hat{\bm{r}}_n(\bm{A}) \right\},
\label{eq:seqentform}
\end{split}
\end{align}
\normalsize
where $\hat{\bm{r}}_n(\bm{A})$ is the correlation vector in \eqref{eq:rn} with plug-in estimates $(\hat{\mathcal{U}}_n, \hat{\mathcal{V}}_n)$.
%\begin{align}
%\begin{split}
%\small
%{\rm H}(\bm{A}_{1:n},\bm{A}) &= \sigma^2{\rm H}(\bm{A}_{1:n}) \left\{ \|\mathcal{P}_{\mathcal{U}} \bm{A} \mathcal{P}_{\mathcal{V}}\|_F^2 \right.\\
%& \left. \quad \quad + \gamma^2 - \bm{r}_n^T(\bm{A}) (\bm{R}_{n}(\bm{A}_{1:n}) + \gamma^2\bm{I})^{-1} \bm{r}_n(\bm{A}) \right\}.
%\label{eq:seqent2}
%\end{split}
%\end{align}
%\normalsize
\label{lem:seqent}
\end{lemma}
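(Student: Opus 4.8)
The plan is to exploit the block structure of the augmented correlation matrix $\hat{\bm{R}}_{n+1}([\bm{A}_{1:n}\;\bm{A}])$ and then apply the Schur complement formula for block determinants, as the reference to \cite{HK1971} suggests. First I would partition this $(n+1)\times(n+1)$ matrix according to the definition in \eqref{eq:rn}: its leading $n\times n$ block is exactly $\hat{\bm{R}}_n(\bm{A}_{1:n})$, which does not involve $\bm{A}$; its final off-diagonal column/row is the vector $\hat{\bm{r}}_n(\bm{A})=[\langle \mathcal{P}_{\hat{\mathcal{U}}_n}\bm{A}_i\mathcal{P}_{\hat{\mathcal{V}}_n},\, \mathcal{P}_{\hat{\mathcal{U}}_n}\bm{A}\mathcal{P}_{\hat{\mathcal{V}}_n}\rangle_F]_{i=1}^n$; and its bottom-right scalar entry is $\langle \mathcal{P}_{\hat{\mathcal{U}}_n}\bm{A}\mathcal{P}_{\hat{\mathcal{V}}_n},\mathcal{P}_{\hat{\mathcal{U}}_n}\bm{A}\mathcal{P}_{\hat{\mathcal{V}}_n}\rangle_F=\|\mathcal{P}_{\hat{\mathcal{U}}_n}\bm{A}\mathcal{P}_{\hat{\mathcal{V}}_n}\|_F^2$. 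Scaling by $\sigma^2$ and adding $\eta^2\bm{I}$ preserves this structure, yielding a leading block $\bm{P}=\sigma^2\hat{\bm{R}}_n+\eta^2\bm{I}$, off-diagonal $\bm{q}=\sigma^2\hat{\bm{r}}_n(\bm{A})$, and corner $s=\sigma^2\|\mathcal{P}_{\hat{\mathcal{U}}_n}\bm{A}\mathcal{P}_{\hat{\mathcal{V}}_n}\|_F^2+\eta^2$.

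Next I would note that $\bm{P}$ is positive definite: $\hat{\bm{R}}_n$ is a Gram matrix of the projected masks and hence positive semidefinite, while $\eta^2\bm{I}$ is positive definite since $\eta^2>0$. Thus $\bm{P}$ is invertible and the Schur complement identity $\det\begin{pmatrix}\bm{P}&\bm{q}\\\bm{q}^T&s\end{pmatrix}=\det(\bm{P})\,(s-\bm{q}^T\bm{P}^{-1}\bm{q})$ applies. Because $\det(\bm{P})=\det\{\sigma^2\hat{\bm{R}}_n+\eta^2\bm{I}\}$ does not depend on the new mask $\bm{A}$, maximizing the full determinant in \eqref{eq:seqent} over $\bm{A}$ is equivalent to maximizing the Schur complement $s-\bm{q}^T\bm{P}^{-1}\bm{q}$.

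Finally I would simplify this Schur complement. Writing $\bm{P}^{-1}=\sigma^{-2}[\hat{\bm{R}}_n+\gamma^2\bm{I}]^{-1}$ via $\gamma^2=\eta^2/\sigma^2$, the quadratic term collapses to $\bm{q}^T\bm{P}^{-1}\bm{q}=\sigma^2\,\hat{\bm{r}}_n^T(\bm{A})[\hat{\bm{R}}_n+\gamma^2\bm{I}]^{-1}\hat{\bm{r}}_n(\bm{A})$, so that $s-\bm{q}^T\bm{P}^{-1}\bm{q}=\sigma^2\{\|\mathcal{P}_{\hat{\mathcal{U}}_n}\bm{A}\mathcal{P}_{\hat{\mathcal{V}}_n}\|_F^2-\hat{\bm{r}}_n^T(\bm{A})[\hat{\bm{R}}_n+\gamma^2\bm{I}]^{-1}\hat{\bm{r}}_n(\bm{A})\}+\eta^2$. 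Dropping the additive constant $\eta^2$ and the positive multiplicative constant $\sigma^2$, neither of which affects the argmax, recovers \eqref{eq:seqentform} exactly.

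No step here is a genuine obstacle; the content is a direct block-determinant computation. The only care needed is bookkeeping — correctly reading off the three blocks from the definitions in \eqref{eq:rn}, verifying invertibility of $\bm{P}$ so the Schur complement is legitimate, and tracking the $\sigma^2$ rescaling when moving it inside the inverse, so that the residual constants ($\eta^2$ additive and $\sigma^2$ multiplicative) can be safely discarded from the optimization.
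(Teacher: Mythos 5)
Your proposal is correct and follows essentially the same route as the paper: partition the augmented correlation matrix, apply the Schur complement determinant identity, observe that the leading block's determinant is independent of $\bm{A}$, and discard the residual additive and multiplicative constants. The only cosmetic difference is that the paper factors out $\sigma^2$ from the whole determinant first (working with $\gamma^2 = \eta^2/\sigma^2$ throughout), whereas you carry $\sigma^2$ through the blocks and extract it from the inverse at the end; the two bookkeeping conventions are equivalent.
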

%\noindent Here, $
%\noindent As we show later in Section \ref{sec:seqc}, a key advantage of the reformulated problem \eqref{}
%\noindent Using this lemma, the optimization in \eqref{eq:seqent} then becomes:

The simplified problem in \eqref{eq:seqentform} can be interpreted as \textit{subspace matching} in the following sense. To simplify notation, suppose $(\hat{\mathcal{U}}_n,\hat{\mathcal{V}}_n) = (\mathcal{U},\mathcal{V})$. By maximizing the first term $\|\mathcal{P}_{\mathcal{U}} \bm{A} \mathcal{P}_{\mathcal{V}}\|_F^2$, one ensures that the projection of the new mask $\bm{A}$ (onto subspaces of $\bm{X}$) has large norm. This then encourages \textit{subspace matching} of the new mask $\bm{A}$ to the subspaces of $\bm{X}$ (see \cite{Sch1991, DM2008, Bea2010}). On the other hand, by minimizing the second term $\bm{r}_n^T(\bm{A}) [\bm{R}_n(\bm{A}_{1:n}) + \gamma^2 \bm{I}]^{-1} \bm{r}_n(\bm{A})$, we force the new mask $\bm{A}$ to investigate \textit{different} subspaces, previously unexplored by observed masks $\bm{A}_{1:n}$. In this sense, the sequential criterion \eqref{eq:seqentform} offers a balance between subspace exploration and exploitation for mask design. {A similar exploration-exploitation trade-off also arises in the context of active learning for multi-arm bandit problems \cite{Aea2002,Aea2012,Hea2012}.}

\begin{figure}[t]
\centering
\includegraphics[width=0.3\textwidth]{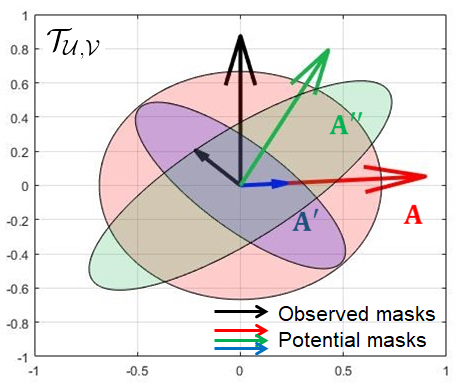}
\caption{Visualizing three potential new masks $\bm{A}$, $\bm{A}'$ and $\bm{A}''$ (red, blue and green vectors) and their covariance matrices, given two observed masks (black vectors).}
\label{fig:seqdet}
\vspace{-0.2in}
\end{figure}

The problem in \eqref{eq:seqentform} also enjoys a nice visualization using the earlier maximum entropy illustration (see Figure \ref{fig:det}). Again, suppose $(\hat{\mathcal{U}}_n,\hat{\mathcal{V}}_n) = (\mathcal{U},\mathcal{V})$ for simplicity. Recall that maximum entropy masks can be viewed as vectors which maximize the volume of their covariance matrix, after projection onto $\mathcal{T}_{\mathcal{U},\mathcal{V}}$. Take now the black and blue vectors in Figure \ref{fig:det} as initial masks; one then aims to select the next mask which maximizes the volume of the covariance matrix ellipse. Figure \ref{fig:seqdet} shows these two initial masks using black vectors, along with the projection of three potential mask choices $\bm{A}$, $\bm{A}'$ and $\bm{A}''$ onto $\mathcal{T}_{\mathcal{U},\mathcal{V}}$, using red, blue and green vectors. Comparing potential masks $\bm{A}$ (red) and $\bm{A}'$ (blue), we see that both $\bm{A}$ and $\bm{A}'$ have the same correlation with observed masks, but $\bm{A}$ has greater projected length. The resulting covariance matrix volume is therefore larger for $\bm{A}$ than for $\bm{A}'$, meaning $\bm{A}$ is a better sequential mask for recovering $\bm{X}$. Likewise, comparing $\bm{A}$ (red) and $\bm{A}''$ (green), we see that both masks have the same projected length, but $\bm{A}$ has smaller correlations with observed masks. The covariance matrix volume is larger for $\bm{A}$ than for $\bm{A}''$, meaning $\bm{A}$ is again a better sequential mask for recovering $\bm{X}$. In this sense, mask $\bm{A}$ satisfies the two-fold objective imposed by the two terms in \eqref{eq:seqentform}.

%Comparing the sequential problem \eqref{eq:seqentform} with the conditional expressions in \eqref{eq:condpar} from Lemma \ref{thm:conddist}, another way of viewing \eqref{eq:seqentform} is that it aims to find a new mask $\bm{A}$ which maximizes the \textit{conditional variance} of an observation, given prior masks $\bm{A}_{1:n}$. The first term encourages masks yielding observations with large \textit{variances} (see Section \ref{sec:priorobs}); the second term then aims to minimize the \textit{correlation} between a new observation from $\bm{A}$ and the observed sample $\bm{y}$, after accounting for its similarity with $\bm{X}$ (see Section \ref{sec:condobs}). As before, the criterion in \eqref{eq:seqentform} offers a balance between these two contrasting criteria.

{The sequential problem in \eqref{eq:seqentform} can also be viewed as a \textit{generalization} of principal components analysis (PCA) for mask design. To see this, first assume the true subspaces $(\mathcal{P}_{\mathcal{U}},\mathcal{P}_{\mathcal{V}})$ are known and fixed, and consider the ``PCA-like'' strategy for mask design:
\begin{equation}
\bm{A}_{n+1}' \leftarrow \argmax_{\substack{||\bm{A}||_F \leq 1,\\
\text{Cov}(y_{\bm{A}}, y_i) = 0, \; \forall i = 1, \cdots, n}} \text{Var}(y_{\bm{A}}|\mathcal{P}_{\mathcal{U}},\mathcal{P}_{\mathcal{V}}), \; n = 1, 2, \cdots,
\label{eq:pca}
\end{equation}
where $y_{\bm{A}}$ is a sample from mask $\bm{A}$. It is easy to see that the vectorized masks from \eqref{eq:pca} can be obtained via the principal components of $\bm{X}$ (i.e., the eigenvectors of the covariance matrix $\text{Var}(\bm{X})$). This PCA approach to mask design has two key restrictions: (a) the true subspaces $(\mathcal{P}_{\mathcal{U}},\mathcal{P}_{\mathcal{V}})$ must be \textit{known} and \textit{fixed} for the entire sequential procedure, and (b) all prior masks must be \textit{uncorrelated} given subspaces $(\mathcal{P}_{\mathcal{U}},\mathcal{P}_{\mathcal{V}})$. In practice, both $\mathcal{P}_{\mathcal{U}}$ and $\mathcal{P}_{\mathcal{V}}$ are unknown, and one needs to rely on adaptive point estimates from data. As the estimates $(\mathcal{P}_{\hat{\mathcal{U}}_n},\mathcal{P}_{\hat{\mathcal{V}}_n})$ change, both restrictions are violated, and so this PCA approach cannot be applied for adaptive mask design.

The formulation in \eqref{eq:seqentform} can be seen as a way to generalize the PCA approach \eqref{eq:pca} for \textit{active} matrix recovery. Using Lemma \ref{thm:conddist}, \eqref{eq:seqentform} can be rewritten as:
\begin{equation}
\bm{A}_{n+1}^* \leftarrow \argmax_{||\bm{A}||_F \leq 1} \; \text{Var}(y_{\bm{A}}|\bm{y}_n, \mathcal{P}_{\hat{\mathcal{U}}_n},\mathcal{P}_{\hat{\mathcal{V}}_n}).
\label{eq:genpca}
\end{equation}
In other words, the information-greedy sequential design \eqref{eq:seqentform} chooses the mask which maximizes the \textit{conditional} variance of an observation, given data $\bm{y}_n$ and current subspace estimates $(\hat{\mathcal{U}}_n,\hat{\mathcal{V}}_n)$. In the special case where (a) $(\mathcal{P}_{\hat{\mathcal{U}}_n}, \mathcal{P}_{\hat{\mathcal{V}}_n})$ are the true subspaces $(\mathcal{P}_{\mathcal{U}}, \mathcal{P}_{\mathcal{V}})$, and (b) all prior masks $\bm{A}_{1:n}$ are uncorrelated, \eqref{eq:genpca} reduces to the PCA construction \eqref{eq:pca}. Outside of this, \eqref{eq:genpca} offers a \textit{generalization} of \eqref{eq:pca} with two key advantages for active matrix recovery. First, by relaxing the uncorrelated mask constraint in \eqref{eq:pca}, \eqref{eq:genpca} permits a sequence of point estimates $( \mathcal{P}_{\hat{\mathcal{U}}_n},\mathcal{P}_{\hat{\mathcal{V}}_n} )_{n=1, 2, \cdots}$ to be plugged-in for mask construction. This then allows for \textit{adaptive} estimation of $(\mathcal{P}_{\mathcal{U}},\mathcal{P}_{\mathcal{V}})$, and thereby \textit{active} matrix recovery. The same cannot be done using the PCA construction, because the subspaces $(\mathcal{P}_{\mathcal{U}},\mathcal{P}_{\mathcal{V}})$ are neither known nor fixed. Second, as we show in Theorem \ref{thm:seqentopt}, \eqref{eq:genpca} yields a nice closed-form solution, which can be used for efficient and adaptive mask design in low-rank problems. Further details on this in Section \ref{sec:seqc}.}

%Such a conclusion is not too surprising, in light of the fact that the conditional expression in \eqref{eq:seqentform} is precisely (up to scaling factors) the \textit{conditional variance} of an indirect observation from $\bm{A}$, given observations $\bm{y}$ from $\bm{A}_n$. 

{As more data are collected, the sequence of point estimates $(\hat{\mathcal{U}}_n,\hat{\mathcal{V}}_n)_{n=1,2, \cdots}$ should provide increasingly accurate estimates for the true subspaces $(\mathcal{U},\mathcal{V})$. Our strategy for active mask design is to iterate the following two steps: (a) given data $\bm{y}_n$, compute the nuclear-norm estimate $\hat{\bm{X}}$ from \eqref{eq:nuc}, and obtain subspace estimates $(\hat{\mathcal{U}}_n,\hat{\mathcal{V}}_n)$ via an SVD on $\hat{\bm{X}}$, then (b) plug-in these estimates into \eqref{eq:seqentform} to construct the next mask. From Lemma \ref{thm:mle}, this can be viewed as an MAP-guided active learning scheme on $\bm{X}$. More on this in Section \ref{sec:fullalg}.}

\section{\texttt{MaxEnt} -- Constructing maximum entropy masks}
\label{sec:alg}
We now employ these insights to derive an efficient algorithm \texttt{MaxEnt} for constructing initial and adaptive masks.

\subsection{Initial mask construction}
\label{sec:inic}
Consider first the \textit{initial} design problem of masks $\bm{A}_{1:n}$ in the form of \eqref{eq:iniform} (see Section \ref{sec:ini}). We extend here two subspace packing algorithms from \cite{Cea2015} for constructing the low block coherence frames $\bm{R}^*_{1:n}$ and $\bm{S}^*_{1:n}$ in \eqref{eq:iniform}. The first method, the \textit{flipping construction} (\texttt{ini.flip}), can be used for all matrix dimensions $m_1 \times m_2$ and any initial guess of matrix rank $R_{ini}$. The second method, the \textit{Kerdock-Kronecker construction} (\texttt{ini.kk}), provides higher-quality masks than \texttt{ini.flip}, but can only be used for specific matrix dimensions $m_1 \times m_2$, initial rank $R_{ini}$ and initial sample size $n$. 

\subsubsection{Flipping construction}

\begin{algorithm}[t]
\caption{\texttt{flip}($\bm{R}_{1:n}$) -- Flipping algorithm}
\label{alg:flip}
\setlength{\leftmargini}{10pt}
\bi
\itemsep0em 
\item $\bm{R}_1^* \leftarrow \bm{R}_1$, $\bm{F}_1 \leftarrow \bm{R}_1$
\item \textbf{For} $k = 1, \cdots, n-1$:
\setlength{\leftmargini}{10pt}
\bi
\itemsep0em 
\item \textbf{if} $\|\bm{F}_k + \bm{R}_{k+1}\|_2 \leq \|\bm{F}_k - \bm{R}_{k+1}\|_2$\\
\quad \textbf{then} $\bm{R}_{k+1}^* \leftarrow \bm{R}_{k+1}$\\
\quad \textbf{else} $\bm{R}_{k+1}^* \leftarrow -\bm{R}_{k+1}$
\item $\bm{F}_{k+1}= \bm{F}_k + \bm{R}_{k+1}^*$
\ei
\item \textbf{return} $\bm{R}_{1:n}^* = [\bm{R}_1^* \; \bm{R}_2^* \cdots \bm{R}_n^*]$
\ei
\end{algorithm}

\begin{algorithm}[t]
\caption{\texttt{ini.flip}($m_1,m_2,n,R_{ini}$) -- Flipping construction of initial masks}
\label{alg:flipcon}
\setlength{\leftmargini}{10pt}
\bi
\itemsep0em 
\item Generate uniformly random frames $\bm{R}_{1:n}$ and $\bm{S}_{1:n}$
\item $\bm{R}_{1:n}^* \leftarrow \text{\texttt{flip}}(\bm{R}_{1:n})$, $\bm{S}_{1:n}^* \leftarrow \text{\texttt{flip}}(\bm{S}_{1:n})$
\item $\bm{A}_{1:n} \leftarrow [\bm{A}_1 \; \bm{A}_2 \cdots \bm{A}_n]$, where $\bm{A}_i \leftarrow R_{ini}^{-1/2 }\bm{R}_i^* (\bm{S}_i^*)^T$
\item \textbf{return} $\bm{A}_{1:n}$
\ei
\end{algorithm}

The flipping construction \texttt{ini.flip} relies on the flipping algorithm \texttt{flip} (Algorithm 1 in \cite{Cea2015}) to generate the row and column frames $\bm{R}_{1:n}^*$ and $\bm{S}_{1:n}^*$ in \eqref{eq:iniform}. Given a set of frames $\bm{R}_{1:n}$, \texttt{flip} can be seen as a post-processing step which reduces average block coherence, while retaining low worst-case block coherence. This is achieved by iteratively performing random flips of each frame in $\bm{R}_{1:n}$, and accept such flips only if it results in a reduction in average coherence. The proposed flipping construction then incorporates the frames into \texttt{flip} into \eqref{eq:iniform} to form the initial masks $\bm{A}_{1:n}$. Algorithms \ref{alg:flip} and \ref{alg:flipcon} outlines the details behind \texttt{flip} and \texttt{ini.flip}.

The following lemma from \cite{Cea2015} provides an upper bound guarantee for average block coherence of frames from \texttt{flip}:
\begin{lemma}[Avg. block coherence of \texttt{ini.flip}; Lemma 3.4 in \cite{Cea2015}]
The row and column frames $\bm{R}_{1:n}^*$ and $\bm{S}_{1:n}^*$ returned by \texttt{\textup{flip}} satisfy $a(\bm{R}_{1:n}^*) = a(\bm{S}_{1:n}^*) = (\sqrt{n}+1)/(n-1)$.
\label{lem:flip}
\end{lemma}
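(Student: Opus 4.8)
The plan is to reduce the average block coherence to the spectral norm of a single running sum of frames, and then to argue that the greedy sign-flipping in Algorithm~\ref{alg:flip} keeps that norm at most $\sqrt{n}$. Writing $\bm{F}_n = \sum_{j=1}^n \bm{R}_j^*$ for the final accumulator produced by \texttt{flip} and using $\bm{R}_i^{*T}\bm{R}_i^* = \bm{I}$, I would first observe that $\sum_{j:j\neq i}\bm{R}_i^{*T}\bm{R}_j^* = \bm{R}_i^{*T}(\bm{F}_n - \bm{R}_i^*) = \bm{R}_i^{*T}\bm{F}_n - \bm{I}$. Since each $\bm{R}_i^*$ is an $R$-frame we have $\|\bm{R}_i^{*T}\|_2 = 1$, so the triangle inequality gives $\big\|\sum_{j\neq i}\bm{R}_i^{*T}\bm{R}_j^*\big\|_2 \le \|\bm{F}_n\|_2 + 1$ for every $i$, whence $a(\bm{R}_{1:n}^*) \le (\|\bm{F}_n\|_2 + 1)/(n-1)$. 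The identical computation applies verbatim to the column frames $\bm{S}_{1:n}^*$, which accounts for the two equal quantities in the statement. Everything therefore reduces to the single bound $\|\bm{F}_n\|_2 \le \sqrt{n}$.

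To see why $\sqrt{n}$ is the correct scale, I would analyze the accumulator directionally. For any unit vector $\bm{u}$, the row $\bm{u}^T\bm{F}_n = \sum_j \pm\,\bm{u}^T\bm{R}_j$ is a signed sum of vectors each of length at most $\|\bm{R}_j\|_2 = 1$, and the total ``energy'' in direction $\bm{u}$ is controlled by $\sum_j \|\bm{u}^T\bm{R}_j\|_2^2 = \bm{u}^T\big(\sum_j \bm{R}_j\bm{R}_j^T\big)\bm{u} \le \big\|\sum_j \bm{R}_j\bm{R}_j^T\big\|_2 \le n$, since each $\bm{R}_j\bm{R}_j^T$ is a rank-$R$ orthogonal projection of spectral norm $1$. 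Thus if the signs were chosen greedily for the eventual top left-singular direction of $\bm{F}_n$, a vector flipping argument would immediately yield $\|\bm{F}_n\|_2 \le \sqrt{n}$, and substituting into the reduction gives $a(\bm{R}_{1:n}^*) = a(\bm{S}_{1:n}^*) \le (\sqrt{n}+1)/(n-1)$, matching the claimed value.

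The hard part will be that the flip in Algorithm~\ref{alg:flip} selects signs in a direction-agnostic way, and, unlike the vector case, the spectral norm does not obey the parallelogram law; consequently the naive per-step invariant $\|\bm{F}_{k+1}\|_2^2 \le \|\bm{F}_k\|_2^2 + 1$ can genuinely fail (adding an isometry to a matrix whose two leading singular directions are balanced inflates the norm by more than the parallelogram budget, as a small $2\times2$ example shows). The greedy rule nonetheless recovers on subsequent steps, so the correct argument must be amortized/global rather than a clean step-by-step induction, tracking either the top singular direction of the final sum or an energy potential tied to $\sum_j \bm{R}_j\bm{R}_j^T$ rather than $\|\bm{F}_k\|_2$ in isolation. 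This quantitative $\sqrt{n}$ control is exactly the content of the flipping analysis of \cite{Cea2015}, on which I would lean for the constant; the reduction above and the row/column symmetry are then routine.
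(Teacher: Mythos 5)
First, a point of reference: the paper does not prove this lemma at all --- it is imported verbatim as Lemma 3.4 of \cite{Cea2015}, and no argument for it appears in the appendix. So there is no in-paper proof to compare against, and your proposal has to stand on its own as a reconstruction of the cited result.

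Your reduction is correct and is the natural first move: with $\bm{F}_n = \sum_j \bm{R}_j^*$, the identity $\sum_{j\neq i}\bm{R}_i^{*T}\bm{R}_j^* = \bm{R}_i^{*T}\bm{F}_n - \bm{I}$ and $\|\bm{R}_i^{*T}\|_2 = 1$ give $a(\bm{R}_{1:n}^*) \le (\|\bm{F}_n\|_2+1)/(n-1)$, and the column case is identical. The genuine gap is exactly the step you flag and then defer: the bound $\|\bm{F}_n\|_2 \le \sqrt{n}$ is never established, and it is in fact \emph{false} for the greedy spectral-norm flipping on adversarially aligned inputs. Take $R=2$, $m=4$, $\bm{R}_1 = [\bm{e}_1 \; \bm{e}_2]$ and $\bm{R}_2 = [\bm{e}_1 \; -\bm{e}_2]$: both sign choices in Algorithm \ref{alg:flip} yield $\|\bm{F}_2\|_2 = 2 > \sqrt{2}$. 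This does not contradict the lemma itself (for $n\le 4$ one always has $a \le 1 \le (\sqrt{n}+1)/(n-1)$, so the statement is vacuous there), but it does refute the invariant your argument ``reduces everything to,'' and it shows that any correct proof must either be genuinely amortized over the whole run or exploit the fact that \texttt{ini.flip} feeds the flipper \emph{uniformly random} frames, for which such alignments occur with probability zero. Your directional-energy observation $\|\sum_j \bm{R}_j\bm{R}_j^T\|_2 \le n$ controls the signed sum only along a direction fixed \emph{in advance}, whereas the top singular direction of $\bm{F}_n$ is determined only after the signs are chosen, so it does not close the loop. Since you explicitly lean on \cite{Cea2015} for precisely this quantitative step, the proposal is --- like the paper --- ultimately a citation rather than a proof; the reduction you supply is sound but is the routine part. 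One last nit: the ``$=$'' in the statement should be ``$\le$'' (the surrounding prose calls it an upper bound), which is also all your reduction could ever deliver.
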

\noindent This average coherence upper bound provides an improvement over uniform random frames (see Section 4 of \cite{Cea2015}). Given the link between information and block coherence in Section \ref{sec:inicoh} (with average coherence a proxy for worst-case coherence), this lemma shows that masks constructed using \texttt{ini.flip} yield more initial information on $\bm{X}$ than random masks.

One advantage of \texttt{ini.flip} over \texttt{ini.kk} (introduced next) is that it can be used for \textit{any} matrix dimension or initial rank. However, when $m_1$, $m_2$ and $R_{ini}$ satisfy certain conditions, \texttt{ini.kk} can offer better recovery performance.

\subsubsection{Kerdock-Kronecker (KK) construction}
\label{sec:kkcon}
\begin{algorithm}[t]
\caption{\texttt{ini.kk}($m_1,m_2,n, R_{ini}$) -- Kerdock-Kronecker construction of initial masks}
\label{alg:kkcon}
\setlength{\leftmargini}{10pt}
\bi
\itemsep0em 
\item Generate Kerdock frames for $\bm{K}_{\bm{R}} \in \mathbb{R}^{m_1/R_{ini} \times n}$ and $\bm{K}_{\bm{S}} \in \mathbb{R}^{m_2/R_{ini} \times n}$
\item Generate uniformly random unitary matrices $\bm{Q}_{\bm{R}} \in \mathbb{R}^{R_{ini} \times R_{ini}}$ and $\bm{Q}_{\bm{S}}\in \mathbb{R}^{R_{ini} \times R_{ini}}$
\item $\bm{R}_{1:n}^* \leftarrow \bm{K}_{\bm{R}} \otimes \bm{Q}_{\bm{R}}$, $\bm{S}_{1:n}^* \leftarrow \bm{K}_{\bm{S}} \otimes \bm{Q}_{\bm{S}}$
\item $\bm{R}_{1:n}^* \leftarrow \text{\texttt{flip}}(\bm{R}_{1:n}^*)$, $\bm{S}_{1:n}^* \leftarrow \text{\texttt{flip}}(\bm{S}_{1:n}^*)$
\item $\bm{A}_{1:n} \leftarrow [\bm{A}_1 \; \bm{A}_2 \cdots \bm{A}_n]$, where $\bm{A}_i \leftarrow R_{ini}^{-1/2 }\bm{R}_i^* (\bm{S}_i^*)^T$
\item \textbf{return} $\bm{A}_{1:n}$
\ei
\end{algorithm}

The Kerdock-Kronecker (KK) construction, \texttt{ini.kk}, again uses the form in \eqref{eq:iniform}, but with $\bm{R}_{1:n}^*$ and $\bm{S}_{1:n}^*$ following the Kronecker form:
\begin{equation}
\bm{R}_{1:n}^* = \bm{K}_{\bm{R}} \otimes \bm{Q}_{\bm{R}}, \quad \bm{S}_{1:n}^* = \bm{K}_{\bm{S}} \otimes \bm{Q}_{\bm{S}}.
\label{eq:kron}
\end{equation}
Here, $\bm{K}_{\bm{R}} \in \mathbb{R}^{(m_1/R_{ini}) \times n}$ and $\bm{K}_{\bm{S}} \in \mathbb{R}^{(m_2/R_{ini}) \times n}$ are taken from the Kerdock family of frames \cite{Mea1977}, and $\bm{Q}_{\bm{R}}$, $\bm{Q}_{\bm{S}} \in \mathbb{R}^{R_{ini} \times R_{ini}}$ are independent and uniformly distributed unitary matrices. The following lemma shows that frames constructed this way enjoy low block coherence, both in the average-case and worst-case:
\begin{lemma}[Worst-case and avg. block coherence of \texttt{ini.kk}; Thm. 2.10 and Table 1, \cite{Cea2015}]
The frames $\bm{R}_{1:n}^*$ and $\bm{S}_{1:n}^*$ in \eqref{eq:kron} satisfy (a) $\mu(\bm{R}_{1:n}^*) = 1/\sqrt{m_1}$ and $\mu(\bm{S}_{1:n}^*) = 1/\sqrt{m_2}$, and (b) $a(\bm{R}_{1:n}^*) = a(\bm{S}_{1:n}^*) = 1/(n-1)$.
\label{prop:KK}
\end{lemma}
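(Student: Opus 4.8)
The plan is to exploit the Kronecker structure in \eqref{eq:kron} to reduce the \emph{block} coherences of the lifted $R$-frames to the ordinary \emph{vector} coherences of the underlying Kerdock frames, and then quote the known Kerdock values from \cite{Cea2015}. Writing $\bm{k}_i$ for the $i$-th column of $\bm{K}_{\bm{R}}$, the $i$-th block of $\bm{R}_{1:n}^*$ is $\bm{R}_i^* = \bm{k}_i \otimes \bm{Q}_{\bm{R}}$, an $m_1 \times R_{ini}$ matrix. First I would apply the mixed-product property of the Kronecker product together with $\bm{Q}_{\bm{R}}^T \bm{Q}_{\bm{R}} = \bm{I}$ to obtain, for $i \neq j$,
\[
(\bm{R}_i^*)^T \bm{R}_j^* = (\bm{k}_i^T \bm{k}_j) \otimes (\bm{Q}_{\bm{R}}^T \bm{Q}_{\bm{R}}) = (\bm{k}_i^T \bm{k}_j)\, \bm{I}_{R_{ini}}.
\]
Since $\bm{k}_i^T\bm{k}_i = 1$ and $\bm{Q}_{\bm{R}}$ is unitary, this also gives $(\bm{R}_i^*)^T\bm{R}_i^* = \bm{I}$, confirming each $\bm{R}_i^*$ is a valid $R$-frame.

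Second, because the spectral norm of a scalar multiple of the identity is the absolute value of that scalar, the identity above yields $\|(\bm{R}_i^*)^T \bm{R}_j^*\|_2 = |\bm{k}_i^T \bm{k}_j|$ and likewise $\big\|\sum_{j\neq i}(\bm{R}_i^*)^T\bm{R}_j^*\big\|_2 = \big|\sum_{j\neq i}\bm{k}_i^T\bm{k}_j\big|$. Substituting into Definition \ref{def:blcoh}, both the worst-case and the average \emph{block} coherences of $\bm{R}_{1:n}^*$ collapse \emph{exactly} to the worst-case and average (vector) coherences of the Kerdock frame $\bm{K}_{\bm{R}}$; the random unitary $\bm{Q}_{\bm{R}}$ drops out entirely, so it affects only the subspace orientation (relevant to the SMG prior) and not the coherence. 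It then remains to invoke the Kerdock frame guarantees of \cite{Cea2015} (Thm.\ 2.10 and Table 1), which supply the stated values, and to repeat the identical argument for $\bm{S}_{1:n}^*$ with $\bm{K}_{\bm{S}}$, $\bm{Q}_{\bm{S}}$ and $m_2$, giving claims (a) and (b).

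The step requiring the most care, and the main obstacle, is accounting for the post-processing \texttt{flip} applied in Step 4 of Algorithm \ref{alg:kkcon} and verifying the dimensional bookkeeping behind the factor $1/\sqrt{m_1}$. Flipping replaces $\bm{R}_i^* \mapsto \varepsilon_i \bm{R}_i^*$ with $\varepsilon_i \in \{\pm 1\}$, equivalently $\bm{k}_i \mapsto \varepsilon_i \bm{k}_i$; this leaves every $\|(\bm{R}_i^*)^T\bm{R}_j^*\|_2 = |\bm{k}_i^T\bm{k}_j|$ unchanged, so the worst-case bound $\mu = 1/\sqrt{m_1}$ is preserved, while the signed cancellations in $\sum_{j\neq i}\varepsilon_i\varepsilon_j\,\bm{k}_i^T\bm{k}_j$ are precisely what the flipping step exploits to drive the average coherence down to the quoted $1/(n-1)$ (an improvement over the generic $(\sqrt{n}+1)/(n-1)$ of Lemma \ref{lem:flip}). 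I would therefore confirm that the Table 1 value already reflects this flipped configuration, and carefully track how the Kerdock ambient dimension of $\bm{K}_{\bm{R}}$ relates to $m_1$ after the lift, since that is where an off-by-$R_{ini}$ slip in the coherence constant would most easily arise.
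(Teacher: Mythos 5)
The paper offers no proof of this lemma at all: it is imported verbatim as a citation of Theorem 2.10 and Table 1 of \cite{Cea2015}, and nothing appears in the appendix. Your proposal therefore does strictly more than the paper, and its core mechanism is the right one --- the mixed-product identity $(\bm{k}_i \otimes \bm{Q})^T(\bm{k}_j \otimes \bm{Q}) = (\bm{k}_i^T\bm{k}_j)\bm{I}$ is exactly how the cited reference collapses block coherence of the lifted frames to the scalar coherence of the Kerdock seed, and your observation that the random unitary drops out is correct.

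Two points need tightening before this stands as a proof of the stated constants. First, the dimensional issue you flag at the end is real and unresolved: $\bm{K}_{\bm{R}}$ lives in $\mathbb{R}^{m_1/R_{ini}}$, so your reduction literally yields $\mu(\bm{R}_{1:n}^*) = \max_{i\neq j}|\bm{k}_i^T\bm{k}_j| = 1/\sqrt{m_1/R_{ini}}$ for unit-norm Kerdock columns, which matches the stated $1/\sqrt{m_1}$ only under whatever normalization convention \cite{Cea2015} adopts; since the whole content of the lemma is these constants, ``I would carefully track'' this is a promise rather than a proof. Second, your attribution of the average-coherence value $1/(n-1)$ to the sign cancellations produced by \texttt{flip} is misplaced: the lemma is stated for the frames in \eqref{eq:kron}, i.e.\ \emph{before} the \texttt{flip} post-processing of Algorithm~\ref{alg:kkcon}, and \texttt{flip} only guarantees the weaker $(\sqrt{n}+1)/(n-1)$ of Lemma~\ref{lem:flip}. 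The correct source of $1/(n-1)$ is the structure of the Kerdock seed itself: by your own reduction, $a(\bm{R}_{1:n}^*) = \frac{1}{n-1}\max_i |\sum_{j\neq i}\bm{k}_i^T\bm{k}_j|$, and since the Kerdock columns satisfy $\sum_j \bm{k}_j = \bm{0}$ (so that $\sum_{j\neq i}\bm{k}_i^T\bm{k}_j = -\|\bm{k}_i\|^2 = -1$), this evaluates exactly to $1/(n-1)$. With that substitution your argument closes; as written, the claim (b) rests on the wrong mechanism.
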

\noindent The worst-case block coherences in Lemma \ref{prop:KK} nearly achieve the universal coherence lower bound (Theorem 3.6, \cite{LS1973}). Tying this back to Section \ref{sec:ini}, this suggests the initial masks from \texttt{ini.kk} are near-optimal for extracting initial information from $\bm{X}$. In our implementation of \texttt{ini.kk}, \texttt{flip} is used as a post-processing step to further improve average block coherence. Algorithm \ref{alg:kkcon} summarizes the steps for \texttt{ini.kk}.

One restriction of \texttt{ini.kk} is that the Kerdock frames $\bm{K}_{\bm{R}}$ and $\bm{K}_{\bm{S}}$ can be generated only for certain choices of $m_1$, $m_2$, $R_{ini}$ and $n$ (specific conditions in Table \ref{tbl:sum}; see \cite{Hea1994} for details). Despite this, {our construction heuristic via Kerdock codes} has two advantages. First, \texttt{ini.kk} offers improved block coherence (and hence better information-theoretic masks) to \texttt{ini.flip}, and should be used whenever the requirements in Table \ref{tbl:sum} are satisfied (e.g., in image recovery; see Section \ref{sec:ex}). Second, Kerdock frames enjoy a nice packing property, which allows more blocks to be packed into a frame while retaining low block coherence \cite{Cea2015}. This allows designers the option of increasing initial sample size $n$ by setting initial rank guess $R_{ini}$ as small as possible, while ensuring good recovery for matrices with higher rank.

\begin{table}[t]
\caption{Restrictions on matrix dimensions $m_1 \times m_2$ and initial sample size $n$ for \texttt{ini.flip} and \texttt{ini.kk}.}
\centering
\begin{tabular}{c | c c }
\toprule
& \texttt{ini.flip} & \texttt{ini.kk}\\
\hline
$m_1$ & Any & $(2^{k_1 + 1})R_{ini}$, for some odd integer $k_1$\\
$m_2$ & Any & $(2^{k_2 + 1})R_{ini}$, for some odd integer $k_2$\\
$n$ & Any & $n \leq \min(2^{2k_1+2}, 2^{2k_2+2})$\\
\toprule
\end{tabular}
\label{tbl:sum}
\vspace{-0.2in}
\end{table}

\subsection{Sequential mask construction}
\label{sec:seqc}

Consider next the problem of constructing the sequential mask $\bm{A}_{n+1}$ from observed masks $\bm{A}_{1:n}$ (see Section \ref{sec:seq}). Using current subspace estimates $(\hat{\mathcal{U}}_n,\hat{\mathcal{V}}_n)$ (which are adaptively updated via nuclear-norm min.), the following theorem gives a closed form for the sequential optimization in \eqref{eq:seqentform}:
\begin{theorem}[Sequential mask construction]
The optimal \textup{sequential mask} $\bm{A}_{n+1}^*$ in \eqref{eq:seqentform} takes the form:
\begin{equation}
\bm{A}_{n+1}^* = \bm{U} \boldsymbol{\Sigma}^* \bm{V}^T,
\label{eq:seqentopt}
\end{equation}
for any $\bm{U} \in \hat{\mathcal{U}}_n$, $\bm{U}^T\bm{U} = \bm{I}$ and $\bm{V} \in \hat{\mathcal{V}}_n$, $\bm{V}^T\bm{V} = \bm{I}$. Here, $\textup{vec}(\boldsymbol{\Sigma}^*) \in \mathbb{R}^{R^2}$ is the unit eigenvector for the smallest eigenvalue of $\bm{D}^T [\hat{\bm{R}}_n(\bm{A}_{1:n}) + \gamma^2 \bm{I}]^{-1} \bm{D}$, $\bm{D} = [\textup{vec}(\boldsymbol{\Sigma}_1)^T; \cdots ; \textup{vec}(\boldsymbol{\Sigma}_n)^T] \in \mathbb{R}^{n \times R^2}$, with $\boldsymbol{\Sigma}_i = \bm{U}^T\bm{A}_i\bm{V}$.
\label{thm:seqentopt}
\end{theorem}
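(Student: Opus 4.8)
The plan is to exploit the fact that the objective in \eqref{eq:seqentform} depends on the candidate mask $\bm{A}$ only through its projection $\mathcal{P}_{\hat{\mathcal{U}}_n} \bm{A} \mathcal{P}_{\hat{\mathcal{V}}_n}$ onto the estimated subspaces. First I would argue that the optimizer may be taken to lie entirely in the projected subspace $\mathcal{T}_{\hat{\mathcal{U}}_n,\hat{\mathcal{V}}_n}$. Indeed, both the term $\|\mathcal{P}_{\hat{\mathcal{U}}_n} \bm{A} \mathcal{P}_{\hat{\mathcal{V}}_n}\|_F^2$ and each entry of $\hat{\bm{r}}_n(\bm{A})$ (by idempotence and symmetry of the projections) are unchanged when $\bm{A}$ is replaced by $\bm{B} = \mathcal{P}_{\hat{\mathcal{U}}_n} \bm{A} \mathcal{P}_{\hat{\mathcal{V}}_n}$, while $\|\bm{B}\|_F \leq \|\bm{A}\|_F \leq 1$ keeps $\bm{B}$ feasible. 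Any component of $\bm{A}$ outside $\mathcal{T}_{\hat{\mathcal{U}}_n,\hat{\mathcal{V}}_n}$ therefore only wastes Frobenius budget, so we lose nothing by writing $\bm{A} = \bm{U}\boldsymbol{\Sigma}\bm{V}^T$ for orthonormal bases $\bm{U},\bm{V}$ of $\hat{\mathcal{U}}_n,\hat{\mathcal{V}}_n$ and an arbitrary core $\boldsymbol{\Sigma} \in \mathbb{R}^{R \times R}$; the unit-power constraint then reduces to $\|\boldsymbol{\Sigma}\|_F \leq 1$.

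Next I would vectorize. Setting $\bm{s} = \textup{vec}(\boldsymbol{\Sigma})$ and $\boldsymbol{\Sigma}_i = \bm{U}^T\bm{A}_i\bm{V}$, orthonormality gives $\|\mathcal{P}_{\hat{\mathcal{U}}_n} \bm{A} \mathcal{P}_{\hat{\mathcal{V}}_n}\|_F^2 = \|\boldsymbol{\Sigma}\|_F^2 = \|\bm{s}\|_2^2$, while the identity $\mathcal{P}_{\hat{\mathcal{U}}_n}\bm{A}_i\mathcal{P}_{\hat{\mathcal{V}}_n} = \bm{U}\boldsymbol{\Sigma}_i\bm{V}^T$ and a short trace computation yield $\hat{\bm{r}}_n(\bm{A})_i = \langle \boldsymbol{\Sigma}_i, \boldsymbol{\Sigma}\rangle_F = \textup{vec}(\boldsymbol{\Sigma}_i)^T \bm{s}$, i.e. $\hat{\bm{r}}_n(\bm{A}) = \bm{D}\bm{s}$ with $\bm{D}$ as in the statement. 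The criterion \eqref{eq:seqentform} then becomes the quadratic form $\bm{s}^T\big(\bm{I} - \bm{D}^T[\hat{\bm{R}}_n(\bm{A}_{1:n}) + \gamma^2\bm{I}]^{-1}\bm{D}\big)\bm{s}$ to be maximized over $\|\bm{s}\|_2 \leq 1$.

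The key observation, and the step I expect to carry the most weight, is that the same computation identifies $\hat{\bm{R}}_n(\bm{A}_{1:n}) = \bm{D}\bm{D}^T$. Writing an SVD of $\bm{D}$ (or applying the push-through identity), the eigenvalues of $\bm{D}^T[\bm{D}\bm{D}^T + \gamma^2\bm{I}]^{-1}\bm{D}$ are exactly $s_k^2/(s_k^2 + \gamma^2)$, where the $s_k$ are the singular values of $\bm{D}$, together with zeros on the null space. Since $\gamma^2 = \eta^2/\sigma^2 > 0$, every such eigenvalue lies strictly below $1$, so $\bm{I} - \bm{D}^T[\hat{\bm{R}}_n + \gamma^2\bm{I}]^{-1}\bm{D} \succ 0$. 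Consequently the quadratic form is strictly increasing along any ray, the constraint is active at the optimum ($\|\bm{s}\|_2 = 1$), and by the Rayleigh--Ritz characterization the maximizer $\bm{s}^* = \textup{vec}(\boldsymbol{\Sigma}^*)$ is the unit top eigenvector of $\bm{I} - \bm{D}^T[\hat{\bm{R}}_n + \gamma^2\bm{I}]^{-1}\bm{D}$, equivalently the unit eigenvector for the \emph{smallest} eigenvalue of $\bm{D}^T[\hat{\bm{R}}_n + \gamma^2\bm{I}]^{-1}\bm{D}$, as claimed. Finally I would note that the stated freedom in $\bm{U},\bm{V}$ is harmless: a change of orthonormal basis rotates $\bm{D}$ and $\boldsymbol{\Sigma}^*$ consistently, leaving the physical mask $\bm{A}_{n+1}^* = \bm{U}\boldsymbol{\Sigma}^*\bm{V}^T$ invariant.
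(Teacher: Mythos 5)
Your proposal is correct and follows the same overall route as the paper's proof: reduce to masks lying in the projected subspace $\mathcal{T}_{\hat{\mathcal{U}}_n,\hat{\mathcal{V}}_n}$ (the paper does this via its Lemmas \ref{lem:oproj} and \ref{lem:1}), parametrize as $\bm{U}\boldsymbol{\Sigma}\bm{V}^T$, vectorize so that $\hat{\bm{r}}_n(\bm{A}) = \bm{D}\,\textup{vec}(\boldsymbol{\Sigma})$, and land on a minimum-eigenvector problem. Where you genuinely diverge is in the last step: the paper treats the two terms of \eqref{eq:seqentform} separately, first asserting that the first term equals $1$ for all $\bm{A} = \bm{U}\boldsymbol{\Sigma}\bm{V}^T$ (which tacitly assumes the norm constraint is active, i.e.\ $\|\boldsymbol{\Sigma}\|_F = 1$) and then minimizing the second term over that sphere. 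You instead fold both terms into the single quadratic form $\bm{s}^T\bigl(\bm{I} - \bm{D}^T[\hat{\bm{R}}_n + \gamma^2\bm{I}]^{-1}\bm{D}\bigr)\bm{s}$, observe that $\hat{\bm{R}}_n = \bm{D}\bm{D}^T$, and use the push-through identity to show the matrix is strictly positive definite, which \emph{proves} the constraint is active rather than assuming it. This closes a small logical gap in the paper's argument (nothing there rules out an interior optimizer a priori) at the cost of a short spectral computation; the two derivations otherwise agree, including the final Rayleigh--Ritz identification of $\textup{vec}(\boldsymbol{\Sigma}^*)$ with the smallest eigenvector of $\bm{D}^T[\hat{\bm{R}}_n + \gamma^2\bm{I}]^{-1}\bm{D}$ and the basis-invariance remark.
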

\noindent This theorem can be explained as follows. Having sampled from masks $\bm{A}_{1:n}$, the optimal sequential mask is constructed using the row and column frames $\bm{U}$ and $\bm{V}$ (from estimated subspaces $\hat{\mathcal{U}}_n$ and $\hat{\mathcal{V}}_n$), combined using a weight matrix $\boldsymbol{\Sigma}^*$. This weight matrix, obtained via a minimum eigenvector computation, can be seen as an \textit{optimal allocation} of measurement power to the row and column frames $\bm{U}$ and $\bm{V}$, to minimize correlations between the new mask $\bm{A}_{n+1}^*$ and observed masks $\bm{A}_{1:n}$. Equivalently, the power allocation in $\boldsymbol{\Sigma}^*$ can be viewed as distributing measurement power to important subspaces yet to be explored by previous masks $\bm{A}_{1:n}$.

Computationally, the key appeal of Theorem \ref{thm:seqentopt} is that it provides a \textit{closed-form} mask construction for greedy information gain on $\bm{X}$. Compared to a numerical optimization of \eqref{eq:seqentform}, which is computationally infeasible even for moderate-dim. problems, the closed-form solution in \eqref{eq:seqentopt} offers a much more efficient construction of sequential masks. This closed form follows from the maximum entropy principle, which allows us to work with the simpler observation entropy as a proxy for more complicated entropy or error criteria (see \eqref{eq:entmmse}).

\begin{figure}[t]
\centering
\includegraphics[width=0.27\textwidth]{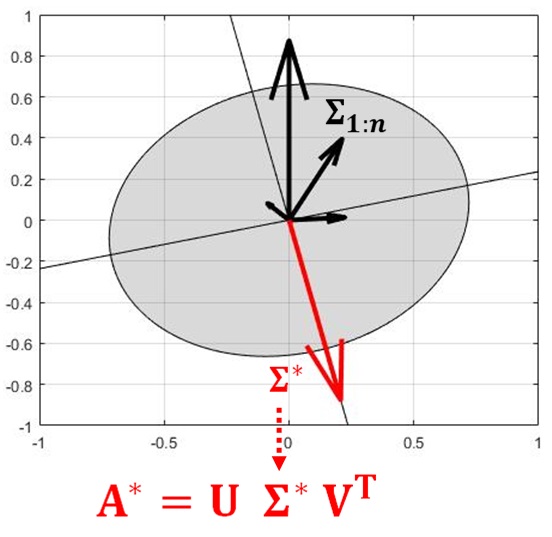}
\caption{The sequential mask construction in \eqref{eq:seqentopt}, visualized as an optimal power allocation on subspaces.}
\label{fig:ev}
%\vspace{-0.2in}
\end{figure}

The solution in \eqref{eq:seqentopt} also offers a geometric interpretation. For fixed row and column frames $\bm{U}$ and $\bm{V}$, consider the representation of observed masks $\{\bm{A}_i\}_{i=1}^n$ by its power allocation matrices $\{\boldsymbol{\Sigma}_i\}_{i=1}^n$, where $\boldsymbol{\Sigma}_i = \bm{U}^T \bm{A}_i\bm{V}$ (as in Theorem \ref{thm:seqentopt}). Note that $\langle \mathcal{P}_{\mathcal{U}} \bm{A}_i \mathcal{P}_{\mathcal{V}}, \mathcal{P}_{\mathcal{U}} \bm{A}_j \mathcal{P}_{\mathcal{V}} \rangle_F = \langle \boldsymbol{\Sigma}_i, \boldsymbol{\Sigma}_j \rangle_F$, so, similar to Section \ref{sec:maxent}, the goal of maximum entropy masks can be viewed as maximizing the covariance matrix \textit{volume} for vectors of the observed power matrices $\{\boldsymbol{\Sigma}_i\}_{i=1}^n$. Figure \ref{fig:ev} visualizes these vectors and their covariance ellipse for $n=4$ observed masks. The minimum eigenvector solution for $\boldsymbol{\Sigma}^*$ in \eqref{eq:seqentopt} can be viewed as the \textit{minor axis} -- the axis with shortest length -- of the covariance ellipse. This is quite intuitive, because adding the vector $\boldsymbol{\Sigma}^*$ (red arrow in Figure \ref{fig:ev}) along the minor axis maximizes the \textit{volume gain} of the ellipse. The sequential mask yielding the greatest \textit{information gain} is then constructed by using the weight matrix $\boldsymbol{\Sigma}^*$ to optimally allocate \textit{measurement power} to the row and column frames $\bm{U}$ and $\bm{V}$ (bottom of Figure \ref{fig:ev}). This interpretation is related to the ``water-filling'' allocation in information-theoretic frame design for compressive sensing (see \cite{Cea2012,CT2012}), except instead of allocating more measurement power to \textit{less noisy} channels, the optimal mask in \eqref{eq:seqentopt} allocates more power to frames (a) \textit{more correlated} with the desired matrix $\bm{X}$ and (b) \textit{less correlated} with previous masks.

\subsection{Full algorithm}
\label{sec:fullalg}

\begin{algorithm}[t]
\caption{\texttt{MaxEnt}($m_1,m_2,n_{ini},R_{ini},n_{seq}$)  -- Information-theoretic matrix recovery}
\label{alg:maxent}
\setlength{\leftmargini}{10pt}
\bi
\itemsep0em 
\item \textbf{If} conditions in Table \ref{tbl:sum} satisfied\\ 
\textbf{then} $\bm{A}_{1:n_{ini}} \leftarrow \texttt{ini.kk}(m_1, m_2, n_{ini}, R_{ini})$\\
\textbf{else} $\bm{A}_{1:n_{ini}} \leftarrow \texttt{ini.flip}(m_1, m_2, n_{ini}, R_{ini})$
\item Observe initial samples $y_i \leftarrow \langle \bm{A}_i,\bm{X} \rangle_F + \epsilon_i$, $i = 1, \cdots, n_{ini}$
\item \textbf{For} $i = n_{ini}, \cdots, n_{ini} + n_{seq}-1$:
\bi
\itemsep0em 
\item Estimate $\hat{\bm{X}}_i$ via the nuclear-norm formulation \eqref{eq:nuc}
\item Estimate row and column spaces $(\hat{\mathcal{U}}_i,\hat{\mathcal{V}}_i)$ from $\texttt{svd}(\hat{\bm{X}}_i)$
\item Construct next mask $\bm{A}_{i+1}$ from \eqref{eq:seqentopt} using $(\hat{\mathcal{U}}_i,\hat{\mathcal{V}}_i)$
\item Observe new sample $y_{i+1} \leftarrow \langle \bm{A}_{i+1},\bm{X} \rangle_F + \epsilon_{i+1}$
\ei
\item \textbf{Return} recovered matrix $\hat{\bm{X}}_{n_{ini}+n_{seq}}$
\ei
\end{algorithm}
%\vspace{-0.2in}
Algorithm \ref{alg:maxent} summarizes the full \texttt{MaxEnt} procedure for information-theoretic matrix recovery, which consists of three steps. First, initial masks $\bm{A}_{1:n_{ini}}$ are generated using either \texttt{ini.kk} (whenever possible) or \texttt{ini.flip}. Next, the nuclear-norm solution $\hat{\bm{X}}$ in \eqref{eq:nuc} is estimated from observed data; this serves as a close approximation to the MAP estimator (Lemma \ref{thm:mle}). In our implementation, $\hat{\bm{X}}$ is optimized via the Matlab solver CVX \cite{Gea2008} for small matrices, and a straight-forward extension of the singular-value thresholding algorithm \cite{Cea2010} for larger matrices. An SVD on $\hat{\bm{X}}$ then yields point estimates for the row and column spaces $\hat{\mathcal{U}}$ and $\hat{\mathcal{V}}$. Lastly, using these subspace estimates, the next mask is then constructed using the closed-form equation \eqref{eq:seqentopt}. The last two steps (subspace estimation and active sampling) are then repeated until a desired sample size is obtained, or a desired error tolerance achieved. For larger problems, this iterative procedure can be sped up using batch sampling, by supplementing the sequential construction \eqref{eq:seqentopt} with (a) masks constructed using the smallest few eigenvectors from the power allocation in \eqref{eq:seqentopt}, or (b) randomly sampled masks.

{Of course, in the above iterative procedure, the adaptive MAP estimates $(\hat{\mathcal{U}}_n,\hat{\mathcal{V}}_n)$ may deviate from the true subspaces $(\mathcal{U}, \mathcal{V})$, particularly early on in sampling. Indeed, one disadvantage with MAP-guided active sampling is that it fails to account for parameter uncertainty from estimation error. From a Bayesian perspective, this uncertainty can be incorporated via a \textit{fully-Bayesian} approach, which replaces the objective function in the sequential optimization \eqref{eq:seqentform} with its \textit{expectation} under the posterior distribution $[\mathcal{P}_{\mathcal{U}},\mathcal{P}_{\mathcal{V}}|\bm{y}_n]$ (which quantifies uncertainty in subspaces $(\mathcal{U},\mathcal{V})$ after observing data). However, this fully-Bayesian design problem is computationally infeasible to optimize for two reasons. First, the \textit{evaluation} of the expected objective requires simulation from the non-standard posterior distribution $[\mathcal{P}_{\mathcal{U}},\mathcal{P}_{\mathcal{V}}|\bm{y}]$, which can be very costly to sample (see \cite{MX2017}). Second, the nice \textit{closed-form} solution for sequential masks from Theorem \ref{thm:seqentopt} (for fixed $\mathcal{P}_{\mathcal{U}}$ and $\mathcal{P}_{\mathcal{V}}$) is no longer available for the expected objective. Because of this, this fully-Bayesian formulation is very time-consuming to optimize, even for small matrices $\bm{X}$. We find that the proposed MAP-guided approach in \texttt{MaxEnt} offers a more efficient adaptive strategy for learning $\bm{X}$.}

\vspace{-0.1in}
\section{Numerical examples}
\label{sec:num}

\begin{figure}[!tp]
\centering
\includegraphics[width=0.43\textwidth]{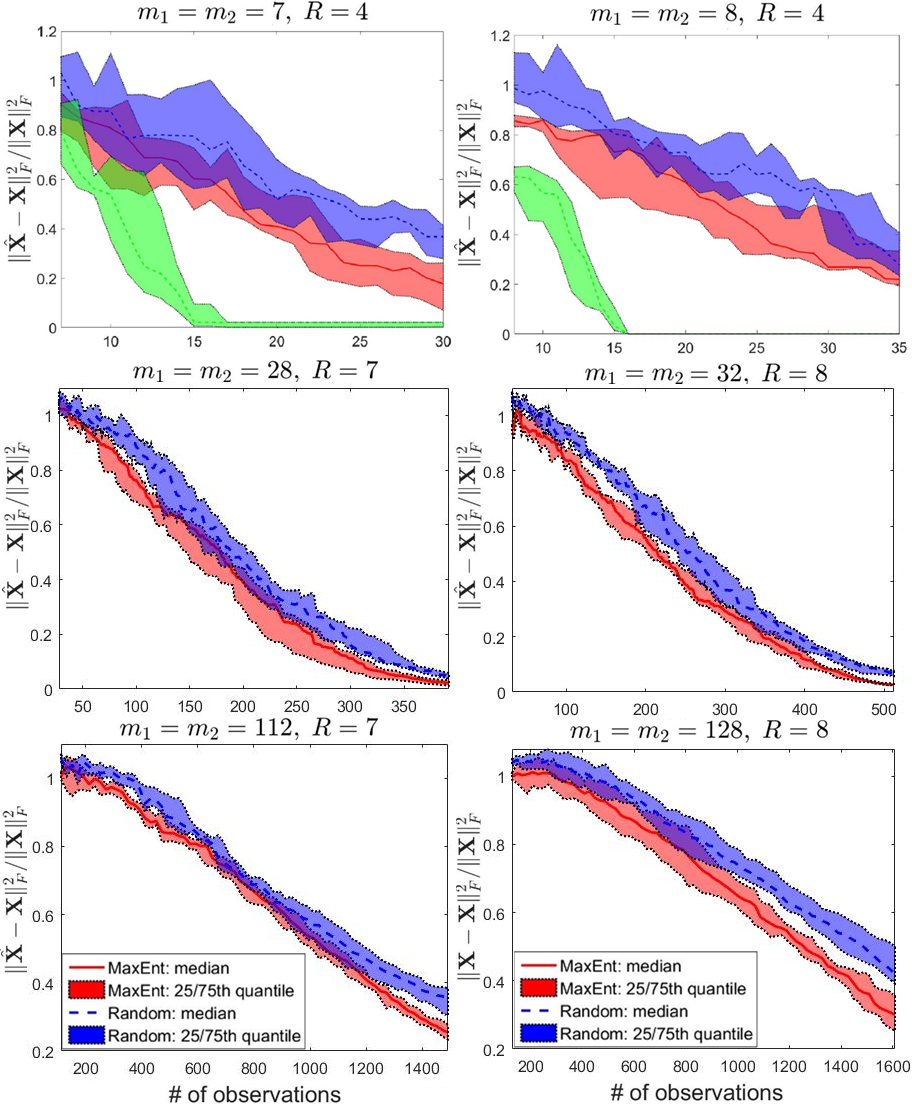}
\vspace{-0.2cm}
\caption{Normalized recovery errors for simulated $\bm{X} \in \mathbb{R}^{m_1 \times m_2}$ with rank $R$, using \texttt{\textup{MaxEnt}} (left: \texttt{\textup{ini.flip}}, right: \texttt{\textup{ini.kk}}) and random masks. Solid lines mark median error, and shaded bands mark 25-th/75-th error quantiles. {Errors from the PCA construction \eqref{eq:pca} are shown in green; these are optimal masks when true subspaces are known.}}
\label{fig:sim}
\vspace{-0.2in}
\end{figure}

\subsection{Simulated examples}
\label{sec:sim}
We now investigate the performance of \texttt{MaxEnt} for recovering simulated instances of $\bm{X}$. Here, $\bm{X}$ is simulated using the SMG model $\mathcal{SMG}(\mathcal{P}_{\mathcal{U}}, \mathcal{P}_{\mathcal{V}},\sigma^2,R)$, with uniformly sampled $\mathcal{P}_{\mathcal{U}}$ and $\mathcal{P}_{\mathcal{V}}$ and variance parameters $\sigma^2 = 1$ and $\eta^2 = 10^{-4}$. Six simulation cases are conducted for different matrix dimensions and rank $(m_1, m_2, R)$: the first three cases $(7,7,4)$, $(28,28,7)$ and $(112,112,\cbl{7})$ investigate \texttt{MaxEnt} using the flipping construction \texttt{ini.flip}, while the next three cases $(2^3,2^3,4)$, $(2^5,2^5,8)$ and $(2^7,2^7,\cbl{8})$ investigate the Kerdock-Kronecker construction \texttt{ini.kk} ($R_{ini}$ is set as 2 to exploit the packing property of Kerdock frames; see Section \ref{sec:kkcon}). The first three cases employ an initial and total sample size $(n_{ini},n_{ini}+n_{seq})$ of $(7,30)$, $\cbl{(28,400)}$ and $(112,\cbl{1500})$, while the next three cases use $(8,35)$, $\cbl{(32,500)}$ and $(128,\cbl{1600})$ samples. For each case, we replicate the simulation for ten trials to provide an estimate of error variability. 

For each of the six cases, Figure \ref{fig:sim} shows the normalized recovery errors $\|\hat{\bm{X}} - \bm{X}\|_F^2 / \|\bm{X}\|_F^2$ as a function of sample size, where $\hat{\bm{X}}$ is the nuclear-norm estimate \eqref{eq:nuc}. To benchmark performance, the proposed method \texttt{MaxEnt} is compared with uniform random masks satisfying the unit power constraints. Consider first the \textit{initial} recovery performance of \texttt{MaxEnt} (using \texttt{ini.flip} or \texttt{ini.kk}) and random masks. For the three cases on the left, the initial masks from \texttt{ini.flip} give noticeable improvements over random masks, both for median error and error quantiles. For the three cases on the right, the initial masks from \texttt{ini.kk} yield more pronounced improvements over random masks; the 75-th error quantiles for the former are smaller than the 25-th error quantiles for the latter. These results corroborate two earlier insights. First, the improvement of \texttt{ini.flip} and \texttt{ini.kk} over random masks supports the link between information and block coherence of frames (see Section \ref{sec:inicoh}). Second, this confirms the improved performance of \texttt{ini.kk} over \texttt{ini.flip}, which is expected since the former has better packing properties than the latter (see Section \ref{sec:inic}).

Consider next the \textit{sequential} performance of \texttt{MaxEnt}. From Figure \ref{fig:sim}, the sequential recovery from \texttt{MaxEnt} is noticeably better than random masks, at nearly all sample sizes. This error gap appears to grow larger with more sequential samples, which suggests the closed-form adaptive design in Section \ref{sec:seqc} is indeed effective. By designing masks which greedily maximize information on $\bm{X}$, \texttt{MaxEnt} provides an informed sampling scheme which adaptively targets \textit{important} subspaces of $\bm{X}$. This growing error gap also hints at an improved theoretical rate for \texttt{MaxEnt} over random masks, which we leave for future work.

{Lastly, for the two smaller cases (Figure \ref{fig:sim}, top), we compare \texttt{MaxEnt} with the PCA masks from \eqref{eq:pca} -- the latter can be viewed as an \textit{oracle} design scheme when the \textit{true subspaces} of $\bm{X}$ are \textit{known with certainty}, prior to data. Not surprisingly, given access to the true subspaces $(\mathcal{U},\mathcal{V})$, this PCA approach yields improved recovery to \texttt{MaxEnt}, achieving near-perfect recovery after $R^2=16$ samples. The error gap between this approach and \texttt{MaxEnt} can be attributed to the extra samples needed to adaptively learn the underlying subspaces. We note that this PCA mask construction is \textit{not} viable for practical problems, since one rarely knows (with certainty) the true subspaces of an \textit{unknown} matrix $\bm{X}$ prior to data. }

\subsection{Real data examples}
\label{sec:ex}

\begin{figure}[t]
\centering
\includegraphics[width=0.45\textwidth]{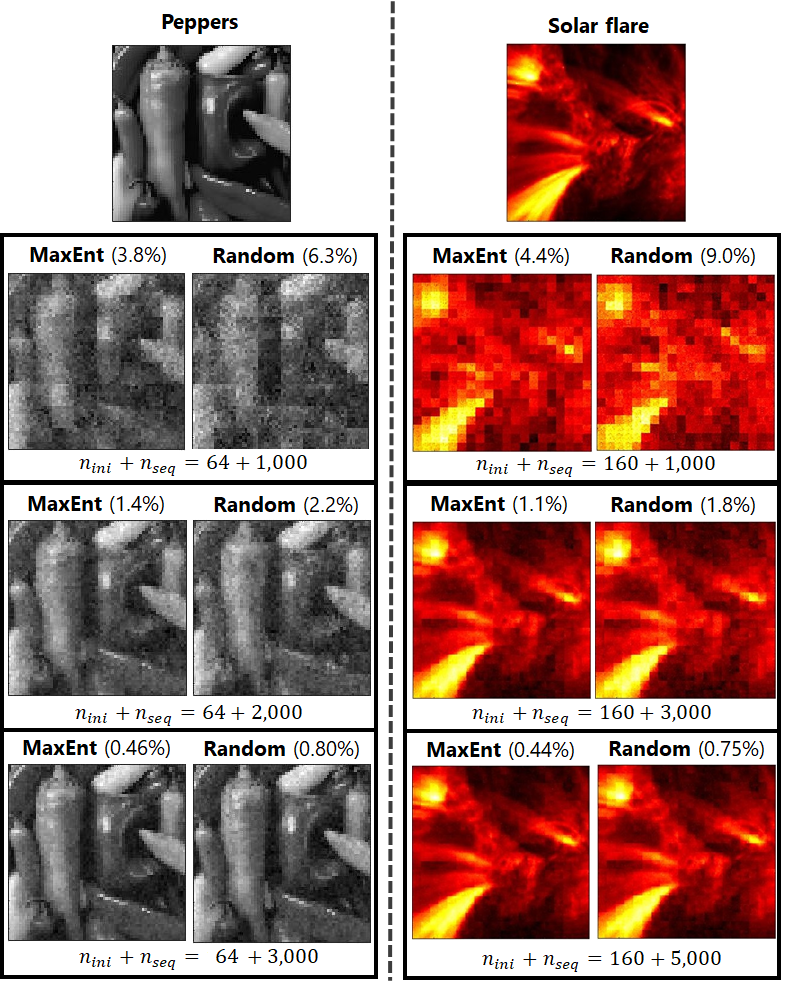}
\vspace{-0.1in}
\caption{(Top) The original `peppers' and `flare' images. (Bottom) The recovered images using \texttt{\textup{MaxEnt}} and random masks, with $n_{ini}+n_{seq}$ measurements. Normalized errors are bracketed.}
\label{fig:img}
\vspace{-0.2in}
\end{figure}

\begin{figure}[t]
\centering
\includegraphics[width=0.5\textwidth]{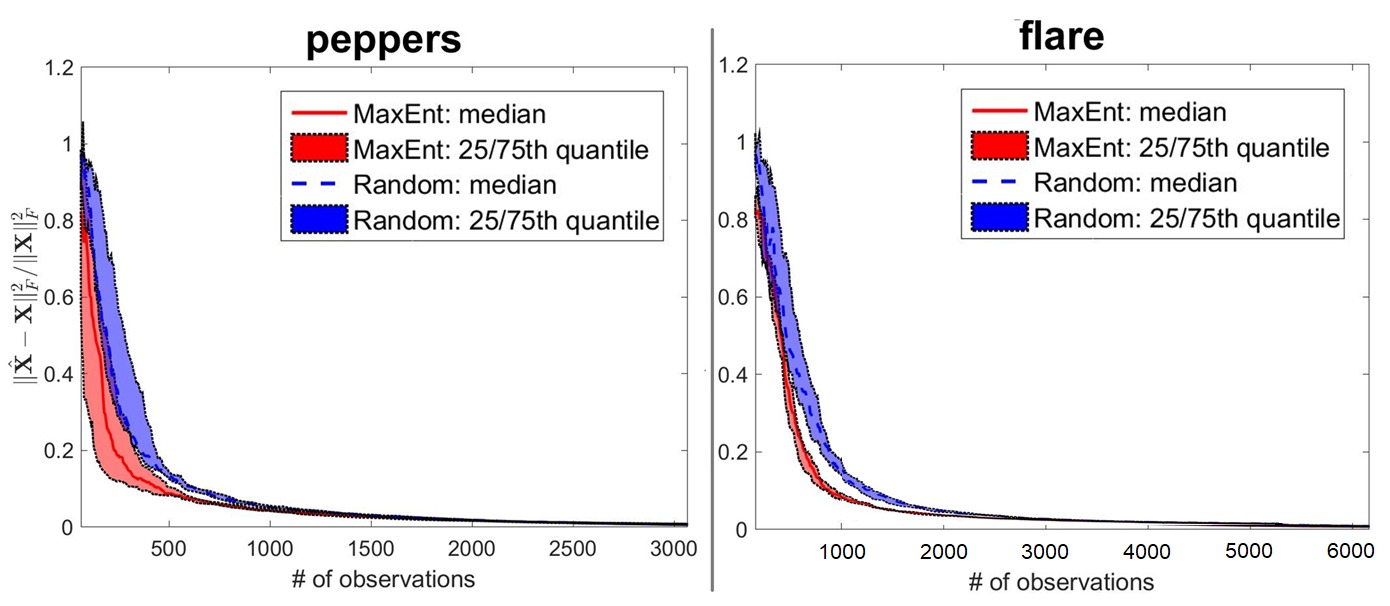}
\vspace{-0.2in}
\caption{Normalized recovery errors for `peppers' and `flare' using \texttt{\textup{MaxEnt}} and random masks. Solid lines mark median errors, and shaded bands mark 25-th/75-th error quantiles.}
\label{fig:imgerr}
\vspace{-0.2in}
\end{figure}

\subsubsection{Image recovery}

Next, we investigate the performance of \texttt{MaxEnt} for recovering (a) `peppers' -- a $64 \times 64$-pixel peppers image\footnote{\scriptsize \ULurl{www.statemaster.com/encyclopedia/Standard-test-image}.}, and (b) `flare' -- a {$160 \times 160$}-pixel solar flare image captured by the NASA SDO satellite (see \cite{Xea2013} for details). These images are shown in Figure \ref{fig:img} (top). To generate $\bm{X}$, we first break each image into $8 \times 8$ patches, then vectorize these patches and collect vectors into an $m_1 \times m_2 = 64 \times 64$ matrix for `peppers', and an {$m_1 \times m_2 = 64 \times 400$} matrix for `flare' (details on this patching step in \cite{CX2016}). Using this patched matrix $\bm{X}$, observations are then sampled with $\eta^2 = 1.0$. For `peppers', $n_{ini}=64$ initial masks are constructed using \texttt{ini.kk} with $R_{ini}=4$, with $n_{seq} = 3,000-64$ samples taken sequentially; for `flare', $n_{ini} = 160$ initial masks are constructed using \texttt{ini.flip} with $R_{ini} = 8$, with $n_{seq} = 3,000 - 160$ samples taken sequentially. This is replicated five times to give an estimate of error variability.

For these images, Figure \ref{fig:imgerr} shows the normalized recovery errors $\|\hat{\bm{X}} - \bm{X}\|_F^2 / \|\bm{X}\|_F^2$ as a function of measurements taken. As before, \texttt{MaxEnt} is compared with uniformly random masks. For \textit{initial} recovery, \texttt{MaxEnt} yields markedly lower errors to random sampling for both images, which again shows the effectiveness of well-packed subspaces for maximizing initial information gain. As before, \texttt{ini.kk} provides greater error reduction to \texttt{ini.flip}. For \textit{sequential} recovery, \texttt{MaxEnt} maintains a sizable error gap over random sampling for both images, particularly early on in the sequential procedure. This illustrates the ability of \texttt{MaxEnt} to \textit{learn} and \textit{target} important image features via adaptive mask design. These results are in line with the observations in Section \ref{sec:sim}.

For a visual comparison, Figure \ref{fig:img} (bottom) shows the original images for `peppers' and `flare', and the recovered images using \texttt{MaxEnt} and random masks for different sample sizes. For `peppers', \texttt{MaxEnt} provides noticeably improved recovery of key image characteristics, such as the distinct shape and lighting of each individual pepper, whereas the same characteristics appear more blurred for random masks. Likewise, for `flare', \texttt{MaxEnt} yields a clearer recovery of key solar flare features, such as the intensity and spray of each flare eruption, whereas the same features appear more blurred for random masks. This nicely visualizes the ability of \texttt{MaxEnt} to actively learn important image features via an adaptive, information-theoretic mask design.

\subsubsection{Text document indexing}
\label{sec:data}
We now explore the usefulness of \texttt{MaxEnt} for compressing (or indexing) large text databases into smaller, representative datasets. The database to compress, $\bm{X} \in \mathbb{R}^{m_1 \times m_2}$, is compiled from police reports provided by the Atlanta Police Department \cite{ZX2018}, on crimes committed between the years 2014 -- 2017. Each row of $\bm{X}$ corresponds to a separate police report (with $m_1=497$ report narratives in total), and each column of $\bm{X}$ records the frequency of a specific term in a bag-of-words representation \cite{MS1999} of these reports (with $m_2=100$ terms of interest). For \texttt{MaxEnt}, an initial design of $n_{ini} = 1,000$ masks is constructed using \texttt{ini.flip} with $R_{ini}=8$, with $n_{seq} = 9,000$ sequential samples, resulting in a compressed dataset of $n_{tot} = 10,000$ samples. This is then compared with (a) random masks ($n_{tot}=10,000$ samples), and (b) {a hybrid approach with random initial masks ($n_{ini} = 1,000$) followed by sequential \texttt{MaxEnt} sampling ($n_{seq}=9,000$)}. All methods are compared on how well the reduced dataset recovers the original police report database. 

\begin{table}[t]
\caption{Normalized recovery errors $\|\hat{\bm{X}} - \bm{X}\|_F^2/\|\bm{X}\|_F^2$ for text document indexing, using random masks, \texttt{MaxEnt}, and a hybrid approach (random initialization with sequential \texttt{MaxEnt}).}
\centering
\small
\begin{tabular}{c | c c }
\toprule
 & $n_{ini} = 1,000$ & $n_{tot} = 10,000$\\
\hline
\textbf{Random} & 91.6\% & 17.6\% \\
\textbf{\texttt{MaxEnt}} & 88.4\% & 4.1\% \\
{\textbf{Hybrid}} & {91.6\%} & {5.4\% }\\
\toprule
\end{tabular}
\label{tbl:apd}
\vspace{-0.2in}
\end{table}

Table \ref{tbl:apd} summarizes the normalized recovery errors $\|\hat{\bm{X}} - \bm{X}\|_F^2/\|\bm{X}\|_F^2$ for \texttt{MaxEnt} and random masks, where $\hat{\bm{X}}$ is the nuclear-norm-recovered database from the sketched dataset. Consider first the compression using $n_{ini} =1,000$ initial samples. \texttt{ini.flip} offers lower recovery errors to random masks, which demonstrates the importance of well-packed frames for initial compression. Consider next the compression using $n_{ini}+n_{seq}=10,000$ total samples. Here, the full \texttt{MaxEnt} procedure provides a growing error improvement over random masks, which reflects the \textit{learning} and \textit{targeting} of important subspace properties in $\bm{X}$ for adaptive sampling. Indeed, in latent semantic analysis (see, e.g., \cite{MS1999}), the SVD of the document-term matrix $\bm{X}$ encodes important information on (a) different document types found in the database, and (b) typical terms found within each document type. Viewed this way, the adaptive procedure in \texttt{MaxEnt} first \textit{learns} this latent document-term structure, then \textit{exploits} such structure to design effective masks for compression.

{Comparing next the final compression errors for the three methods, we see that the hybrid approach offers significant improvements over random masks, and is only slightly worse than the full \texttt{MaxEnt} approach. This shows that, at least for text document indexing, the adaptive design aspect of \texttt{MaxEnt} may be more important than initial mask design. Such a conclusion is not too surprising, since we know there exists some document-term structure within the text data; by learning this structure and targeting it via active sampling, the proposed adaptive approach in \texttt{MaxEnt} can offer much improved compression over random sampling.

Lastly, on computation cost, the running time for \texttt{MaxEnt} for this example is 2,591 sec. (a little more than half an hour), on a single-core 3.4 Ghz processor, whereas random sampling requires only several seconds for mask construction. Interestingly, only a small part of the \texttt{MaxEnt} time (129 sec.) is spent on computing the closed-form solution \eqref{eq:seqentopt}; the rest is spent on solving the nuclear-norm problem \eqref{eq:nuc} for subspace learning. Given the increasing availability of multicore and parallel processing, the latter step can be greatly sped up using \textit{distributed} matrix recovery algorithms (see, e.g., \cite{Mea2015}) -- an active topic in machine learning. These distributed algorithms can allow \texttt{MaxEnt} to be comparable with random sampling not only on running time, but also on problem scalability (currently, \texttt{MaxEnt} on a single-threaded processor can efficiently handle matrices as large as $750 \times 750$). We look forward to exploring this in a future work.}

\vspace{-0.1in}
\section{Conclusion}
In this paper, we proposed a novel information-theoretic approach for designing measurement masks for low-rank matrix recovery. We first revealed novel insights on the link between mask design and {subspace packings}. Using such insights, we then developed an algorithm (\texttt{MaxEnt}) for efficiently constructing initial and adaptive measurement masks to maximize information on $\bm{X}$.

Looking forward, there are several interesting directions to pursue next. First, while the numerical results in Section \ref{sec:num} show a considerable advantage of \texttt{MaxEnt} compared to random masks, it would be nice to quantify this via a theoretical rate. Second, given the promising applications to image processing and text document indexing here, we are interested in exploring the usefulness of \texttt{MaxEnt} in other low-rank modeling problems in engineering and statistics, including covariance sketching \cite{Cea2015}, system identification \cite{LV2009}, physics extraction \cite{Mea2017}, and gene association studies \cite{ND2014, MW2017}. {It would also be nice to develop a fully-Bayesian implementation of \texttt{MaxEnt}, which can then be used to quantify uncertainty on both $\bm{X}$ and its subspaces. Finally, further study of the exploitation-exploration trade-off in \eqref{eq:seqentform} is also of interest.}

%\section*{Acknowledgments}
%
%The authors would like to thank Shixiang Zhu for his help in cleaning and compiling the Atlanta Police Department dataset for the text document indexing application in Section \ref{sec:data}.

\vspace{-0.1in}
\bibliography{references}

\newpage
\appendices
\section{Proofs}

\begin{proof}[Proof of Lemma \ref{thm:mle}]
The proof follows from a direct application of Lemma 2 in \cite{MX2017}.
\end{proof}

\begin{proof}[Proof of Lemma \ref{lem:var}]
Let $\bm{x}_1, \cdots, \bm{x}_{m_2} \in \mathbb{R}^{m_1}$ denote the $m_2$ columns in $\bm{X}$, and let $\bm{a}_{i,1}, \cdots, \bm{a}_{i,m_2} \in \mathbb{R}^{m_1}$ denote the $m_2$ columns in $\bm{A}_i$. Note that:
\begin{align*}
\text{Var}(y_i) &= \text{Cov}\{\text{tr}(\bm{A}_i^T \bm{X}) + \epsilon_i, \text{tr}(\bm{A}_i^T \bm{X}) + \epsilon_i \}\\
&= \text{Cov}\left\{ \sum_{k_1=1}^{m_2} \bm{a}_{i,k_1}^T \bm{x}_{k_1}, \sum_{k_2=1}^{m_2} \bm{a}_{i,k_2}^T \bm{x}_{k_2}\right\} + \eta^2\\
& = \sum_{k_1=1}^{m_2} \sum_{k_2=1}^{m_2} \bm{a}_{i,k_1}^T \text{Cov} (\bm{x}_{k_1}, \bm{x}_{k_2})\bm{a}_{i,k_2} + \eta^2\\
& = \sigma^2 \sum_{k_1=1}^{m_2} \sum_{k_2=1}^{m_2} [\mathcal{P}_{\mathcal{V}}]_{k_1,k_2} ( \bm{a}_{i,k_1}^T \mathcal{P}_{\mathcal{U}} \bm{a}_{i,k_2}) + \eta^2\\
&= \sigma^2 \; \text{tr}\{ (\mathcal{P}^{1/2}_{\mathcal{U}} \bm{A}_i) \mathcal{P}_{\mathcal{V}}(\mathcal{P}^{1/2}_{\mathcal{U}} \bm{A}_i)^T \} + \eta^2\\
&= \sigma^2 \| \mathcal{P}_{\mathcal{U}} \bm{A}_i \mathcal{P}_{\mathcal{V}}\|_F^2 + \eta^2.
\end{align*}
Using analogous steps, it follows that $\text{Cov}(y_i,y_j) = \sigma^2 \; \text{tr}(\mathcal{P}_{\mathcal{U}} \bm{A}_i \mathcal{P}_{\mathcal{V}} \bm{A}_j^T) = \sigma^2 \langle \mathcal{P}_{\mathcal{U}} \bm{A}_i \mathcal{P}_{\mathcal{V}}, \mathcal{P}_{\mathcal{U}} \bm{A}_j \mathcal{P}_{\mathcal{V}} \rangle_F$ for $i \neq j$, which completes the proof.
\end{proof}

\begin{proof}[Proof of Lemma \ref{thm:conddist}]
The proof follows from a straight-forward extension of Lemma \ref{lem:var}, and the closed-form conditional distribution of a multivariate Gaussian random vector (see, e.g., Theorem 3.3.4 in \cite{Ton2012}).
\end{proof}

\begin{proof}[Proof of Theorem \ref{thm:inient}]
\noindent The proof of this theorem requires the following lemmas:
\begin{lemma}[Gershgorin's Circle Theorem; \cite{Ger1931}]
Let $\bm{R} \in \mathbb{C}^{n \times n}$, and let $\mathcal{B}_i := \mathcal{B}(R_{i,i}, r_i)$ be the closed ball in the complex plane with center $R_{i,i}$ and radius $r_i = \sum_{j:j \neq i} |R_{i,j}|$. Then all the eigenvalues of $\bm{R}$ lie in the union $\cup_{i=1}^n \mathcal{B}_i$.
\label{lem:gersh}
\end{lemma}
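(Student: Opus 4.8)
The plan is to prove Gershgorin's theorem by the classical maximal-coordinate argument, which localizes each eigenvalue inside a single Gershgorin disc. Let $\lambda$ be an arbitrary eigenvalue of $\bm{R}$ with a corresponding nonzero eigenvector $\bm{v} = (v_1, \cdots, v_n)^T$, so that $\bm{R}\bm{v} = \lambda \bm{v}$. The central idea is to examine the scalar equation given by the row of $\bm{R}\bm{v} = \lambda\bm{v}$ that corresponds to the \emph{largest-modulus} coordinate of $\bm{v}$.

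First I would select an index $i$ with $|v_i| = \max_{1 \leq k \leq n} |v_k|$; since $\bm{v} \neq \bm{0}$, this gives $|v_i| > 0$. Writing out the $i$-th component of $\bm{R}\bm{v} = \lambda \bm{v}$ and isolating the diagonal term yields $(\lambda - R_{i,i}) v_i = \sum_{j:j\neq i} R_{i,j} v_j$. Taking absolute values, applying the triangle inequality, and then bounding $|v_j| \leq |v_i|$ for every $j$ (which is exactly where the choice of $i$ is used) gives
\begin{equation}
|\lambda - R_{i,i}|\,|v_i| \leq \sum_{j:j\neq i} |R_{i,j}|\,|v_j| \leq \Big(\sum_{j:j\neq i} |R_{i,j}|\Big)|v_i| = r_i\,|v_i|.
\end{equation}
Dividing through by $|v_i| > 0$ gives $|\lambda - R_{i,i}| \leq r_i$, that is, $\lambda \in \mathcal{B}_i \subseteq \cup_{k=1}^n \mathcal{B}_k$. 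Because $\lambda$ was an arbitrary eigenvalue, every eigenvalue of $\bm{R}$ lies in the union of the discs, as claimed.

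There is no substantive obstacle here beyond the single conceptual move: choosing $i$ to index the maximal coordinate is precisely what validates the step $|v_j| \leq |v_i|$ and hence forces $\lambda$ into the $i$-th disc, rather than yielding only a vacuous bound. The only point requiring care is confirming $|v_i| > 0$ so that the final division is legitimate; this is immediate since an eigenvector is nonzero by definition. I would also remark that the argument in fact establishes the sharper fact that $\lambda$ lies in at least one \emph{specific} disc $\mathcal{B}_i$, from which the stated membership in the union $\cup_{i=1}^n \mathcal{B}_i$ follows at once.
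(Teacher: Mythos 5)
Your proof is correct and is precisely the classical maximal-coordinate argument; the paper itself states this lemma as a cited classical result (Gershgorin, 1931) without proof, so there is no alternative paper argument to compare against. Your handling of the one delicate point --- choosing $i$ to index the largest-modulus coordinate so that $|v_j| \leq |v_i|$ holds and $|v_i| > 0$ justifies the final division --- is exactly right, and your closing remark that $\lambda$ in fact lies in a specific disc $\mathcal{B}_i$ (not merely the union) is the standard sharpening.
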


\begin{lemma}[Upper bound on inner-product]
Assume the initial masks $\bm{A}_{1:n}$ follow $\textup{\textbf{(A2)}}$. Then, for fixed $i$ and $j$:
\begin{align}
\begin{split}
& \Big| \Big\langle \mathcal{P}_{\mathcal{U}} \bm{A}_i \mathcal{P}_{\mathcal{V}}, \mathcal{P}_{\mathcal{U}} \bm{A}_j \mathcal{P}_{\mathcal{V}} \Big\rangle_F \Big| \leq \\
& \quad \frac{1}{2} \left\{ \| (\mathcal{P}_{\mathcal{U}} \bm{R}_i)^T (\mathcal{P}_{\mathcal{U}} \bm{R}_j) \|_2^2 + \| (\mathcal{P}_{\mathcal{V}} \bm{S}_i)^T (\mathcal{P}_{\mathcal{V}} \bm{S}_j) \|_2^2 \right\}.
\end{split}
\end{align}
\normalsize
\label{lem:inprod}
\end{lemma}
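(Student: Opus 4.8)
The plan is to collapse the Frobenius inner product into the trace of a product of two $R\times R$ Gram-type matrices, and then control that trace by Cauchy--Schwarz together with a rank-based comparison between the Frobenius and spectral norms. First I would substitute the SVD form from \textbf{(A2)}, namely $\bm{A}_i = R^{-1/2}\bm{R}_i\bm{S}_i^T$, and use that $\mathcal{P}_{\mathcal{U}}$ and $\mathcal{P}_{\mathcal{V}}$ are symmetric projection matrices, so that $\bm{S}_i^T\mathcal{P}_{\mathcal{V}} = (\mathcal{P}_{\mathcal{V}}\bm{S}_i)^T$. This yields the factorization
\begin{equation}
\mathcal{P}_{\mathcal{U}}\bm{A}_i\mathcal{P}_{\mathcal{V}} = R^{-1/2}(\mathcal{P}_{\mathcal{U}}\bm{R}_i)(\mathcal{P}_{\mathcal{V}}\bm{S}_i)^T.
\end{equation}
Writing $\bm{M}_{ij} := (\mathcal{P}_{\mathcal{U}}\bm{R}_i)^T(\mathcal{P}_{\mathcal{U}}\bm{R}_j)$ and $\bm{N}_{ij} := (\mathcal{P}_{\mathcal{V}}\bm{S}_i)^T(\mathcal{P}_{\mathcal{V}}\bm{S}_j)$, both in $\mathbb{R}^{R\times R}$, and expanding the inner product as a trace, the cyclic property of the trace gives
\begin{equation}
\langle \mathcal{P}_{\mathcal{U}}\bm{A}_i\mathcal{P}_{\mathcal{V}}, \mathcal{P}_{\mathcal{U}}\bm{A}_j\mathcal{P}_{\mathcal{V}}\rangle_F = R^{-1}\,\textup{tr}(\bm{M}_{ij}\bm{N}_{ij}^T) = R^{-1}\langle \bm{M}_{ij},\bm{N}_{ij}\rangle_F.
\end{equation}

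Next I would apply the Cauchy--Schwarz inequality for the matrix Frobenius inner product, $|\langle\bm{M}_{ij},\bm{N}_{ij}\rangle_F|\le \|\bm{M}_{ij}\|_F\,\|\bm{N}_{ij}\|_F$. The load-bearing step, where the prefactor $R^{-1}$ is absorbed, is the elementary bound $\|\bm{M}\|_F\le\sqrt{R}\,\|\bm{M}\|_2$ valid for any $R\times R$ matrix, since such a matrix has at most $R$ nonzero singular values, each bounded by its spectral norm. Applying this to both $\bm{M}_{ij}$ and $\bm{N}_{ij}$ contributes a factor $\sqrt{R}\cdot\sqrt{R}=R$, which exactly cancels $R^{-1}$ and leaves $\|\bm{M}_{ij}\|_2\,\|\bm{N}_{ij}\|_2$. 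Finally the AM--GM inequality $ab\le\tfrac12(a^2+b^2)$ turns this product into the stated symmetric sum $\tfrac12\{\|\bm{M}_{ij}\|_2^2+\|\bm{N}_{ij}\|_2^2\}$, and recognizing $\|\bm{M}_{ij}\|_2^2$ and $\|\bm{N}_{ij}\|_2^2$ as the quantities in \eqref{eq:xi} completes the bound. Note the argument never uses $i\neq j$, so it holds for all fixed $i,j$ as stated.

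I expect the main obstacle to be the bookkeeping in the first step: correctly exploiting symmetry of the projections and cyclicity of the trace to fold the four-factor product into $\langle\bm{M}_{ij},\bm{N}_{ij}\rangle_F$ while faithfully tracking the $R^{-1/2}$ scalars. No deep inequality is needed; the only genuinely substantive estimate is the rank-based comparison $\|\cdot\|_F\le\sqrt{R}\,\|\cdot\|_2$, whose entire purpose is to neutralize the $R^{-1}$ factor so the result can be phrased cleanly in terms of spectral norms, with everything else reducing to Cauchy--Schwarz and AM--GM.
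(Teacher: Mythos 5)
Your proposal is correct and follows essentially the same route as the paper's proof: the same factorization $\mathcal{P}_{\mathcal{U}}\bm{A}_i\mathcal{P}_{\mathcal{V}} = R^{-1/2}(\mathcal{P}_{\mathcal{U}}\bm{R}_i)(\mathcal{P}_{\mathcal{V}}\bm{S}_i)^T$ under \textbf{(A2)}, the trace/Cauchy--Schwarz/AM--GM step, and the bound $\|\cdot\|_F^2 \leq R\|\cdot\|_2^2$ for $R\times R$ matrices to absorb the $1/R$ prefactor. The only difference is cosmetic ordering (the paper applies AM--GM on Frobenius norms before converting to spectral norms), so nothing further is needed.
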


\begin{proof}[Proof of Lemma \ref{lem:inprod}]
Under \textbf{(A2)}, we have $\bm{A}_i = R^{-1/2} \bm{R}_i \bm{S}_i^T$ for $i = 1, \cdots, n$. The inner-product term then becomes:
\begin{align*}
& \Big| \Big\langle \mathcal{P}_{\mathcal{U}} \bm{A}_i \mathcal{P}_{\mathcal{V}}, \mathcal{P}_{\mathcal{U}} \bm{A}_j \mathcal{P}_{\mathcal{V}} \Big\rangle_F \Big| \\
%& \quad  \propto \Big\langle (\mathcal{P}_{\mathcal{U}} \bm{U}_i) (\mathcal{P}_{\mathcal{V}} \bm{V}_i)^T, \sum_{j \neq i} (\mathcal{P}_{\mathcal{U}} \bm{U}_j) (\mathcal{P}_{\mathcal{V}} \bm{V}_j)^T \Big\rangle_F\\
& \quad = \frac{1}{R} \Big| {\rm tr}\left\{ (\mathcal{P}_{\mathcal{U}} \bm{R}_i)^T (\mathcal{P}_{\mathcal{U}} \bm{R}_j) (\mathcal{P}_{\mathcal{V}} \bm{S}_j)^T (\mathcal{P}_{\mathcal{V}} \bm{S}_i) \right\} \Big| \\
& \quad \leq \frac{1}{2R} \left\{ \| (\mathcal{P}_{\mathcal{U}} \bm{R}_i)^T (\mathcal{P}_{\mathcal{U}} \bm{R}_j) \|_F^2 + \| (\mathcal{P}_{\mathcal{V}} \bm{S}_i)^T (\mathcal{P}_{\mathcal{V}} \bm{S}_j) \|_F^2 \right\}\\
%& \quad \propto \frac{1}{n-1} \left\{ \sum_{j \neq i} \| (\mathcal{P}_{\mathcal{U}} \bm{U}_i)^T (\mathcal{P}_{\mathcal{U}} \bm{U}_j) \|_F^2 +\| (\mathcal{P}_{\mathcal{V}} \bm{V}_j)^T (\mathcal{P}_{\mathcal{V}} \bm{V}_i) \|_F^2 \right\} \\
%& \quad \leq C \left\{ \max_{j: j \neq i} \| (\mathcal{P}_{\mathcal{U}} \bm{R}_i)^T (\mathcal{P}_{\mathcal{U}} \bm{R}_j) \|_F^2 + \max_{j: j \neq i} \| (\mathcal{P}_{\mathcal{V}} \bm{S}_j)^T (\mathcal{P}_{\mathcal{V}} \bm{S}_i) \|_F^2 \right\} \\
& \quad \leq \frac{1}{2} \left\{ \| (\mathcal{P}_{\mathcal{U}} \bm{R}_i)^T (\mathcal{P}_{\mathcal{U}} \bm{R}_j) \|_2^2 + \| (\mathcal{P}_{\mathcal{V}} \bm{S}_i)^T (\mathcal{P}_{\mathcal{V}} \bm{S}_j) \|_2^2 \right\}.
\end{align*}
%where $C = (n-1)/(2R)$. 
\end{proof}

Consider now the matrix $\sigma^2\bm{R}_n(\bm{A}_{1:n}) + \eta^2 \bm{I}$, which is symmetric and positive-definite. By Lemma \ref{lem:gersh}, the eigenvalues of this matrix (which are real-valued and positive) can be lower bounded by $\min_i \{ \text{Var}(y_i) - \sum_{j:j \neq i}|\text{Cov}(y_i,y_j)| \}$. Viewing the determinant as a product of eigenvalues, it follows that:
\begin{align*}
\small
&{\rm E}^{1/n}_{\mathcal{U},\mathcal{V}}(\bm{A}_{1:n})\\
&\quad = {\det}^{1/n}\{\sigma^2 \bm{R}_n(\bm{A}_{1:n}) + \eta^2 \bm{I} \} \\
&\quad \geq \min_i \left\{ {\text{Var}(y_i)} - \sum_{j:j \neq i} {|\text{Cov}(y_i,y_j)|} \right\} \tag{Lemma \ref{lem:gersh}} \\
&\quad = \min_i \left\{ {\text{Var}(y_i)} - \sigma^2 \sum_{j:j \neq i} \Big| {\Big\langle \mathcal{P}_{\mathcal{U}} \bm{A}_i \mathcal{P}_{\mathcal{V}}, \mathcal{P}_{\mathcal{U}} \bm{A}_j \mathcal{P}_{\mathcal{V}} \Big\rangle_F \Big| } \right\} \tag{Lemma \ref{lem:var}}\\
&\quad \geq \min_i \Big\{ {\text{Var}(y_i)} - \frac{\sigma^2}{2} \sum_{j:j \neq i} \left\{ \| (\mathcal{P}_{\mathcal{U}} \bm{R}_i)^T (\mathcal{P}_{\mathcal{U}} \bm{R}_j) \|_2^2 \right. \\
& \left. \quad \quad \quad \quad \quad \quad \quad \quad \quad \quad \quad \quad \quad \quad  + \| (\mathcal{P}_{\mathcal{V}} \bm{S}_i)^T (\mathcal{P}_{\mathcal{V}} \bm{S}_j) \|_2^2 \right\} \Big\} \tag{Lemma \ref{lem:inprod}}\\
&\quad \geq \min_i \left\{ {\textup{Var}(y_i)} - \frac{\sigma^2(n-1)}{2} \left\{ \xi_{i,\mathcal{U}}(\bm{R}_{1:n}) + \xi_{i,\mathcal{V}}(\bm{S}_{1:n}) \right\} \right\},
%&= \left[ \min_i \left\{ \frac{\text{Var}(y_i)}{\sigma^2} - \sum_{j \neq i} \frac{\langle \mathcal{P}_{\mathcal{U}} \bm{A}_i \mathcal{P}_{\mathcal{V}}, \mathcal{P}_{\mathcal{U}} \bm{A}_j \mathcal{P}_{\mathcal{V}} \rangle_F}{\sigma^2} \right\} \right]^n
\end{align*}
\normalsize
which completes the proof.
\end{proof}

%\noindent The theorem is then proved by combining Lemmas \ref{lem:gersh} and \ref{lem:inprod}.

\begin{proof}[Proof of Lemma \ref{lem:seqent}]
For notational brevity, let $(\mathcal{U},\mathcal{V}) = (\hat{\mathcal{U}}_n,\hat{\mathcal{V}}_n)$. First, write the matrix $\bm{R}_{n+1}([\bm{A}_{1:n} \; \; \bm{A}]) + \gamma^2\bm{I}$ as:
\small
\[
\bm{R}_{n+1}([\bm{A}_{1:n} \; \; \bm{A}]) + \gamma^2\bm{I} = \begin{pmatrix}
\bm{R}_{n}(\bm{A}_{1:n}) + \gamma^2\bm{I} & \bm{r}_n(\bm{A})\\
\bm{r}_n(\bm{A}) & \|\mathcal{P}_{\mathcal{U}} \bm{A} \mathcal{P}_{\mathcal{V}}\|_F^2 + \gamma^2
\end{pmatrix}.
\]
\normalsize
Using the determinant formula for the Schur complement (see, e.g., \cite{HK1971}), it follows that:
\begin{align*}
\begin{split}
&{\rm H}(\bm{A}_{1:n}, \bm{A})\\
& \quad = (\sigma^2)^{n+1} \det\{ \bm{R}_{n+1}([\bm{A}_{1:n} \; \; \bm{A}]) + \gamma^2\bm{I}\}\\
& \quad = (\sigma^2)^{n+1} \det\{\bm{R}_{n}(\bm{A}_{1:n}) + \gamma^2\bm{I}\} \cdot \\
& \quad \quad \left[ \|\mathcal{P}_{\mathcal{U}} \bm{A} \mathcal{P}_{\mathcal{V}}\|_F^2 + \gamma^2 - \bm{r}_n^T(\bm{A}) (\bm{R}_{n}(\bm{A}_{1:n}) + \gamma^2\bm{I})^{-1} \bm{r}_n(\bm{A}) \right]\\
& \quad = \sigma^2 \textup{H}(\bm{A}_{1:n}) \cdot \\
& \quad \quad \left[ \|\mathcal{P}_{\mathcal{U}} \bm{A} \mathcal{P}_{\mathcal{V}}\|_F^2 + \gamma^2 - \bm{r}_n^T(\bm{A}) (\bm{R}_{n}(\bm{A}_{1:n}) + \gamma^2\bm{I})^{-1} \bm{r}_n(\bm{A}) \right],
\end{split}
\end{align*}
which completes the proof.
\end{proof}

\begin{proof}[Proof of Theorem \ref{thm:seqentopt}]
For notational brevity, let $(\mathcal{U},\mathcal{V}) = (\hat{\mathcal{U}}_n,\hat{\mathcal{V}}_n)$. The proof of this theorem requires the following lemmas:
\begin{lemma}
For fixed projection matrices $\mathcal{P}_{\mathcal{U}}$ and $\mathcal{P}_{\mathcal{V}}$, define the operator $\mathcal{P}_{\mathcal{U},\mathcal{V}}: \mathbb{R}^{m_1 \times m_2} \rightarrow \mathbb{R}^{m_1 \times m_2}$ as $\mathcal{P}_{\mathcal{U},\mathcal{V}}(\bm{Z})  = \mathcal{P}_{\mathcal{U}}\bm{Z} \mathcal{P}_{\mathcal{V}}$, and consider the linear space of matrices:
\begin{equation}
\mathcal{T}_{\mathcal{U},\mathcal{V}} = \underset{u_k \in \mathcal{U}, v_k \in \mathcal{V} }{\bigcup} \textup{span}(\{\bm{u}_k\bm{v}_k^T\}_{k=1}^R). \label{Tau_def}
\end{equation}
It follows that $\mathcal{P}_{\mathcal{U},\mathcal{V}}$ is an orthogonal projection operator onto $\mathcal{T}_{\mathcal{U},\mathcal{V}}$ under the Frobenius inner-product $\langle \cdot, \cdot \rangle_F$.
\label{lem:oproj}
\end{lemma}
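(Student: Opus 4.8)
The plan is to verify the three defining properties of an orthogonal projection onto a subspace: that $\mathcal{P}_{\mathcal{U},\mathcal{V}}$ is linear and idempotent, that it is self-adjoint with respect to $\langle \cdot, \cdot\rangle_F$, and that its range is precisely $\mathcal{T}_{\mathcal{U},\mathcal{V}}$. I would invoke the standard fact that a linear operator which is both idempotent and self-adjoint is the orthogonal projection onto its own range; so once these three items are in hand the claim follows immediately. Linearity itself is automatic, since $\bm{Z} \mapsto \mathcal{P}_{\mathcal{U}}\bm{Z}\mathcal{P}_{\mathcal{V}}$ is built purely from matrix multiplication.

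First I would establish idempotency. Since $\mathcal{P}_{\mathcal{U}} = \bm{U}\bm{U}^T$ and $\mathcal{P}_{\mathcal{V}} = \bm{V}\bm{V}^T$ are genuine projection matrices (using $\bm{U}^T\bm{U} = \bm{I}$ and $\bm{V}^T\bm{V} = \bm{I}$), they satisfy $\mathcal{P}_{\mathcal{U}}^2 = \mathcal{P}_{\mathcal{U}}$ and $\mathcal{P}_{\mathcal{V}}^2 = \mathcal{P}_{\mathcal{V}}$, whence
\[
\mathcal{P}_{\mathcal{U},\mathcal{V}}\big(\mathcal{P}_{\mathcal{U},\mathcal{V}}(\bm{Z})\big) = \mathcal{P}_{\mathcal{U}}^2 \bm{Z} \mathcal{P}_{\mathcal{V}}^2 = \mathcal{P}_{\mathcal{U}}\bm{Z}\mathcal{P}_{\mathcal{V}} = \mathcal{P}_{\mathcal{U},\mathcal{V}}(\bm{Z}).
\]
Next, for self-adjointness I would use the symmetry $\mathcal{P}_{\mathcal{U}}^T = \mathcal{P}_{\mathcal{U}}$, $\mathcal{P}_{\mathcal{V}}^T = \mathcal{P}_{\mathcal{V}}$ together with the cyclic invariance of the trace: for any $\bm{A}, \bm{B} \in \mathbb{R}^{m_1 \times m_2}$,
\[
\langle \mathcal{P}_{\mathcal{U}}\bm{A}\mathcal{P}_{\mathcal{V}}, \bm{B}\rangle_F = \text{tr}(\mathcal{P}_{\mathcal{V}}\bm{A}^T \mathcal{P}_{\mathcal{U}} \bm{B}) = \text{tr}(\bm{A}^T \mathcal{P}_{\mathcal{U}}\bm{B}\mathcal{P}_{\mathcal{V}}) = \langle \bm{A}, \mathcal{P}_{\mathcal{U}}\bm{B}\mathcal{P}_{\mathcal{V}}\rangle_F,
\]
so $\mathcal{P}_{\mathcal{U},\mathcal{V}}$ equals its own adjoint.

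The main work, and the part I expect to be the most delicate, is identifying the range of $\mathcal{P}_{\mathcal{U},\mathcal{V}}$ with $\mathcal{T}_{\mathcal{U},\mathcal{V}}$ as defined in \eqref{Tau_def}. The key observation is that $\mathcal{T}_{\mathcal{U},\mathcal{V}}$ is exactly the set of matrices expressible as $\sum_k \bm{u}_k \bm{v}_k^T$ with $\bm{u}_k \in \mathcal{U}$, $\bm{v}_k \in \mathcal{V}$, i.e. the matrices whose column space lies in $\mathcal{U}$ and whose row space lies in $\mathcal{V}$. I would then prove the equality by double inclusion. For $\text{range} \subseteq \mathcal{T}_{\mathcal{U},\mathcal{V}}$, writing $\mathcal{P}_{\mathcal{U}}\bm{Z}\mathcal{P}_{\mathcal{V}} = \bm{U}(\bm{U}^T\bm{Z}\bm{V})\bm{V}^T = \sum_{k,\ell}(\bm{U}^T\bm{Z}\bm{V})_{k\ell}\,\bm{u}_k\bm{v}_\ell^T$ exhibits every image as a linear combination of the generators $\bm{u}_k\bm{v}_\ell^T$, hence as an element of $\mathcal{T}_{\mathcal{U},\mathcal{V}}$. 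For $\mathcal{T}_{\mathcal{U},\mathcal{V}} \subseteq \text{range}$, any $\bm{M} = \sum_k \bm{u}_k\bm{v}_k^T$ with $\bm{u}_k \in \mathcal{U}$, $\bm{v}_k \in \mathcal{V}$ is fixed by the operator, since $\mathcal{P}_{\mathcal{U}}\bm{u}_k = \bm{u}_k$ and $\mathcal{P}_{\mathcal{V}}\bm{v}_k = \bm{v}_k$ give $\mathcal{P}_{\mathcal{U},\mathcal{V}}(\bm{M}) = \bm{M}$, so $\bm{M}$ lies in the range. The only real subtlety is reading the union-of-spans notation in \eqref{Tau_def} correctly as this single linear subspace of matrices; once that is pinned down, combining the two inclusions with the idempotency and self-adjointness above completes the proof.
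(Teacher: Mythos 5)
Your proof is correct and follows essentially the same route as the paper: both verify idempotency, orthogonality (your self-adjointness computation is equivalent to the paper's direct check that $\langle \mathcal{P}_{\mathcal{U},\mathcal{V}}(\bm{Z}_1), \bm{Z}_2 - \mathcal{P}_{\mathcal{U},\mathcal{V}}(\bm{Z}_2)\rangle_F = 0$), and identification of the range with $\mathcal{T}_{\mathcal{U},\mathcal{V}}$. The only difference is that you establish the range claim by an explicit double inclusion, whereas the paper delegates that step to Lemma 1(a) of \cite{MX2017}.
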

\begin{proof}
This can be shown from first principles. Note that:
\ben
\item $\mathcal{P}_{\mathcal{U},\mathcal{V}}$ is idempotent, i.e., $\mathcal{P}_{\mathcal{U},\mathcal{V}} \{\mathcal{P}_{\mathcal{U},\mathcal{V}} (\bm{Z})\} = \mathcal{P}_{\mathcal{U},\mathcal{V}} (\bm{Z})$ for all $\bm{Z} \in \mathbb{R}^{m_1 \times m_2}$,
\item $\mathcal{P}_{\mathcal{U},\mathcal{V}}$ is the identity operator on $\mathcal{T}_{\mathcal{U},\mathcal{V}}$,
\item $\bm{Z}$ can be uniquely decomposed as $\bm{Z} = \mathcal{P}_{\mathcal{U},\mathcal{V}}(\bm{Z}) + [\mathcal{I} - \mathcal{P}_{\mathcal{U},\mathcal{V}}](\bm{Z})$, where $\mathcal{P}_{\mathcal{U},\mathcal{V}}(\bm{Z}) \in \mathcal{T}_{\mathcal{U},\mathcal{V}}$, $[\mathcal{I} -\mathcal{P}_{\mathcal{U},\mathcal{V}}](\bm{Z}) \in \mathcal{T}^{\bot}_{\mathcal{U},\mathcal{V}}$, and $\bot$ is the orthogonal complement.
\een
By definition, $\mathcal{P}_{\mathcal{U},\mathcal{V}}$ must be a projection operator. Moreover, for any $\bm{Z}_1, \bm{Z}_2 \in \mathbb{R}^{m_1 \times m_2}$, $\langle \mathcal{P}_{\mathcal{U},\mathcal{V}} (\bm{Z}_1), \bm{Z}_2 - \mathcal{P}_{\mathcal{U},\mathcal{V}}(\bm{Z}_2)  \rangle_F = \langle \mathcal{P}_{\mathcal{U},\mathcal{V}} (\bm{Z}_2), \bm{Z}_1 - \mathcal{P}_{\mathcal{U},\mathcal{V}}(\bm{Z}_1)  \rangle_F = 0$,
%\small
%\begin{align*}
%& \langle \mathcal{P}_{\mathcal{U},\mathcal{V}} (\bm{Z}_1), \bm{Z}_2 - \mathcal{P}_{\mathcal{U},\mathcal{V}}(\bm{Z}_2)  \rangle_F\\
%& \quad = \text{tr}\{ \bm{Z}_2^T \mathcal{P}_{\mathcal{U}} \bm{Z}_1 \mathcal{P}_{\mathcal{V}} \} - \text{tr}\{ (\mathcal{P}_{\mathcal{U}} \bm{Z}_1 \mathcal{P}_{\mathcal{V}})^T (\mathcal{P}_{\mathcal{U}} \bm{Z}_2 \mathcal{P}_{\mathcal{V}}) \}\\
%& \quad  =\text{tr}\{ (\mathcal{P}_{\mathcal{U}} \bm{Z}_2 \mathcal{P}_{\mathcal{V}})^T (\mathcal{P}_{\mathcal{U}} \bm{Z}_1 \mathcal{P}_{\mathcal{V}}) \} - \text{tr}\{ (\mathcal{P}_{\mathcal{U}} \bm{Z}_1 \mathcal{P}_{\mathcal{V}})^T (\mathcal{P}_{\mathcal{U}} \bm{Z}_2 \mathcal{P}_{\mathcal{V}}) \}\\
%& \quad = 0,
%\end{align*}
%\normalsize
so $\mathcal{P}_{\mathcal{U},\mathcal{V}}$ must also be an orthogonal projection operator under $\langle \cdot, \cdot \rangle_F$. By Lemma 1(a) of \cite{MX2017}, the range of $\mathcal{P}_{\mathcal{U},\mathcal{V}}$ is $\mathcal{T}_{\mathcal{U},\mathcal{V}}$, which completes the proof.
\end{proof}

\begin{lemma}
Let $f(\bm{A}) := \| \mathcal{P}_{\mathcal{U},\mathcal{V}}(\bm{A}) \|_F^2$, where $\|\bm{A}\|_F^2 = 1$. If $\bm{A} \in \mathcal{T}_{\mathcal{U},\mathcal{V}}$, then $f(\bm{A}) = 1$; otherwise, $f(\bm{A}) < 1$.
\label{lem:1}
\end{lemma}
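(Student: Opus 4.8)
The plan is to reduce everything to the single structural fact, established in Lemma \ref{lem:oproj}, that $\mathcal{P}_{\mathcal{U},\mathcal{V}}$ is an \emph{orthogonal} projection onto $\mathcal{T}_{\mathcal{U},\mathcal{V}}$ under the Frobenius inner-product, so that the Pythagorean identity applies. Concretely, I would first invoke the orthogonal decomposition from part (3) of Lemma \ref{lem:oproj}, writing $\bm{A} = \mathcal{P}_{\mathcal{U},\mathcal{V}}(\bm{A}) + [\mathcal{I} - \mathcal{P}_{\mathcal{U},\mathcal{V}}](\bm{A})$, where the two summands are orthogonal (the first lying in $\mathcal{T}_{\mathcal{U},\mathcal{V}}$ and the second in $\mathcal{T}_{\mathcal{U},\mathcal{V}}^{\perp}$). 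Because the cross-term vanishes under $\langle \cdot, \cdot \rangle_F$, taking squared Frobenius norms yields
\[
1 = \|\bm{A}\|_F^2 = \|\mathcal{P}_{\mathcal{U},\mathcal{V}}(\bm{A})\|_F^2 + \|[\mathcal{I} - \mathcal{P}_{\mathcal{U},\mathcal{V}}](\bm{A})\|_F^2,
\]
and hence $f(\bm{A}) = 1 - \|[\mathcal{I} - \mathcal{P}_{\mathcal{U},\mathcal{V}}](\bm{A})\|_F^2$.

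From this identity both claims follow immediately. For the first, if $\bm{A} \in \mathcal{T}_{\mathcal{U},\mathcal{V}}$ then $\mathcal{P}_{\mathcal{U},\mathcal{V}}$ acts as the identity on $\bm{A}$ by part (2) of Lemma \ref{lem:oproj}, so the residual $[\mathcal{I} - \mathcal{P}_{\mathcal{U},\mathcal{V}}](\bm{A})$ is zero and $f(\bm{A}) = 1$. For the second, I would argue the contrapositive: if $f(\bm{A}) = 1$ then the residual term has zero norm, so $[\mathcal{I} - \mathcal{P}_{\mathcal{U},\mathcal{V}}](\bm{A}) = \bm{0}$, i.e.\ $\bm{A} = \mathcal{P}_{\mathcal{U},\mathcal{V}}(\bm{A}) \in \mathcal{T}_{\mathcal{U},\mathcal{V}}$. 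Equivalently, whenever $\bm{A} \notin \mathcal{T}_{\mathcal{U},\mathcal{V}}$ the residual is nonzero, its squared norm is strictly positive, and the displayed identity forces $f(\bm{A}) < 1$.

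Honestly, I expect no real obstacle here: the lemma is a textbook consequence of the orthogonality of $\mathcal{P}_{\mathcal{U},\mathcal{V}}$ combined with the norm-one normalization. The only point deserving a moment of care is the strictness in the second case, which hinges on the fact that the decomposition in Lemma \ref{lem:oproj}(3) is \emph{direct and orthogonal}, so that a vanishing residual is equivalent to membership in $\mathcal{T}_{\mathcal{U},\mathcal{V}}$; a projection that were merely idempotent (oblique) would not license this equivalence. I would therefore make explicit the appeal to orthogonality rather than to idempotency alone.
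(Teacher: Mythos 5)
Your proof is correct and follows essentially the same route as the paper's: both invoke the orthogonal decomposition $\bm{A} = \mathcal{P}_{\mathcal{U},\mathcal{V}}(\bm{A}) + [\mathcal{I} - \mathcal{P}_{\mathcal{U},\mathcal{V}}](\bm{A})$ from Lemma \ref{lem:oproj}, apply the Pythagorean identity under $\langle \cdot, \cdot \rangle_F$, and characterize equality via the vanishing of the residual. Your added emphasis that orthogonality (not mere idempotency) is what licenses the strictness is a fair observation but does not change the argument.
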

\begin{proof}
We know from Lemma \ref{lem:oproj} that $\mathcal{P}_{\mathcal{U},\mathcal{V}}$ is an orthogonal projection operator onto $\mathcal{T}_{\mathcal{U},\mathcal{V}}$ under $\langle \cdot, \cdot \rangle_F$. It follows that:
\small
\begin{align*}
1 = \|\bm{A}\|_F^2 &= \| \mathcal{P}_{\mathcal{U},\mathcal{V}}(\bm{A}) + [\mathcal{I} - \mathcal{P}_{\mathcal{U},\mathcal{V}}](\bm{A}) \|_F^2 \\
& = \| \mathcal{P}_{\mathcal{U},\mathcal{V}}(\bm{A})\|_F^2 + \|[\mathcal{I} - \mathcal{P}_{\mathcal{U},\mathcal{V}}](\bm{A}) \|_F^2 \geq \| \mathcal{P}_{\mathcal{U},\mathcal{V}}(\bm{A})\|_F^2,
\end{align*}
\normalsize
with equality holding iff $\|[\mathcal{I} - \mathcal{P}_{\mathcal{U},\mathcal{V}}](\bm{A}) \|_F^2 = 0$, or equivalently, $\bm{A} \in \mathcal{T}_{\mathcal{U},\mathcal{V}}$. This completes the proof.
\end{proof}

Using these two lemmas, the proof for Theorem \ref{thm:seqentopt} is straight-forward. Consider first the maximization of the first term in \eqref{eq:seqentform}, $f(\bm{A}) = \|\mathcal{P}_{\mathcal{U}}\bm{A} \mathcal{P}_{\mathcal{V}}\|_F^2$. By Lemma \ref{lem:1}, $f(\bm{A})$ is maximized at 1 for any choice of $\bm{A} \in \mathcal{T}_{\mathcal{U},\mathcal{V}}$, or equivalently, any $\bm{A}$ of the form $\bm{U} \boldsymbol{\Sigma} \bm{V}^T$, where $\bm{U} \in \mathcal{U}$ and $\bm{V} \in \mathcal{V}$. Consider next the minimization of the second term $g(\bm{A}) := \bm{r}_n^T(\bm{A}) [\bm{R}_n(\bm{A}_{1:n}) + \gamma^2 \bm{I}]^{-1} \bm{r}_n(\bm{A})$. Note that, for any feasible solution $\bm{A}$ satisfying $\|\bm{A}\|_F^2 \leq 1$, the matrix $\widetilde{\bm{A}} = \mathcal{P}_{\mathcal{U}} \bm{A} \mathcal{P}_{\mathcal{V}}$ must also be a feasible solution with the same objective $g(\widetilde{\bm{A}}) = g(\bm{A})$, since $\|\widetilde{\bm{A}}\|_F^2 \leq 1$ by Lemma \ref{lem:1}. Hence, the optimal solution for \eqref{eq:seqentform} must take the form $\bm{A} = \bm{U} \boldsymbol{\Sigma}\bm{V}^T$ for some $\boldsymbol{\Sigma} \in \mathbb{R}^{R \times R}$. Moreover, because $f(\bm{A}) = 1$ for all $\bm{A} = \bm{U} \boldsymbol{\Sigma} \bm{V}^T$, the problem reduces to finding an optimal choice of $\boldsymbol{\Sigma}$ which minimizes the second term $g(\bm{A})$.

With this in mind, $\bm{r}_n(\bm{A})$ can be rewritten as:
\begin{align*}
\bm{r}_n(\bm{A}) &= [\langle \mathcal{P}_{\mathcal{U}} \bm{A}_i \mathcal{P}_{\mathcal{V}}, \mathcal{P}_{\mathcal{U}} \bm{A} \mathcal{P}_{\mathcal{V}} \rangle_F]_{i=1}^n\\
%&= [\textup{tr}(\mathcal{P}_{\mathcal{U}} \bm{A} \mathcal{P}_{\mathcal{V}} \bm{A}_i^T)]_{i=1}^n\\
%&= [\textup{tr}(\mathcal{P}_{\mathcal{U}} \bm{U} \boldsymbol{\Sigma} \bm{V}^T \mathcal{P}_{\mathcal{V}} \bm{A}_i^T)]_{i=1}^n\\
%&= [\textup{tr}(\bm{U} \boldsymbol{\Sigma} \bm{V}^T \bm{A}_i^T)]_{i=1}^n\\
&= [\textup{tr}( \boldsymbol{\Sigma}_i^T \boldsymbol{\Sigma} ) ]_{i=1}^n \tag{$\boldsymbol{\Sigma}_i := \bm{U}^T \bm{A}_i\bm{V}$}\\
&= \left[ \text{vec}(\boldsymbol{\Sigma}_i)^T \text{vec}(\boldsymbol{\Sigma}) \right]_{i=1}^n\\
&= \bm{D}\boldsymbol{\nu},
\end{align*}
where $\bm{D} = [\text{vec}(\boldsymbol{\Sigma}_1)^T; \cdots ; \text{vec}(\boldsymbol{\Sigma}_n)^T] \in \mathbb{R}^{n \times R^2}$ and $\boldsymbol{\nu} = \text{vec}(\boldsymbol{\Sigma}) \in \mathbb{R}^{R^2}$. The desired objective $g(\bm{A})$ can then be rearranged as:
\begin{align*}
g(\bm{A}) &= \bm{r}_n^T(\bm{A}) [\bm{R}_n(\bm{A}_{1:n}) + \gamma^2 \bm{I}]^{-1} \bm{r}_n(\bm{A})\\
&= \boldsymbol{\nu}^T \bm{D}^T [\bm{R}_n(\bm{A}_{1:n}) + \gamma^2 \bm{I}]^{-1} \bm{D} \boldsymbol{\nu},
\end{align*}
so the optimal $\boldsymbol{\Sigma}$ which minimizes $g(\bm{A})$ corresponds to the unit eigenvector for the smallest eigenvalue of $\bm{D}^T [\bm{R}_n(\bm{A}_{1:n}) + \gamma^2 \bm{I}]^{-1} \bm{D}$. Letting $\boldsymbol{\Sigma}^*$ denote this optimal matrix, it follows that $\bm{A}_{n+1}^* = \bm{U} \boldsymbol{\Sigma}^* \bm{V}^T$, which completes the proof.
\end{proof}

%
%\ifCLASSOPTIONcaptionsoff
%  \newpage
%\fi

%\begin{thebibliography}{1}
%
%\end{thebibliography}

%\begin{IEEEbiography}{Michael Shell}
%Biography text here.
%\end{IEEEbiography}
%
%% if you will not have a photo at all:
%\begin{IEEEbiographynophoto}{John Doe}
%Biography text here.
%\end{IEEEbiographynophoto}
%
%% insert where needed to balance the two columns on the last page with
%% biographies
%%\newpage
%
%\begin{IEEEbiographynophoto}{Jane Doe}
%Biography text here.
%\end{IEEEbiographynophoto}

% You can push biographies down or up by placing
% a \vfill before or after them. The appropriate
% use of \vfill depends on what kind of text is
% on the last page and whether or not the columns
% are being equalized.

%\vfill

% Can be used to pull up biographies so that the bottom of the last one
% is flush with the other column.
%\enlargethispage{-5in}

\end{document}